\def\denseformat{
\setlength{\textheight}{9.65in} \setlength{\textwidth}{6.9in}
\setlength{\oddsidemargin}{-0.3in} \setlength{\headsep}{10pt}
\setlength{\topmargin}{-0.33in} \setlength{\columnsep}{0.375in}
}
\newenvironment{thmclone}[1]{\noindent
\textbf{Theorem~\ref{#1}.}\em}{\par}
\newmdenv[linewidth=0pt]{myframe}
\theoremstyle{definition}
\newtheorem{definition}{Definition}
\theoremstyle{plain}
\newtheorem{observation}{Observation}
\newtheorem{corollary}{Corollary}
\newtheorem{theorem}{Theorem}  
\newtheorem{lemma}{Lemma}
\newtheorem{invariant}{Invariant}
\theoremstyle{remark}
\begin{document}

\title{Space Bounds for Reliable Storage:\\ Fundamental Limits
of Coding}
\author{%
  {Alexander Spiegelman}\\EE Department\\Technion, Haifa,
  Israel\\sashas@tx.technion.ac.il\\+972547553558
  \and
  {Yuval Cassuto}\\EE Department\\Technion, Haifa,
  Israel\\ycassuto@ee.technion.ac.il\\
  \and
  {Gregory Chockler}\\CS Department\\Royal Holloway, London,
  UK\\gregory.chockler@rhul.ac.uk\\
  \and
  {Idit Keidar}\\EE Department\\Technion, Haifa,
  Israel\\idish@ee.technion.ac.il\\
}

\date{}
\maketitle

\begin{abstract}

We study the inherent space requirements of shared storage
algorithms in asynchronous fault-prone systems. 
Previous works use codes to achieve a better storage cost than
the well-known replication approach. 
However, a closer look reveals that they
incur extra costs somewhere else: Some use unbounded storage in
communication links, while others assume bounded concurrency or
synchronous periods. We prove here that this is inherent, and
indeed, if there is no bound on the concurrency level, then the
storage cost of any reliable storage algorithm is at least $f+1$
times the data size, where $f$ is the number of tolerated
failures. 
We further present a technique for combining
erasure-codes with full replication so as to obtain the best of
both. 
We present a storage algorithm whose storage cost is close
to the lower bound in the worst case, and adapts to the
concurrency level.

\end{abstract}

\setcounter{page}{0}
\thispagestyle{empty}
 
\newpage

\section{Introduction}
\label{introduction}

We reason about the storage space required for
emulating reliable shared storage over fault-prone nodes.
The traditional approach to building 
such storage stores full
replicas of the data in each node~\cite{ABD}. 
This approach entails a fixed storage cost
equal to the size of the data times the
number of nodes, regardless of the level of
concurrency.

Recently, there is an active area of
research
of employing codes, and in particular erasure codes, in
distributed algorithms with the goal of reducing the storage
cost~\cite{codingAguilera2005using,codingCachin2006optimal,codingGoodson2004efficient,codingLynchCadambe2014coded,codingMultiVersion,codingRashid}.
But when we look at these works closely, we find that in all
asynchronous solutions, extra costs are hidden somewhere.
Some keep an unbounded number of
versions~\cite{codingGoodson2004efficient},
or as many as the allowed level of
concurrency~\cite{codingLynchCadambe2014coded}. Others keep
unbounded information in
channels~\cite{codingRashid,codingCachin2006optimal}.
While others
assume periods of synchrony~\cite{codingAguilera2005using} or
allow returning obsolete values \cite{codingMultiVersion}.

To provide intuition about erasure-coded reliable storage
algorithms, we give in Section \ref{sec:WFsafe} a simple
space-efficient solution that only guarantees safe semantics
\cite{lamportRegular}, which are too weak to be of practical
use. We use this example to illustrate the challenges that have
led algorithms that provide stronger semantics to store many
versions of the coded data.

Then, in Section \ref{sec:impossibility}, we prove  that this
is inherent: any lock-free algorithm that
simulates reliable storage in an asynchronous system where $f$
storage nodes can fail must sometimes store $f+1$ full replicas
of written data, or its storage cost can grow without bound.
Specifically, our bound applies to any fault-tolerant
implementation of a \emph{multi-writer multi-reader (MWMR)
register} that satisfies at least \emph{weak regularity}, a
safety notion weaker than linearizability.

We prove our result for the fault-prone shared memory 
model~\cite{faultyMemoryAfek1993benign,faultyMemorybyzantineDiscPaxos,faultyMemoryJayanti1998}
in order to avoid reasoning explicitly about channels. The same
bound applies to message passing systems if we limit the
capacity of communication channels. 
For the sake of our proof, we define a
specific adversary behavior, which makes the proof fairly
compact.

Understanding the inherent
storage cost limitation that stems from our lower bound, and
in particular, the fact that, under high concurrency, nodes have
to keep full replicas, leads us to develop an adaptive approach
that combines the advantages of full replication and coding.  
We present in Section \ref{sec:WRAlgorithms}
an algorithm that simulates
an FW-Terminating~\cite{faultyMemorybyzantineDiscPaxos} strongly
regular~\cite{regularMWMR} MWMR register, whose storage
requirement is close to the storage limitation in the worst
case, and uses less storage in runs with low concurrency. The
algorithm does not assume any a priori bound on concurrency;
rather, it uses erasure codes when concurrency is low and
switches to replication when it is high.

Finally, we believe that our work is only a first effort to
combine erasure coding with replication in order to achieve
adaptive storage costs. We conclude in Section
\ref{sec:discussion} with some thoughts about directions for
future work.

\vspace*{-.3cm}
\section{Preliminaries}
\label{sec:model}

\vspace*{-.2cm}
\subsection{Model}
\label{sub:model}

We consider an asynchronous fault-prone shared memory
system~\cite{faultyMemoryAfek1993benign,faultyMemorybyzantineDiscPaxos,faultyMemoryJayanti1998}
consisting of set $N=\{bo_i,\ldots,bo_n\}$ of base objects
supporting arbitrary atomic \emph{read-modify-write} (RMW)
access by clients from some finite set $\Pi$. Any $f$ base
objects and any number of clients may fail by crashing, for some
predefined $f < n/2$. We study algorithms that emulate a shared
object to a set of clients. 

Clients interact with the emulated object
via high-level \emph{operations}. To distinguish the high-level
emulated operations from low-level base object access, we refer
to the latter as \emph{RMWs}. We say that RMWs are
\emph{triggered} and \emph{respond}, whereas operations are
\emph{invoked} and \emph{return}. A (high-level) operation
consists of a series of trigger and respond \emph{actions} on
base objects, starting with the operation's invocation and
ending with its return. 
In the course of an operation, a client \emph{triggers}
RMWs separately on each $bo_i \in N$ and receives
\emph{responses} in return.
We model the state of each $bo_i \in N$ as
changing, according to the RMW triggered on
it, at some point after the time when the RMW is triggered
but no later than the time when the matching response occurs.

An \emph{algorithm}  defines the behavior of clients as
deterministic state machines, where state transitions are
associated with actions such as RMW trigger/response.
A \emph{configuration} is a mapping to states from system
components, i.e., clients and base objects. An
\emph{initial configuration} is one where all components are
in their initial states.

A \emph{run} of algorithm $A$ is a (finite or infinite) 
alternating sequence of configurations and actions, beginning
with some initial configuration, such that configuration
transitions occur according to $A$.
We use the notion of
time $t$ during a run $r$ to refer to the configuration
incurred after the $t$\textsuperscript{th} action in $r$.
A \emph{run fragment} is a
contiguous subsequence of a run.

We say that a base object or client is \emph{faulty} in a run
$r$ if it fails any time in $r$, and otherwise, it is
\emph{correct}.
A run is \emph{fair} if (1) for every RMW  triggered by
a correct client on a correct base object, there is eventually a
matching response, (2) every correct client gets infinitely many
opportunities to trigger RMWs.
We again use different terminology to distinguish
incomplete invocations to the high-level service from incomplete
RMWs triggered on base objects and refer to the former as
\emph{outstanding} operations and to the latter as
\emph{pending} RMWs.

Operation $op_i$ \emph{precedes}
operation $op_j$ in a run $r$, denoted $op_i  \prec_r 
op_j$, if $op_i$'s response occurs before $op_j$'s invoke in
$r$. Operations $op_i$ and $op_j$ are
\emph{concurrent} in a run $r$, if neither precedes the other. A
run with no concurrent operations is \emph{sequential}.

\vspace*{-.3cm}
\subsection{Storage service definitions}
\label{sub:problen}

We study emulations of an \emph{MWMR register}, which
stores a value $v$ from a domain $\mathbb{V}$, and offers an
interface for invoking \textit{read} and \textit{write}
operations.
Initially, the register holds some distinguished initial value
$v_0 \in \mathbb{V}$.
The sequential specification for this service is as
follows: A read returns the latest written value, or $v_0$ if
none was written.

The storage resources consumed by the MWMR
register emulations discussed herein are measured in units of
{\em bits}. For constructive algorithmic results, bits are
stored in base objects following writes triggered by clients,
and correctness lies upon the existence of a decoding algorithm
that can recover $v \in \mathbb{V}$ from the bits available to
the reader. The common examples for such decoding algorithms are
1) the trivial decoder mapping $D=\log_2|\mathbb{V}|$ bits to
the value $v$ using the standard binary representation, as in
the case of replication; and 2) an erasure-code decoder mapping
a set of $D$ or more code bits to $v$. For the impossibility
proof we use a fundamental information theoretic argument that
any representation, either coded or unncoded, cannot guarantee
to recover $v$ precisely from fewer than $D=\log_2|\mathbb{V}|$
bits. This argument excludes common storage-reduction techniques
like compression and de-duplication, which only work in
probabilistic setups and with assumptions on the written data.

We now proceed to detail the properties
describing the MWMR register.\newline

 \textbf{Liveness}~~~
There is a range of possible liveness conditions, which need to
be satisfied in fair runs of a storage algorithm.
A \emph{wait-free} object is one
that guarantees that every correct client's operation
completes, regardless of the actions of other clients.
A \emph{lock-free} object 
guarantees progress: if at some point in a run there is an
outstanding operation of a correct client, then
\emph{some} operation eventually completes.
An \emph{FW-terminating}~\cite{faultyMemorybyzantineDiscPaxos}
register is one that has wait-free \emph{write} operations, and
in addition, if there are finitely
many \emph{write} invocations in a run, then every
\emph{read} operation completes.

\textbf{Safety}~~~
Two runs are \emph{equivalent} if every client performs the same
sequence of operations in both, where operations that are
outstanding in one can either be included in or excluded from
the other.
A linearization of a run $r$ is an equivalent sequential
execution that satisfies $r$'s operation precedence relation and
the object's sequential specification.
A \emph{write} $w$ in a run $r$ is \emph{relevant} to a
\emph{read} $rd$ in $r$~\cite{regularMWMR} if $rd \not\prec_r
w$; \emph{rel-writes}$(r,rd)$ is the set of all \emph{writes} in
$r$ that are relevant to $rd$.

Following Lamport~\cite{lamportRegular}, we
consider a hierarchy of safety notions.
Lamport~\cite{lamportRegular} defines \emph{regular} and 
\emph{safe} single-writer registers. Shao et
al.~\cite{regularMWMR} extend Lamport's notion of regularity to
MWMR registers, and give four possible definitions.
Here we use two of them. The first is the weakest definition,
and we use it in our lower bound proof. The second, which we use
for our algorithm, is the strongest definition that is satisfied
by ABD~\cite{ABD} in case readers do not change the storage (no
\emph{write-back}):
A MWMR register is \emph{weakly regular}, (called
\emph{MWRegWeak} in~\cite{regularMWMR}), if for every run $r$
and \emph{read} $rd$ that returns in $r$, there exists a
linearization $L_{rd}$ of the subsequence of $r$ consisting of the write
operations in $r$ and $rd$.
A MWMR register is \emph{strongly regular}, (called
\emph{MWRegWO} in~\cite{regularMWMR}), if it satisfies weak
regularity and the following condition:
For all
\emph{reads} $rd_1$ and $rd_2$ that return in $r$, for all
writes $w_1$ and $w_2$ in $\emph{rel-writes}(r, rd_1) \cap
\emph{rel-writes}(r, rd_2)$, it holds that $w_1 \prec_{L_{rd_1}}
w_2$ if and only if $w_1 \prec_{L_{rd_2}} w_2$.

We extend the safe register definition and
say that a MWMR register is \emph{strongly safe} if there exists
a linearization $\sigma_w$ of the subsequence of $r$ consisting
of the \emph{write} operations in $r$, and for every
\emph{read} operation $rd$ that has no concurrent \emph{writes} 
in $r$, it is possible to add $rd$ at some point in $\sigma_w$
so as to obtain a linearization of the subsequence of $r$
consisting of the write operations in $r$ and $rd$.

\vspace*{-.3cm}
\subsection{Erasure codes}
\label{sub:erasureCode}

A $k$-of-$n$ erasure code takes a value
from  domain $\mathbb{V}$ and produces a set $S$ of $n$
\emph{pieces} from some domain $\mathbb{E}$ s.t.\ the value can
be restored from any subset of $S$ that contains no less than $k$
different pieces. We assume that
the size of each piece is $D/k$, and two
functions \emph{encode} and \emph{decode} are given:
\emph{encode} gets a value $v \in
\mathbb{V}$ and returns a set of $n$ ordered elements
$W = \{\langle v_1,1 \rangle,\ldots,\langle v_n,n \rangle\}$,
where $v_1,\ldots,v_n \in \mathbb{E}$, and \emph{decode} gets a
set $W' \subset \mathbb{E} \times \mathbb{N}$ and returns $v'
\in \mathbb{V}$ s.t.\ if $|W'| \geq k$ and $ W' \subseteq W$,
then $v=v'$.
In this paper we use $k = n-2f$. Note that when $k=1$, we get
full replication.

\section{A Simple Algorithm}
\label{sec:WFsafe}

In order to develop intuition for the structure and limitations
of distributed storage algorithms, we present in Section
\ref{subsec:safe} a simple storage-efficient algorithm that
ensures \emph{safe} semantics, but not \emph{regularity}. 
Although this algorithm has no practical use, it shows that
the impossibility result of Section \ref{sec:impossibility} does
not apply to a weaker safety property.
In Section \ref{subsec:noRegularity}, we then illustrate 
how this simple algorithm can be extended to ensure regularity
using unbounded storage (similarly to some previous works),
as proven to be inherent by our main result in the next section.

\subsection{Safe and wait-free algorithm}
\label{subsec:safe}

This algorithm simulates a wait-free and strongly
safe MWMR register using erasure codes.
It stores exactly $n$ pieces of the data, one in each
base object.
The algorithm's definitions are presented in Algorithm
\ref{alg:WFdefinitions}, and the algorithm of client $c_j$ can
be found in Algorithm \ref{alg:WFoperations}. 

We define $Timestamps$ to be the set of timestamps $\langle
num,c \rangle$, s.t. $num \in \mathbb{N}$ and $c \in \Pi$,
ordered lexicographically. 
We define $Pieces$ to be the set of pairs consisting of
an element from $\mathbb{E}$ (possible
outputs of the \emph{encode} function) and a number, and
$Chunks = Pieces \times Timestamps$. Each base
object $bo_i$ stores exactly one value from \emph{Chunks},
initially $\langle \langle v_{0_i}, i \rangle, \langle 0, 0 \rangle \rangle$, where
$v_{0_i}$ is the $i^{th}$ piece of $v_0$.

Since memory is
fault-prone, actions are triggered in parallel on all base
objects. This parallelism is denoted using \textbf{$||$for} in
the code. Operations then wait for $n-f$ base objects to
respond.
Recall that $n=2f+k$, so every two sets of $n-f$ base objects
have at least $k$ pieces in common.
Thus, if a write completes after storing pieces on $n-f$ base
objects, a subsequent read accessing any $n-f$ base objects finds
$k$ pieces of the written value (as needed for restoring the
value), provided that they are not over-written by later writes.

A \emph{write$(v)$} operation (lines
\ref{line:SafeWbegin}--\ref{line:SafeWend}) first produces $n$
pieces from $v$ using \emph{encode}, then reads from $n-f$ base objects to obtain a new
timestamp, and finally, tries to store every piece together with
the timestamp at a different base object. For every base object
$bo$, $c_j$ triggers the \emph{update} RMW function, which
overwrites $bo$ only if $c_j$'s timestamp is bigger than the
timestamp stored in $bo$.

A \emph{read} (lines
\ref{line:SafeRbegin}--\ref{line:SafeRend}) reads
the values stored in $n-f$ base objects, and then tries to restore valid data as follows.
If $c_j$ reads at least $k$ values with the same timestamp, it
uses the \emph{decode} function, and returns the restored value.
Otherwise, it returns $v_0$. 
The latter occurs
only if there are outstanding \emph{writes}, that had updated
fewer than $n-f$ base objects before the reader has accessed
them. 
Therefore, these \emph{writes} are concurrent with $c_j$'s
\emph{read}, and by the safety property, any value can be
returned in this case.
The algorithm's correctness is formally proven in Appendix
\ref{AppSub:WF}.


\begin{algorithm}[H]
 \caption{Definitions.}
 \label{alg:WFdefinitions}
\begin{algorithmic}[1]
\small

\State $TimeStamps = \mathbb{N} \times \Pi  $, with selectors
$num$ and $c$, ordered lexicographically. 
\State $Pieces = (\mathbb{E} \times \mathbb{N})$
\State $Chunks =Pieces \times TimeStamps$, with selectors
$val,ts$ \State \emph{encode} $: \mathbb{V} \rightarrow
2^{\mathbb{E} \times \{1,2,\ldots,n\}},$
\emph{decode} $: 2^{\mathbb{E} \times \{1,2,\ldots,n\}}
\rightarrow \mathbb{V}$
\State \hspace*{0.5cm} s.t.\ $\forall v \in \mathbb{V}$,
$\emph{encode}(v)=\{\langle *,1 \rangle,\ldots,\langle *,n
\rangle \}  \wedge$
\State \hspace*{0.5cm} $\forall W \in 2^{\mathbb{E} \times
\mathbb{N}}$, 
if $W \subseteq \emph{encode}(v) \wedge
|W| \geq k$, then \emph{decode}$(W)=v$


\end{algorithmic}
\end{algorithm}


\vspace*{-.5cm}

\begin{algorithm}[H]
 \caption{Safe register emulation. Algorithm for client $c_j$.}
 \label{alg:WFoperations}
 \begin{multicols}{2}
 
\begin{algorithmic}[1]
\small

\Operation{$write(v)$}{}
\label{line:SafeWbegin}

\State $W\leftarrow \emph{encode}(v)$   

\State $R \leftarrow \emph{readValue}()$
\State $ts \leftarrow \langle max( \{ts | \langle \langle
ts,* \rangle ,* \rangle \in R \}) + 1, j \rangle$

\State \textbf{$||$ for all} $\langle v,i \rangle \in \emph{W}$ 

\State \hspace*{0.2cm}
$update(bo_i,\langle v,i \rangle ,ts)$
\Comment{trigger RMW on $bo_i$}

\State  \textbf{wait} for $n-f$ responses
\label{line:WFwait1}
\State \textbf{return} ``ok''

\EndOperation
\label{line:SafeWend}

\State \textbf{update}$(bo,w,ts)$
$\triangleq$

\State \hspace*{0.4cm} \textbf{if} $ts > bo.ts$ 
\State \hspace*{0.8cm} $bo \leftarrow \langle w, ts \rangle$ 

\Statex

\Statex
\Operation{$read()$}{}
\label{line:SafeRbegin}

\State $R \leftarrow \emph{readValue}()$
\State \textbf{if}  $\exists ts$ s.t.\ $|\{
v  \mid \langle ts,v \rangle \in R \} | \geq k$ 

\State \hspace*{0.4cm} $ts' \leftarrow ts$ s.t.\ $|\{
v  \mid \langle ts,v \rangle \in R \} | \geq k$

\State \hspace*{0.4cm} \textbf{return} $\emph{decode} ( \{v \mid
\langle ts',v \rangle \in R \})$ 

\State \textbf{return} $v_0$

\EndOperation
\label{line:SafeRend}

\Procedure{$readValue()$}{}

\State $\emph{R} \leftarrow \{\}$
\State \textbf{$||$ for} i=1 to n
\State \hspace*{0.4cm} $\emph{R}= \emph{R} \cup read(bo_i)$ 
\State \textbf{wait} until $|\emph{R}| \geq n-f$ 
\label{line:WFwait2} 
\State \textbf{return} \emph{R}

\EndProcedure

\end{algorithmic}
\end{multicols} 
\end{algorithm}

\vspace*{-.6cm}
\subsection{Achieving regularity with unbounded storage}
\label{subsec:noRegularity}

We now give intuition why extending this approach to satisfy
regularity requires unbounded
storage.
Note that a read from a regular register must return a valid
value even if it has concurrent writes, and that a
write may remain outstanding indefinitely in case the writer
fails.

Consider a system with $n=4$, $f=1$, $k=2$, where $b_1$ is
faulty and clients $c_1$ and $c_2$ invoking $write(v_1)$ and
$write(v_2)$ respectively, as illustrated in Figure
\ref{fig:example_a}.

Since base objects may fail, clients $c_1$ and
$c_2$ try to store their pieces in all the base objects in
parallel (as in Algorithm \ref{alg:WFoperations}). 
Assume that
$c_1$'s first RMW on $b_2$ and
$c_2$'s RMW on $b_3$ take effect.
If these RMWs would overwrite
the pieces in $b_1$ and $b_2$, and $c_1$ and $c_2$ would then
immediately fail, the storage will remain with no restorable
value.
In this case, no later read can return a value satisfying
regularity (note that since the two outstanding writes are
concurrent with any future read, a safe register
may return an arbitrary value).
Therefore, $c_1$ and $c_2$ cannot overwrite the existed value in
the base objects.

Consider next a client $c_3$ attempting to write $v_3$ as in
Figure \ref{fig:example_b}. Even if $c_3$ reads
the base objects, it cannot learn of any complete write. 
Moreover, when its RMW takes effect on $b_4$, it cannot
distinguish between a scenario in which $c_2$ and $c_3$ have
failed (thus, their pieces can be overwritten), and the scenario
in which one of $c_2$ and $c_3$ is slow and will eventually 
be the only client to complete a writes (in which case
overwriting its value may leave the storage with no
restorable value). Thus, $c_3$ cannot overwrite any piece.

We can repeat this process by allowing an unbounded
number of clients to invoke writes and store exactly one piece
each, without allowing any piece to be overwritten.
While this example only shows that a direct extension of
Algorithm \ref{alg:WFoperations} consumes unbounded storage,
in the next section we prove a lower bound on the storage
required by \emph{any} protocol.

\begin{figure}
        \centering
        \begin{subfigure}[b]{0.35\textwidth}
                \includegraphics[width=\textwidth]{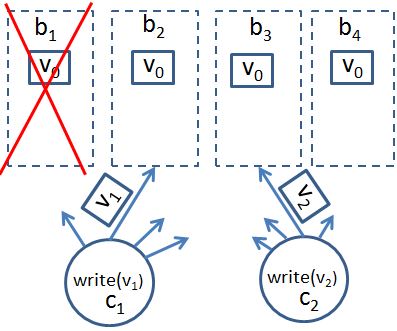}
                \caption{Clients $c_1$ and $c_2$ invoke writes.}
                \label{fig:example_a}
        \end{subfigure}%
      ~~~ 
        \begin{subfigure}[b]{0.37\textwidth}
                \includegraphics[width=\textwidth]{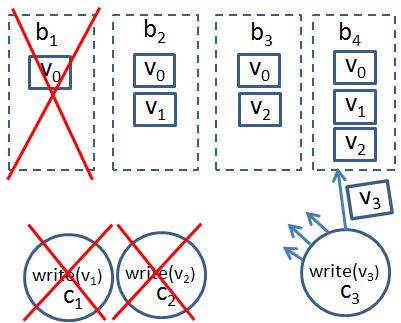}
                \caption{Clients $c_1$ and $c_2$ fail, $c_3$
                invokes write.}
                \label{fig:example_b}
        \end{subfigure}
          \caption {Example scenarios of
          erasure coded regular storage; $n=4$, $f=1$, and $k=2$.
          Small boxes represent pieces of
          the written value. Complete
          arrows represent RMWs that took effect, and short
          arrows represent pending ones.}
          \label{fig:example}
\end{figure}

\newcommand{\enc}{\mathcal{E}}
\newcommand{\dec}{\mathcal{D}}
\newcommand{\numbits}{\ell}
\section{Storage Lower Bound}
\label{sec:impossibility}

We now show a lower bound on the required storage
of any lock-free algorithm that simulates 
weakly regular MWMR register. 
Our bound stipulates that if the
number of clients that can invoke \emph{write} operations is
unbounded, then either (1) there is a time during which there
exist $f+1$ base objects each of which stores at least $D$ bits
of some \emph{write}, or (2) the storage can grow without bound.  
\newline

\textbf{Information theoretic storage model}
The storage lower bound presented in this section is obtained under a precise and natural information theoretic model of storage cost. We model the general behavior of a base object in a distributed protocol as follows. Upon each RMW operation triggered on it, the base object implements some function $\enc$, whose inputs are the values currently stored in the base object and the data provided with the write. After the RMW operation, the bits output from $\enc$ are everything that is stored in the base object. Upon a read operation triggered on the base object, the bits currently stored in it are input to some function $\dec_i$, whose output is the value returned to the reader. To justify this model, let us observe that the role of base objects in the distributed register emulation is to store sufficient information to guarantee successful information reconstruction by a client following some future read. In the next lemma we give a more formal definition of the functions $\enc$ and $\dec_i$, and prove an elementary lower bound on the number of bits that $\enc$ needs to output.
\begin{lemma}
\label{lem:pigeonhole}
Let $\enc$ be a function on $s$ arguments $u_1,\ldots,u_s$
taking values from sets $\mathbb{U}_1,\ldots,\mathbb{U}_s$,
respectively. Let the output of $\enc$ be a binary vector
$\{0,1\}^{\numbits}$. If there exist $s$ functions
$\{\dec_i\}_{i=1}^{s}$ such that
$\dec_i(\enc(u_1,\ldots,u_s))=u_i$ for every assignment to
$u_1,\ldots,u_s$, then necessarily
$\numbits\geq\left\lceil\log_2
(|\mathbb{U}_1|\cdot\ldots\cdot|\mathbb{U}_s|)\right\rceil$.
\end{lemma}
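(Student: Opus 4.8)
The plan is to show that the decodability hypothesis forces $\enc$ to be injective on the product domain $\mathbb{U}_1 \times \cdots \times \mathbb{U}_s$, and then to apply a counting (pigeonhole) argument that compares the size of this domain with the size $2^{\numbits}$ of the codomain $\{0,1\}^{\numbits}$.

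First I would establish injectivity. Suppose two assignments yield the same output, i.e.\ $\enc(u_1,\ldots,u_s) = \enc(u_1',\ldots,u_s')$. Applying $\dec_i$ to both sides and invoking the hypothesis $\dec_i(\enc(\cdot)) = u_i$, I obtain $u_i = \dec_i(\enc(u_1,\ldots,u_s)) = \dec_i(\enc(u_1',\ldots,u_s')) = u_i'$ for every $i \in \{1,\ldots,s\}$. Hence the two tuples coincide, so distinct inputs to $\enc$ map to distinct outputs; that is, $\enc$ restricted to $\mathbb{U}_1 \times \cdots \times \mathbb{U}_s$ is injective.

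Next comes the counting step. An injection from a finite set of cardinality $|\mathbb{U}_1|\cdots|\mathbb{U}_s|$ into $\{0,1\}^{\numbits}$, which has exactly $2^{\numbits}$ elements, can exist only if $2^{\numbits} \geq |\mathbb{U}_1|\cdots|\mathbb{U}_s|$. Taking base-$2$ logarithms gives $\numbits \geq \log_2\!\left(|\mathbb{U}_1|\cdots|\mathbb{U}_s|\right)$, and since $\numbits$ is a nonnegative integer this sharpens to $\numbits \geq \left\lceil \log_2\!\left(|\mathbb{U}_1|\cdots|\mathbb{U}_s|\right) \right\rceil$, as claimed.

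The argument is elementary and I do not anticipate a genuine obstacle; the only points that merit care are that decodability yields \emph{injectivity} of $\enc$ (not merely some property of the $\dec_i$), and that the integrality of $\numbits$ is precisely what upgrades the raw logarithmic bound to the stated ceiling. I would also note implicitly that the claim presumes finite domains $\mathbb{U}_i$, matching the intended application in which each base object stores values from a finite set; if some $\mathbb{U}_i$ were infinite, no finite $\numbits$ could satisfy the injection requirement.
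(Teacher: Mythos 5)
Your proof is correct and takes essentially the same route as the paper's: decodability forces $\enc$ to be injective, and a pigeonhole/counting comparison of domain and codomain sizes yields the bound. If anything, your treatment is slightly more careful, since the paper assumes the $|\mathbb{U}_i|$ are powers of $2$ for simplicity, whereas you obtain the ceiling directly from the integrality of $\numbits$.
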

\begin{proof}
By a simple pigeonhole argument. For simplicity we assume that
the sizes $|\mathbb{U}_i|$ are powers of $2$ for every $i$.
Suppose the theorem statement is not true, that is, the output
of $\enc$ has fewer than
$\log_2(|\mathbb{U}_1|\cdot\ldots\cdot|\mathbb{U}_s|)$ bits.
Then there exist at least two assignments to $u_1,\ldots,u_s$
that map to the same output of $\enc$. Hence the outputs of the
functions $\{\dec_i\}_{i=1}^{s}$ will be the same on both
assignments, which is a violation because at least one $u_i$
differs between the two assignments.
\end{proof}
We next show how Lemma~\ref{lem:pigeonhole} implies lower bounds
on the storage used in base objects. Since the information
reconstruction algorithm is run by the client on inputs from
base objects, we may regard each RMW operation $i$ as requiring
the base object to store a value $u_i$ from some set
$\mathbb{U}_i$. The size of the set $\mathbb{U}_i$ may change
arbitrarily between writes and base objects. The particular
choices of set sizes are immaterial for the current discussion,
but in general they satisfy the necessary condition that
globally on all surviving base objects the product of set sizes
is at least $|\mathbb{V}|$. In the next lemma we prove that the
most general function implemented by a base object upon RMW is a
function $\enc$ as specified in Lemma~\ref{lem:pigeonhole}.
\begin{lemma}
\label{lem:hardcode}
Without loss of generality, a function $\enc$ used by a base
object is a fixed (``hard coded'') function that does not depend
on the instantaneous values $u_1,\ldots,u_s$.    \end{lemma}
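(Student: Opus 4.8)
The plan is to leverage the fact, established already in the model section, that every base object is a deterministic state machine whose transition upon each RMW is a fixed function of its current stored content together with the incoming write data. First I would fix an arbitrary run and, with it, the schedule at the base object in question — the order in which RMW and read operations take effect — and argue that, once this schedule is frozen, the stored content is a deterministic function of the write data alone.

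Concretely, let $u_1,\ldots,u_s$ be the data arguments carried by the $s$ write-RMWs whose information the base object is required to retain. Since each per-step transition function is determined by the algorithm and not by the data, their composition along the fixed schedule is again a fixed function; I would then define $\enc(u_1,\ldots,u_s)$ to be the content stored after these operations, viewed as a function of the $u_i$ with the schedule held fixed. This $\enc$ is a single, well-defined map whose domain and codomain do not vary with the particular values plugged into $u_1,\ldots,u_s$, which is exactly the ``hard-coded'' property claimed.

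The conceptual crux — and the step I expect to require the most care — is ruling out that a base object could gain by choosing its storage representation adaptively, inspecting the values it currently holds and branching to different encodings. I would dispatch this by the observation that any such adaptive choice is itself a deterministic function of the inputs, so the overall map from $(u_1,\ldots,u_s)$ to the stored bits remains a single function; the adaptivity is simply absorbed into the definition of $\enc$. In particular, the set of distinct outputs produced as the $u_i$ range over their domains is precisely the image of this one fixed $\enc$, so no adaptive scheme can enlarge the number of available outputs, and the pigeonhole counting of Lemma~\ref{lem:pigeonhole} is left unaffected.

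Finally, I would observe that reads are deterministic as well, so the recovery of each value $u_i$ by a subsequent read is a fixed decoder $\dec_i$ applied to the stored content, yielding $\dec_i(\enc(u_1,\ldots,u_s))=u_i$ exactly as demanded by the hypotheses of Lemma~\ref{lem:pigeonhole}. Hence treating $\enc$ as a fixed function entails no loss of generality, and the bit-count lower bound of Lemma~\ref{lem:pigeonhole} transfers verbatim to the storage held in the base object.
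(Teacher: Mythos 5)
Your proof is correct and reaches the paper's conclusion, but by a genuinely different route. The paper handles adaptivity by charging for it explicitly: if the base object selects among a family $\mathcal{E}^{1},\ldots,\mathcal{E}^{m}$ of encoding functions, the decoders must be told which one was used, so the index $j$ is concatenated to the output, and $[\mathcal{E}^{j}(u_1,\ldots,u_s);j]$ is then a single fixed function at the price of $\log_2(m)$ extra stored bits. You instead absorb the selection into the map itself: since the choice of branch is a deterministic function of $(u_1,\ldots,u_s)$ once the schedule at the base object is frozen, the composite $(u_1,\ldots,u_s)\mapsto\mathcal{E}^{j(u_1,\ldots,u_s)}(u_1,\ldots,u_s)$ is already one fixed function, the decoders of Lemma~\ref{lem:pigeonhole} act only on its output, and the pigeonhole count applies directly to its image with no surcharge. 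Your formalization is the more economical one in exactly the situation the lemma statement addresses (branching that depends only on the instantaneous values); the paper's version additionally covers selections driven by information not recoverable from the inputs, at the cost of storing the index. One small point, which the paper also glosses over but which you should flag since your argument leans on counting the image of a single map: if different branches emit outputs of different lengths, the fixed-length codomain $\{0,1\}^{\ell}$ of Lemma~\ref{lem:pigeonhole} does not apply verbatim; taking $\ell$ to be the worst-case output length (or counting variable-length strings, which perturbs the bound by at most one bit) restores the argument, so the conclusion is unaffected.
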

\begin{proof}
Suppose the base object has a family of functions
$\enc^{1},\ldots,\enc^{m}$ that each maps values
$u_1,\ldots,u_s$ to bits. Then, in order to allow recovering the
$u_i$ values, we must store additional $\log_2(m)$ bits to
inform the functions $\dec_i$ about which $\enc^{j}$ function
was used. Therefore, this scenario is equivalent to having
$\enc(u_1,\ldots,u_s)=[\enc^{j}(u_1,\ldots,u_s);j]$, where $;$
represents concatenation, and $\enc$ is a fixed function.
\end{proof}
Lemma~\ref{lem:hardcode} addresses the possibility of base
objects to reduce the amount of storage by adapting their
functions to the instantaneous stored values. The lemma proves
that without prior knowledge on the written data it is not
possible to adaptively reduce the storage requirement mandated
by Lemma~\ref{lem:pigeonhole}. Now we are ready to prove the
main property needed for our storage model. The next theorem
shows that each write to a base object {\em must} add a number
of bits depending on the required set size for that write,
irrespective of the information presently stored from prior
writes.
\begin{theorem}
\label{thm:additive}
Any write triggered on a base object wih value
$u_s\in\mathbb{U}_s$ adds at least $\log_2(|\mathbb{U}_s|)$ bits.
\end{theorem}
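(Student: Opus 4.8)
The plan is to reduce the claim to a single application of Lemma~\ref{lem:pigeonhole} with exactly two arguments: the content already stored in the base object and the value $u_s$ carried by the new write. First I would invoke Lemma~\ref{lem:hardcode} to assume that the effect of the write is a fixed function $\enc$ whose inputs are (i) the bits currently held by the base object and (ii) the new value $u_s \in \mathbb{U}_s$, and whose output is the bits held after the write. This removes any possibility that the base object ``adapts'' its encoding to the instantaneous stored values in order to cheat the counting.

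Next I would regard the content present just before the write as a single argument $u_0$ ranging over a set $\mathbb{U}_0$, where $\mathbb{U}_0$ is the set of \emph{distinguishable} prior contents; by definition the number of bits presently stored is exactly $\log_2|\mathbb{U}_0|$, since fewer bits could not tell these contents apart. The register semantics force two recovery guarantees on the post-write state: a decoder $\dec_s$ must recover $u_s$ (the value just written is readable), and a decoder $\dec_0$ must recover $u_0$, i.e.\ all prior values that a later read may still be required to return. With these two decoders in hand, $\enc(u_0,u_s)$ satisfies the hypothesis of Lemma~\ref{lem:pigeonhole} with $s=2$, so its output has at least $\log_2(|\mathbb{U}_0|\cdot|\mathbb{U}_s|)=\log_2|\mathbb{U}_0|+\log_2|\mathbb{U}_s|$ bits. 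Subtracting the $\log_2|\mathbb{U}_0|$ bits that were already present leaves an increase of at least $\log_2|\mathbb{U}_s|$ bits, which is the claim, and the bound holds no matter how large $\log_2|\mathbb{U}_0|$ is --- matching the phrase ``irrespective of the information presently stored.''

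The step I expect to be the real obstacle is justifying the existence of the decoder $\dec_0$, i.e.\ arguing that the write \emph{cannot} simply erase the information it found in the base object. This is exactly the phenomenon illustrated in Section~\ref{subsec:noRegularity}: under unbounded concurrency a writer can never be certain that the values it is about to overwrite belong to completed (and hence superseded) writes, so to preserve weak regularity it must retain enough to reconstruct them alongside its own piece. I would therefore phrase the argument so that $\mathbb{U}_0$ captures precisely those prior contents that remain ``live'' (must still be recoverable), making $\dec_0$ well defined by assumption. The remaining work --- the injectivity/pigeonhole counting and the subtraction --- is then routine once the two decoders are in place.
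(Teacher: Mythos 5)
Your proof is correct and is essentially the paper's own argument: the paper proves the theorem by induction on $s$, applying Lemma~\ref{lem:pigeonhole} (via Lemma~\ref{lem:hardcode}) to all $s$ arguments $u_1,\ldots,u_s$ and then subtracting the storage present after $s-1$ writes, which is exactly your two-argument application with the history $u_1,\ldots,u_{s-1}$ bundled into your aggregate $u_0$. The obstacle you flag --- justifying that prior contents must remain recoverable (the existence of $\dec_0$) --- is likewise not discharged inside the paper's proof but is assumed as part of the storage model set up before the theorem, so your treatment matches the paper's in both structure and level of rigor.
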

\begin{proof}
We prove by induction on $s$. By the induction hypothesis after
$s-1$ writes the base object stores
$\log_2(|\mathbb{U}_1|\cdot\ldots\cdot|\mathbb{U}_{s-1}|)$ bits.
Then following write $s$ triggered on the base object, we know
from Lemmas~\ref{lem:pigeonhole},\ref{lem:hardcode} that any
function implemented in the base object that will allow
recovering $u_1,\ldots,u_{s}$ needs at least
$\log_2(|\mathbb{U}_1|\cdot\ldots\cdot|\mathbb{U}_{s}|)$ bits.
By simple subtraction we get that the new write adds at least
$\log_2(|\mathbb{U}_s|)$ bits.
\end{proof}
The outcome from Theorem~\ref{thm:additive} is that the base
object storage cost in bits is obtained as the sum of the
storage requirements of individual writes. Hence in the sequel
we can assume without loss of generality that {\em each stored
bit is associated with a particular write}.

With the storage model in place, we now
organize the proof  as follows: First, in
Observation \ref{obs:cannotComplete}, we observe a necessary
condition for a write operation to complete.
Next, we define an (unfair)
adversary, and in Lemma \ref{lem:secret}, we show that under this
adversary's behavior, no write operation can complete as long as
the number of base objects that store at least $D$ bits that are
associated with some written value is less than $f$. 
Finally, in Lemma 
\ref{lem:impInvoke} and Theorem \ref{lem:theorem} we show that
for every size $S$, for any algorithm that uses less storage than
$S$ and with which the number of base objects that
store at least $D$ bits of some written value is
less than $f$ at a given time we can build a fair run in
which no write operation completes.  

For any time $t$ in a run $r$ of
an algorithm $A$ we define the following sets, as illustrated in
Figure \ref{fig:definitions}.

\begin{itemize}

  \item $C(t)$: the set of all clients that have outstanding
  write operations at time $t$.
  
  \item $C^+(t) \subseteq C(t)$: the set of clients 
  that have outstanding write operations $write_i(v_i)$ s.t.\
  at least one bit associated with $v_i$ is stored in one of the
  base objects or in one of the other correct clients at time
  $t$.

  \item $C^-(t) = C(t) \setminus C^+(t)$. Clients in $C^-(t)$ may
  have attempted to store a bit via an RMW that did not respond,
  or may have stored information that was subsequently erased,
  or may have not attempted to store anything yet.
  
  \item $F(t) = \{ b_i \in N \mid b_i$ stores $D$ bits of some
  write at time $t$ $\}$.
  
\end{itemize}

\begin{figure}
        \centering
        \begin{subfigure}[b]{0.42\textwidth}
                \includegraphics[width=\textwidth]{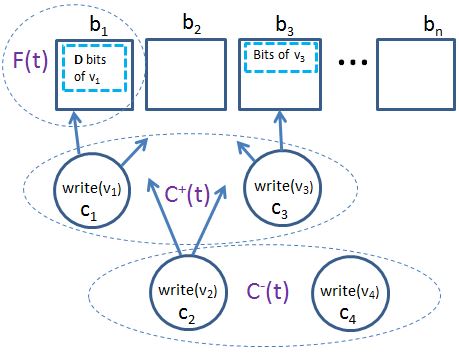}
                \caption{Time $t$}
                \label{fig:timet}
        \end{subfigure}%
      ~~~ 
        \begin{subfigure}[b]{0.42\textwidth}
                \includegraphics[width=\textwidth]{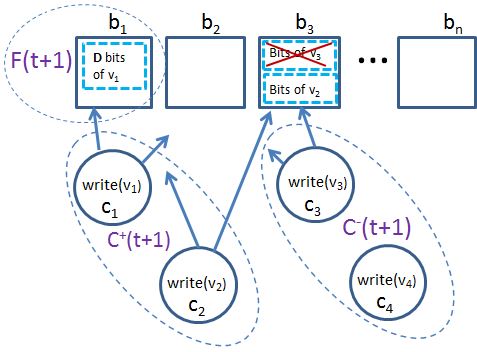}
                \caption{Time $t+1$}
                \label{fig:timet1}
        \end{subfigure}
          \caption{Example run of a storage algorithm. Clients
          $c_1,\ldots,c_4$ have outstanding writes.}
          \label{fig:definitions}
\end{figure}

\noindent 
From the definition of $C^+(t)$ we get the following:

\begin{observation}
\label{obs:strangeValues}

At any time $t$ in a run $r$, the storage size is at least
$|C^+(t)|$ bits.

\end{observation}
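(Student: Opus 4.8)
The plan is to prove this by a direct counting argument built on the storage model just established, specifically on the consequence of Theorem~\ref{thm:additive} that \emph{each stored bit is associated with a particular write}. First I would record this consequence explicitly: because the per-write contributions are additive (Theorem~\ref{thm:additive}, via Lemmas~\ref{lem:pigeonhole} and~\ref{lem:hardcode}), the bits currently stored at time $t$ can be partitioned according to the write operation they belong to, so that no single bit is charged to two distinct writes.

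Next, for each client $c_i \in C^+(t)$, the definition of $C^+(t)$ guarantees that at least one bit associated with the value $v_i$ of $c_i$'s outstanding write is stored somewhere at time $t$ (in a base object or in a correct client). I would fix one such bit $b_i$ for each $c_i \in C^+(t)$, which defines a map from $C^+(t)$ into the set of stored bits.

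Then I would argue that this map is injective. Distinct clients $c_i \neq c_j$ in $C^+(t)$ have distinct outstanding write operations, since a client has at most one outstanding write at any given time. By the additivity established in Theorem~\ref{thm:additive}, the chosen bit $b_i$ is associated only with $c_i$'s write and $b_j$ only with $c_j$'s write, so $b_i \neq b_j$. Hence the set of stored bits contains an injective image of $C^+(t)$, and therefore the storage size is at least $|C^+(t)|$ bits, as claimed.

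I expect no genuine obstacle in this observation: all the substance lies in the storage model already developed, which licenses treating each stored bit as belonging to a single write and thus permits the clean partition used above. The only point worth stating with care is that each client has a single outstanding write at time $t$, so that the correspondence between clients in $C^+(t)$ and their writes is well defined and the injection goes through without double-counting.
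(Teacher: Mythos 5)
Your argument is correct and matches the paper's intent: the paper states this observation without proof, treating it as immediate from the definition of $C^+(t)$ together with the storage model's consequence that each stored bit is associated with a particular write. Your explicit injection from clients in $C^+(t)$ to stored bits is exactly the formalization the paper leaves implicit, so there is nothing to add.
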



\begin{observation}
\label{obs:cannotComplete}

Consider a run $r$ of an algorithm that simulates a weakly
regular lock-free MWMR register, and a write operation $w$ in $r$.
Operation $w$ cannot return until there is time $t$ s.t.\ for
every $B \subset N$ s.t.\ $|B| = n -f$, there is some client in
$C(t)$ whose pending write's value can be restored from $B$.

\end{observation}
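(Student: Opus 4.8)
The plan is to argue by contradiction using an indistinguishability (crash-and-freeze) construction, and to let \emph{weak regularity} do the real work by forcing a read invoked after $w$ to recover a value that is at least as recent as $w$. First I would suppose that $w$ returns in $r$, say its return action occurs at time $t_r$, yet the asserted time does not exist. I would take $t$ to be the configuration immediately before $w$'s return action. Since a return is a client step that touches no base object, the contents of every base object at $t$ coincide with their contents at $t_r$, and $w$ is still outstanding at $t$, so $w$'s writer lies in $C(t)$. Under the contradiction hypothesis there is a set $B \subset N$ with $|B| = n-f$ from which no pending write of a client in $C(t)$ can be restored.

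Next I would extend the prefix of $r$ up to $t$ into a fair run $r'$ as follows: crash the $f$ base objects in $N \setminus B$ (permitted, as $|N\setminus B| = f$), crash every client whose write is outstanding at $t$, and let all other clients idle, so the objects of $B$ keep their time-$t$ states frozen. I would then invoke a single \emph{read} $rd$ by a correct client. Because $rd$ is eventually the only outstanding operation of a correct client, lock-freedom forces $rd$ to complete in $r'$; and since only the objects of $B$ ever respond, the returned value $u$ is a function of the frozen contents of $B$, i.e.\ $u$ is restorable from $B$. By weak regularity there is a linearization $L_{rd}$ of all writes together with $rd$ respecting $\prec_{r'}$; as $w \prec_{r'} rd$ we get $w \prec_{L_{rd}} rd$, so at least one write precedes $rd$ in $L_{rd}$ and $rd$ returns the value of the latest such write $w''$, with $w \preceq_{L_{rd}} w''$. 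Hence the value of some write $w''$ with $w \preceq_{L_{rd}} w''$ is restorable from $B$.

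\textbf{The hard part} will be concluding that $w''$ is \emph{outstanding} at $t$, so that $w'' \in C(t)$ and $B$ contradicts its own defining property. The case $w''=w$ is immediate, since $w \in C(t)$. The delicate case is $w'' \neq w$ with $w''$ already completed by time $t$: then $w''$'s value is restorable from $B$ but $w'' \notin C(t)$, and no contradiction arises directly, because the restored value is attributed to a completed rather than a pending write. I would resolve this by inducting on the order in which writes complete and applying the freeze-and-read argument to the \emph{first} write to complete: before that completion no write is both relevant to $rd$ and already returned, so every candidate $w''$ the read can return is still outstanding, which forces $w'' \in C(t)$. This is exactly the regime of interest downstream, since in the adversarial run assembled in Lemma~\ref{lem:secret} and Theorem~\ref{lem:theorem} no write ever completes, so whenever we ask whether a given $w$ \emph{could} return, all relevant writes are pending. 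Making this reduction airtight — arguing that the read after the first completing write cannot recover anything other than the value of a concurrently outstanding write — is the step I expect to require the most care; the remaining ingredients (freezing base-object state, the $f$-crash indistinguishability, and the lock-freedom termination of $rd$) are routine once the construction is set up.
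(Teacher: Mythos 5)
Your construction is exactly the paper's: the paper's entire proof is your first two paragraphs --- assume $w$ returns while some $(n-f)$-set $B$ has no restorable pending write, crash the $f$ objects in $N\setminus B$ and all clients in $C(t)$, invoke a read, and use lock-freedom plus weak regularity to conclude that the read must return a value it cannot possibly reconstruct. The ``hard part'' you flag is a genuine subtlety that the paper's one-line proof silently skips: a write $w''$ concurrent with $w$ that has \emph{already completed} by time $t$ is not in $C(t)$, yet may be linearized after $w$ and may be restorable from $B$, so the bare assertion that ``no value satisfying weak regularity can be returned'' is not immediate. Your proposed fix is the right idea, and it can be closed more cheaply than you anticipate --- no induction is needed. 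Observe that the witnessing condition in the statement (``there is a time $t$ such that for every $B$ some client in $C(t)$ has a restorable pending write'') does not mention $w$ at all; hence it suffices to establish the observation for the \emph{first} write $w_0$ to complete in $r$, since $w_0$'s return precedes $w$'s and any time witnessing the condition before $w_0$'s return also witnesses it before $w$'s. For $w_0$, at the configuration just before its return every invoked write is still outstanding, so every candidate $w''$ that weak regularity permits the read to return (namely, any write not really-preceding $w_0$) lies in $C(t)$, and the contradiction with the defining property of $B$ is immediate. With that one-sentence reduction your argument is complete; it is also consistent with the only downstream use of the observation (Corollary~\ref{cor:cannotcomplete}), where the adversary ensures no write ever completes and so the completed-concurrent-write case never arises.
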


\begin{proof}

Assume that some \emph{write} completes
when there is a set $B\subset N$ s.t.\ $|B|=N-f$ and there is no
client in $C(t)$ whose write's value can be restored from $B$.
Now, let all the base objects in $N \setminus B$ and all the
clients in $C(t)$ fail, and invoke a read operation $rd$. By
lock-freedom, $rd$ completes, although no value
satisfying weak regularity can be returned. A contradiction.
\end{proof}

\noindent For our lower bound, we define a particular
environment behavior that schedules actions in a way that
prevents progress:

\begin{definition}

($Ad$) 
At any time $t$, $Ad$ schedules an action as follows:
\begin{enumerate}
  \item If there is a pending RMW on a base object in $N
  \setminus F(t)$ by a client in $C^-(t)$, then choose the
  longest pending of these RMWs, allow it to take effect on the
  corresponding base object, and schedule its response.
  \item Else, choose in round robin order a client $c_i \in \Pi$
  that wants to trigger an RMW, and schedule $c_i$'s action
  without allowing it to affect the base object yet.
\end{enumerate}
In other words, $Ad$ delays RMWs triggered by clients in
$C^+(t)$ as well as RMWs on base objects in $F(t)$, and fairly
schedules all other actions. 
Thus, though this behavior may be unfair, in every
infinite run of $Ad$, every correct client gets infinitely many
opportunities to trigger RMWs. 
We demonstrate $Ad$'s behavior
in Figure \ref{fig:definitions}. 
(a) Clients $c_2$ and $c_4$ are in $C^-(t)$ at time $t$,
where $c_4$ has no pending RMWs and $c_2$ has one triggered RMW
on $b_1 \in F(t)$ and one triggered RMW on $b_3 \not\in F(t)$.
Therefore, by the first rule, $Ad$ schedules the
response on the RMW triggered by $c_2$ on $b_3$. 
(b) In this example $c_2$ overwrites $b_3$ and so
$c_3$ moves from $C^+$ to $C^-$. Since $c_3$ is the only client
that has a pending RMW on a base object not in $F(t+1)$, $Ad$
schedules the response on the RMW triggered by $c_3$
on $b_2$ at time $t+1$. 
Now notice that at time $t+2$ there is no
client in $C^-(t+2)$ with a pending RMW on a base object in $N
\setminus F(t+2)$, and thus, by the second rule, $Ad$ chooses in
round robin a client in $\Pi$ and allows it to trigger an RMW.

\end{definition}

\noindent The following observation immediately follows
from the adversary's behavior.

\begin{observation}
\label{obs:fconst}

Assume an infinite run $r$ in which the environment behaves like
$Ad$. For each base object $bo$, if $bo \in F(t)$ at
some time $t$, then $bo \in F(t')$ for all $t' > t$.

\end{observation}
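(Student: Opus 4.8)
The plan is to show that under $Ad$ the content of any base object in $F(t)$ is frozen from time $t$ onward, so it keeps storing the same $D$ bits and therefore stays in $F$. The only way the state of a base object can change during a run is when an RMW takes effect on it, so I would begin by examining exactly which RMWs $Ad$ permits to take effect. Inspecting the two scheduling rules of $Ad$: the first rule allows an RMW to take effect only on a base object in $N \setminus F(t)$ (it selects the longest-pending RMW on such an object by a client in $C^-(t)$), while the second rule merely schedules a trigger action without letting it affect any base object. Consequently, at any time $t$ no RMW takes effect on a base object in $F(t)$.

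I would then argue by induction on $t' \geq t$ that $bo \in F(t')$. The base case $t' = t$ is the hypothesis. For the inductive step, suppose $bo \in F(t')$; by the observation above, the action scheduled by $Ad$ at time $t'$ does not take effect on $bo$, so $bo$'s stored content at time $t'+1$ is identical to its content at time $t'$. Since $bo$ stored $D$ bits associated with some write at time $t'$ (and, by Theorem~\ref{thm:additive}, each of these bits is permanently associated with its write), the unchanged content still contains $D$ such bits, giving $bo \in F(t'+1)$.

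The one point that needs care is the apparent circularity in the adversary's definition: the first rule refers to $F(t)$ at the very instant the next action is being chosen, yet $F$ is itself defined through the evolving base-object contents. The induction resolves this cleanly, because at each step the membership $bo \in F(t')$ is already established before we reason about the action taken at $t'$, so the self-reference is only to the current, known value of $F$. The argument is otherwise elementary; the main thing to get right is confirming that neither scheduling rule can modify a base object already in $F$, which is immediate once the qualifier ``on a base object in $N \setminus F(t)$'' in the first rule is read carefully.
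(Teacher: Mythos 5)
Your proof is correct and matches the paper's reasoning: the paper states this observation as following immediately from the adversary's behavior, and your argument simply spells out that same fact — neither rule of $Ad$ ever lets an RMW take effect on a base object in $F(t)$, so its contents (and hence the $D$ stored bits of some write) are frozen, which your induction on $t'$ makes precise.
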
 

\noindent Another consequence of $Ad$'s behavior is captured by
the following:

\begin{lemma}
\label{lem:secret}

As long as the environment behaves like $Ad$, for any
time $t$ when $|F(t)| \leq f$,
there is a set $B$ of $n-f$ base objects s.t.\ there
is no client in $C(t)$ whose value can be restored from $B$ at
time $t$.

\end{lemma}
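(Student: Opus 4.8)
The plan is to prove the lemma in two stages: first establish a structural invariant on how the bits of an outstanding write can be distributed among the base objects under $Ad$, and then use it to exhibit the desired set $B$.

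First I would prove, by induction on the length of the run prefix produced by $Ad$, the invariant that at every time $t$, for every client $c \in C(t)$ the bits associated with $c$'s value reside in \emph{at most one} base object. The driving force is $Ad$'s first rule, which only ever lets an RMW of a client in $C^-(t)$ take effect: the instant a client accumulates its first stored bit it enters $C^+$, and from then on all of its RMWs are frozen, since rule~2 actions affect no base object at all and rule~1 never advances a $C^+$-client's RMW. For the inductive step, consider the single action $Ad$ schedules. If it is an RMW of a client $c' \in C^-(t)$ taking effect on some $b \in N \setminus F(t)$, then just before this action $c'$ held no bit in any base object (by definition of $C^-$), so afterwards $c'$'s bits lie in the single object $b$; every other client's bit-set is only possibly shrunk (if $b$'s content is overwritten), which can move a client from $C^+$ back to $C^-$ but never spreads its bits. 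Since a base object can acquire bits of $c$'s value only through $c$'s own RMWs, and these are blocked while $c \in C^+$, the count never exceeds one.

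Given the invariant, I would choose $B$ as follows. Because $|F(t)| \le f$, the complement $N \setminus F(t)$ has at least $n-f$ base objects, so I can pick any $B \subseteq N \setminus F(t)$ with $|B| = n-f$. I then check that no $c \in C(t)$ is restorable from $B$. If $c \in C^-(t)$, no base object holds a bit of $c$'s value, so certainly $B$ does not. If $c \in C^+(t)$, the invariant gives at most one base object $b$ holding $c$'s bits; if $b \in B$ then $b \notin F(t)$, and by the definition of $F(t)$ the object $b$ stores fewer than $D$ bits associated with any single write, in particular fewer than $D$ bits of $c$'s value. Hence $B$ contains strictly fewer than $D = \log_2|\mathbb{V}|$ bits associated with $c$'s value, and by the information-theoretic bound of Lemma~\ref{lem:pigeonhole} (together with Theorem~\ref{thm:additive}, which lets us speak of each bit as associated with a particular write) the value of $c$ cannot be recovered from $B$.

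I expect the main obstacle to be the rigorous bookkeeping of the single-object invariant across the moves that shuffle clients between $C^+$ and $C^-$: I must argue that overwrites only ever delete a client's bits (possibly demoting it to $C^-$) and never create a second hosting object, and that bits momentarily held inside other correct clients are irrelevant here, since restoration is from base objects only and a client is frozen the moment it first enters $C^+$. Everything else---the counting $|N \setminus F(t)| \ge n-f$ and the per-client recovery argument---is then routine.
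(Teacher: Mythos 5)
Your proposal is correct and follows essentially the same route as the paper's proof: the paper likewise argues that $Ad$ freezes a client the moment it enters $C^+$, so a client's value is restorable only if all $D$ bits sit in a single base object (hence in $F(t)$), and $|F(t)|\le f$ lets one pick $B$ disjoint from $F(t)$. Your version merely makes explicit, via induction, the ``at most one hosting base object'' invariant and the final counting step that the paper leaves implicit.
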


\begin{proof}

As soon as a client $c_i$ stores a piece of data in a base
object, $c_i$ joins $C^+$, and from that point on, as long as
its data remains in the system, $c_i$ is prevented by Ad from
storing any further values. Therefore, unless $c_i$ stores all
$D$ bits of its value in some base object, it is
impossible to reconstruct this value from the bits that were
stored. Since the number of base objects storing all $D$ bits of
some client value at some time $t$ is no more than $|F(t)|$, and
since $|F(t)| \leq f$, the lemma follows.

\end{proof}

\noindent From Observation \ref{obs:cannotComplete} and Lemma
\ref{lem:secret} we conclude:

\begin{corollary}
\label{cor:cannotcomplete}

Consider a run $r$ of an algorithm that simulates a weakly
regular lock-free MWMR register.
If the adversary behaves like $Ad$, and $|F(t)| \leq f$ for all
$t$ in $r$, then no write completes in $r$.

\end{corollary}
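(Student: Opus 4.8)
The plan is to argue by contradiction, combining the two preceding results at a single common time. Suppose, toward a contradiction, that some write operation $w$ completes in $r$. Since $r$ is a run of an algorithm simulating a weakly regular lock-free MWMR register, Observation~\ref{obs:cannotComplete} applies to $w$: the operation cannot return until there is a time $t$ such that for \emph{every} $B \subset N$ with $|B| = n-f$, some client in $C(t)$ has a pending write whose value can be restored from $B$. Because $w$ does return in $r$, such a time $t$ must indeed occur in $r$.

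Next I would invoke Lemma~\ref{lem:secret} at this very same time $t$. The hypothesis of the corollary guarantees $|F(t')| \le f$ for all $t'$ in $r$, and in particular $|F(t)| \le f$. Since the environment behaves like $Ad$ throughout $r$, Lemma~\ref{lem:secret} yields a set $B^\ast$ of $n-f$ base objects such that \emph{no} client in $C(t)$ has a value restorable from $B^\ast$ at time $t$. This directly contradicts the conclusion drawn from Observation~\ref{obs:cannotComplete}, which asserts that for every $(n-f)$-subset of $N$ — including $B^\ast$ — some client in $C(t)$ can restore its value. The contradiction shows that no write completes in $r$.

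This argument is essentially a quantifier clash: the completion of a write forces restorability over \emph{all} $(n-f)$-subsets simultaneously, while the adversary $Ad$ together with the cap $|F(t)| \le f$ always exhibits \emph{one} offending subset. The only point requiring care — and the closest thing to an obstacle — is ensuring that the time $t$ supplied by Observation~\ref{obs:cannotComplete} is a time at which Lemma~\ref{lem:secret} is applicable. This is immediate here, since the corollary's hypothesis that $|F(t')| \le f$ holds for \emph{all} $t'$ in $r$ makes the lemma available at every time, and in particular at the $t$ produced by the observation, so the two results meet at the same configuration and the contradiction goes through.
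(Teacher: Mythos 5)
Your proof is correct and follows essentially the same route as the paper's: the paper likewise derives the corollary by noting that Lemma~\ref{lem:secret} (applicable at every time since $|F(t)|\le f$ throughout) exhibits, for each $t$, an $(n-f)$-subset from which no pending write's value is restorable, so the necessary condition of Observation~\ref{obs:cannotComplete} never holds and no write can return. Your contradiction framing and the paper's direct framing are the same argument.
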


Having shown that adversary $Ad$ can prevent progress in
algorithms that store $D$ bits of information in too few  
base objects, we turn to show that we can prevent
progress also in fair runs, leading to violation of
lock-freedom.

\begin{lemma}
\label{lem:impInvoke}

Consider a finite run $r$ with $t$ steps of an algorithm that
simulates a lock-free MWMR register, where the
environment behaves like adversary $Ad$.
If $C^-(t) \neq \{\}$ and $|F(t)| \leq f$, then it is possible to
extend $r$ by allowing the environment to
continue to behave like $Ad$ up to a time $t' \geq t $ when
either $|F(t')|>f$ or some client $c_i \in C(t')$ either returns
(i.e., completes the write) or receives a response from some base
object.
\end{lemma}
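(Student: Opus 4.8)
The plan is to show that once the adversary $Ad$ continues from the configuration at time $t$, one of the three target events must occur within finitely many steps, and I would argue this by contradiction: suppose $Ad$ can run forever from time $t$ without ever reaching a $t'$ at which $|F(t')| > f$, some client returns, or some client receives a base-object response. The first move is to see how this assumption constrains $Ad$'s two rules. Rule~1 fires only when some client in $C^-$ has a pending RMW on a base object in $N \setminus F$, and whenever it fires it delivers a response; so never delivering a response means rule~1 never fires. Consequently $Ad$ uses rule~2 at every step, and rule~2 triggers RMWs without letting them affect base objects. Hence no RMW ever takes effect and the base-object states are frozen from $t$ onward: $F(t') = F(t)$ with $|F(t')| \le f$ for all $t' \ge t$ (consistent with Observation~\ref{obs:fconst}), and no new bits are stored anywhere, so any client that is in $C^-(t)$ remains in $C^-$ throughout the extension (a client could only acquire $c_i$'s bits by reading them from a base object, and the base objects never hold them).

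Next I would fix a client $c_i \in C^-(t)$ (nonempty by hypothesis), keep it correct, and reduce the whole proof to forcing $c_i$ to eventually \emph{trigger} an RMW on some base object in $N \setminus F(t)$. Once it does so, that RMW is pending on a base object in $N \setminus F(t)$ and is owned by a client still in $C^-$, so at the very next step rule~1 becomes applicable and $Ad$ delivers its response, a target event that contradicts the assumption. (If some other client triggers on $N \setminus F(t)$ first, rule~1 fires even earlier; and if letting the RMW take effect pushes its base object into $F$, then $|F| > f$, which is again a target event. Either way a target event occurs in finite time.)

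The crux, and the step I expect to be the main obstacle, is to rule out that $c_i$ makes ``progress'' forever using only the at-most-$f$ base objects in $F(t)$, i.e.\ to prove it must eventually address some object in $N \setminus F(t)$. I would obtain this by an indistinguishability argument against a genuinely fair run. Construct a run $r'$ that agrees with the frozen $Ad$ extension on $c_i$'s steps, but in which all base objects in $F(t)$ and all clients other than $c_i$ crash at time $t$, while the surviving objects $N \setminus F(t)$ (at least $n-f$ of them) are scheduled fairly. From $c_i$'s local viewpoint the two runs are indistinguishable up to the first moment $c_i$ addresses an object in $N \setminus F(t)$: in both it receives no responses from the objects of $F(t)$ (infinitely delayed in the $Ad$ run, crashed in $r'$), and the other clients' actions never change any base-object state. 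Since $c_i$ is deterministic, it executes exactly the same sequence of actions in both runs until that moment.

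Finally, in $r'$ the only correct client with an outstanding operation is $c_i$, so by lock-freedom its write must complete at some finite time. By Observation~\ref{obs:cannotComplete}, completion requires $c_i$'s value to be restorable from every $(n-f)$-subset of $N$, and since all of $F(t)$ is crashed in $r'$, this forces $c_i$ to store $D$ bits of its value on objects of $N \setminus F(t)$, hence to trigger (and get responses to) RMWs there. Thus $c_i$ addresses a base object in $N \setminus F(t)$ at some finite time in $r'$, and by indistinguishability it does so at the corresponding finite time in the $Ad$ extension, triggering rule~1 and the desired target event. This contradicts the assumption that no target event ever occurs, so the required time $t'$ exists, completing the proof.
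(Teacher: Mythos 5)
Your proof is correct, but it takes a genuinely different route from the paper's. The paper argues forward and iteratively: it fixes $c_i\in C^-(t)$, tracks the set $T_{c_i}$ of base objects on which $c_i$ has pending RMWs, runs $Ad$ until $c_i$ is next scheduled, and at that point considers an indistinguishable fair run in which all other clients crash, concluding from lock-freedom that $c_i$ must either return or trigger an RMW on a \emph{new} base object; repeating this at most $f+1$ times forces $|T_{c_i}|>f$, so some pending RMW lies outside $F$ and rule~1 delivers a response. You instead argue by contradiction globally: if no target event ever occurs then rule~1 never fires, so the entire storage is frozen and $F$ is constant, and a single indistinguishability argument against one fair run $r'$ (in which all of $F(t)$ and all clients other than $c_i$ crash at time $t$) shows via lock-freedom that $c_i$ must eventually return or address an object of $N\setminus F(t)$, either of which is a target event. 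Your one-shot construction avoids the bookkeeping of $T_{c_i}$ across iterations and makes the ``frozen system'' consequence of the contradiction hypothesis explicit; the paper's version yields a concrete bound of at most $f+1$ extension rounds and crashes no base objects in the auxiliary run, whereas yours crashes the $|F(t)|\le f$ objects of $F(t)$ (still within the failure model). Two small remarks: your appeal to Observation~\ref{obs:cannotComplete} in the last step is superfluous and quietly imports weak regularity, which this lemma does not assume --- you do not need it, since ``$c_i$ returns in $r'$ before touching $N\setminus F(t)$'' is already a target event by indistinguishability, and the only remaining alternative (triggering RMWs on $F(t)$ objects forever) directly violates lock-freedom in the fair run $r'$. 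Also, transferring the finite-time trigger from $r'$ back to the $Ad$ extension implicitly uses that $c_i$ is scheduled infinitely often under rule~2's round robin, which the definition of $Ad$ does guarantee, but is worth stating.
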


\begin{proof}

Consider a client $c_i \in C^-(t)$, and denote by $T_{c_i}(t)$
the set of base objects on which $c_i$ has pending RMWs
at time $t$.
We first show that if $c_i$ neither receives a response from any
base object nor returns, we can extend $r$ to some time $t''$
s.t.\ $|T_{c_i}(t'')|>f$ at time $t''$.

We extend $r$ by allowing the
environment to continue to behave like $Ad$ until the first time
$t'$ in which $c_i$ is the next client chosen by the
adversary to trigger an RMW.
If $c_i$ receives a response from some base object by time $t'$,
we are done.
Else, by definition of $Ad$, $T_{c_i}(t') \subseteq 
F(t') $.
Now consider a fair run $r'$ that is
identical to $r$ till time $t'$, and at time $t'$ all the
clients except $c_i$ fail.
Notice that $|T_{c_i}(t')| \leq |F(t')| \leq f$, so
$c_i$ cannot wait for responses from base objects in
$T_{c_i}(t')$, and therefore, by lock-freedom, $c_i$ either
returns, or triggers an RMW on some base object in $N\setminus
T_{c_i}(t')$ at time $t'$ in $r'$.
The runs $r$ and $r'$ are indistinguishable to $c_i$,
hence, $c_i$ either returns or triggers an RMW
on some base object in $N\setminus T_{c_i}(t')$ at time $t'$ in
$r$.
If $c_i$ returns we are done.

We repeat this extension several times until, (after at
most $f+1$ times), at some time $t''$, $|T_{c_i}(t'')|>f$. If
$|F(t'')|>f$, we are done. Otherwise, $T_{c_i}(t'')
\not\subseteq F(t'')$, and therefore, $Ad$ schedules a response
to one of the pending RMWs of $c_i$ at time $t''$.

\end{proof}

\begin{theorem}
\label{lem:theorem}

For any $S$, there is no 
algorithm that simulates a weakly regular lock-free MWMR register
with less storage than $S$ s.t.\
at every time $t$, $|F(t)| \leq f$.

\end{theorem}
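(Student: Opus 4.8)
The plan is to derive a contradiction from the assumption that such an algorithm exists, by using the adversary $Ad$ together with an unbounded supply of writers to force an infinite fair run in which no write ever completes, violating lock-freedom. Suppose toward a contradiction that for some $S$ there is an algorithm $A$ simulating a weakly regular lock-free MWMR register, using storage less than $S$, such that $|F(t)|\leq f$ at every time $t$ in every run. The key observation is that by Observation~\ref{obs:strangeValues}, at any time the storage size is at least $|C^+(t)|$ bits, so the bound $S$ on storage forces $|C^+(t)| < S$ at all times. Since $|F(t)|\leq f$ throughout, Corollary~\ref{cor:cannotcomplete} already tells us that no write completes as long as the environment behaves like $Ad$; the remaining work is to build a single \emph{fair} run with these properties and with infinitely many outstanding writes, so that lock-freedom is contradicted.

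The construction proceeds by building an infinite run $r$ in stages, letting the environment behave like $Ad$ throughout. The idea is to invoke fresh writers from the unbounded client pool $\Pi$ whenever the adversary would otherwise stall. Concretely, I would interleave two activities: (1) whenever $C^-(t)\neq\{\}$, apply Lemma~\ref{lem:impInvoke} to extend $r$ until some client in $C(t')$ either returns or receives a response from a base object (the case $|F(t')|>f$ is ruled out by assumption); and (2) whenever $C^-(t)=\{\}$ but $C(t)\neq\{\}$, the adversary's second rule fires, so we have a fresh client from $\Pi$ invoke a new write, moving it into $C^-$ and re-enabling Lemma~\ref{lem:impInvoke}. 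Because $\Pi$ is unbounded, we can always supply a new writer; each newly invoked write eventually stores a bit and joins $C^+$, but by the adversary's first rule it is then frozen and can never complete. By Corollary~\ref{cor:cannotcomplete}, no write returns in $r$, yet correct clients always have outstanding operations.

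The remaining obligation is to argue that $r$ can be made \emph{fair}, since $Ad$ is explicitly an unfair scheduler. Fairness requires that every RMW triggered by a correct client on a correct base object eventually gets a matching response, and that every correct client gets infinitely many opportunities to trigger RMWs. The second condition follows directly from the round-robin in rule~2 of $Ad$. For the first, the argument is that the only RMWs $Ad$ delays indefinitely are those triggered by clients in $C^+$ and those on base objects in $F$; we designate exactly these clients and base objects as \emph{faulty} in $r$. This is legitimate precisely because $|F(t)|\leq f$, so at most $f$ base objects are ever frozen, matching the failure budget; and the frozen clients, being infinitely many over the run, can each be treated as crashing after storing their single bit (crashing clients need not complete, and the model permits any number of client failures). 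Every RMW of a \emph{correct} client on a \emph{correct} base object is therefore eventually scheduled by one of the two rules, making $r$ fair.

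The step I expect to be the main obstacle is this fairness bookkeeping: I must show that the set of base objects I am forced to declare faulty stays within the budget $f$ at once and \emph{stabilizes} (which is exactly what Observation~\ref{obs:fconst} provides, since membership in $F$ is monotone), and that the infinitely many frozen writers can be consistently labeled faulty without ever declaring a client faulty whose RMW I still need to schedule for fairness. Once this is nailed down, the contradiction is immediate: $r$ is a fair run of a lock-free register in which a correct client has an outstanding operation forever yet no operation ever completes, contradicting lock-freedom and establishing the theorem.
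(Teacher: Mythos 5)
Your overall architecture matches the paper's: run the adversary $Ad$, use Lemma~\ref{lem:impInvoke} to keep extracting responses, use Corollary~\ref{cor:cannotcomplete} to rule out completions, and finally convert the unfair $Ad$-run into a fair one by declaring the permanently stalled base objects (at most $f$ of them, stabilized via Observation~\ref{obs:fconst}) and the starved clients faulty. But there is a genuine gap at the last step, and it is caused by your decision to inject \emph{infinitely many} fresh writers. The contradiction you need is a \emph{fair} run in which some \emph{correct} client has an outstanding operation forever while nothing completes. In your construction every writer that makes progress stores a bit, enters $C^+$, is frozen by rule~1 of $Ad$, and is then labeled faulty to preserve fairness. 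Nothing prevents this fate from befalling \emph{every} writer, in which case the limit run has no correct client with an outstanding operation, lock-freedom is vacuously satisfied, and there is no contradiction. You flag the bookkeeping as ``the main obstacle'' but treat it as a fairness accounting issue; the real problem is that your construction may fail to produce the witness client at all. (A secondary point: the model takes $\Pi$ to be finite, so an unbounded writer pool is not actually available.)

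The paper closes this gap with a counting argument that your first paragraph already contains the seed of but does not exploit. It invokes exactly $S$ writes at the outset and never invokes more: since no write completes, $|C(t)|=S$ forever, and since the storage is below $S$ bits, Observation~\ref{obs:strangeValues} gives $|C^+(t)|<S$, hence $C^-(t)\neq\emptyset$ at all times \emph{without any new invocations} --- so Lemma~\ref{lem:impInvoke} can be applied forever and your case~(2) never arises. This keeps the client set finite, and then the pigeonhole does the rest: infinitely many responses spread over $S$ clients means some client receives infinitely many responses; that client acts infinitely often, so it cannot be declared faulty in the derived fair run $r'$, yet its write never returns --- the desired violation of lock-freedom. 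You should replace your unbounded-injection scheme with this fixed pool of $S$ writers; the rest of your argument then goes through essentially as the paper's does.
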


\begin{proof}
\label{theorem:storage}

Assume by way of contradiction that there is such an algorithm,
$A$. We build a run of $A$ in which the
environment behaves like adversary $Ad$.

We iteratively build a run $r$ with infinitely many responses,
starting by invoking $S$ write operations and allowing the run
to proceed according to $Ad$ until some time $t$.
By the assumption, the storage is less than
$S$, so by Observation \ref{obs:strangeValues}, $|C^+(t)| <S$,
and since $|C(t)|=S, $ $C^-(t) \neq \{\}$.
Now since $|F(t)| \leq f$, by Lemma \ref{lem:impInvoke}, we can
extend $r$ to a time $t'$, where the environment behaves like
$Ad$ until time $t'$ and some client $c_i \in C^-(t')$ either
returns or receives a response from some base object at time $t'$.
By Corollary \ref{cor:cannotcomplete}, $c_i$ does not
return, and thus, it receives a response.

By repeating this process, we get a run $r$ with infinitely many
responses. By Observation \ref{obs:fconst}, and by the assumption
that $|F(t)| \leq f$, there is a time $t_1$ in $r$ s.t.\ for any
time $t_2>t_1$, $F(t_1)=F(t_2)$.
Notice that by the adversary's behavior, each correct
client gets infinitely many opportunities to trigger RMWs. In
addition, since $Ad$ picks responses from base objects not in
$F(t)$ in the order they are triggered, every client that
receives infinitely many responses, receives a response to every
RMW it triggers on a base object in $N \setminus F(t_1)$.
Therefore, we can build a fair run $r'$ that is identical to $r$
but every base object $bo \in F(t_1)$ fails at time $t_1$, and
every client that receives finitely many responses fails after
its last response. Since there are infinitely many responses in
$r'$ and the number of clients invoking operations in this
run is finite, there is at least one client that receives
infinitely many responses in $r'$, and thus is correct in $r$.
Therefore, by lock-freedom, some client eventually completes its
write operation in $r'$. Since $r$ and $r'$ are
indistinguishable to all clients and base objects that are
correct in both, the same is true in $r$. A contradiction to
Corollary \ref{cor:cannotcomplete}.

\end{proof}

\noindent From Theorem \ref{lem:theorem}, it follows that if the
storage is bounded, then there is a time in which
$f+1$ base objects store $D$ bits of some \emph{write}. This
yields the following bound:

\begin{corollary}

There is no algorithm that
simulates a weakly regular lock-free MWMR register and stores
less than $(f+1)D$ bits in the worst case. 

\end{corollary}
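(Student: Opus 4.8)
The plan is to obtain the corollary as an immediate consequence of Theorem~\ref{lem:theorem} by instantiating its free parameter at $S = (f+1)D$ and arguing by contradiction. Suppose, toward a contradiction, that there were an algorithm $A$ simulating a weakly regular lock-free MWMR register whose worst-case storage is strictly less than $(f+1)D$ bits; equivalently, $A$ stores fewer than $(f+1)D$ bits at every time of every run. I would feed exactly this hypothesis into Theorem~\ref{lem:theorem}.

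Since $A$ uses less storage than $S = (f+1)D$, Theorem~\ref{lem:theorem} forbids $A$ from also satisfying ``$|F(t)| \le f$ at every time $t$''. Reading this as the contrapositive, the theorem guarantees that there must exist some run $r$ of $A$ and some time $t$ at which $|F(t)| \ge f+1$. At this point I would unfold the definition of $F(t)$: every base object in $F(t)$ stores at least $D$ bits of some write at time $t$. Because $F(t)$ consists of at least $f+1$ \emph{distinct} base objects, and the bits held in distinct base objects occupy disjoint physical storage, their individual $D$-bit contributions \emph{add up} rather than merely bound a maximum. Hence the aggregate storage at time $t$ is at least $|F(t)| \cdot D \ge (f+1)D$ bits, contradicting the standing assumption that $A$ never reaches $(f+1)D$ bits. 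This contradiction shows no such $A$ exists, establishing that every admissible algorithm stores at least $(f+1)D$ bits in the worst case.

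I do not expect a genuine obstacle here, since the substantive work is already carried out in Theorem~\ref{lem:theorem} together with adversary $Ad$ and Lemma~\ref{lem:secret}. The only points requiring care are bookkeeping ones: first, identifying the phrase ``less storage than $S$'' in Theorem~\ref{lem:theorem} with ``stores fewer than $(f+1)D$ bits in the worst case'' in the corollary, so that the theorem applies verbatim; and second, justifying that the $f+1$ base objects comprising $F(t)$ contribute \emph{summed} rather than \emph{shared} storage, which is exactly what lets the per-object threshold of $D$ bits aggregate to the desired $(f+1)D$ total.
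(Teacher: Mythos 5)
Your proposal is correct and follows essentially the same route as the paper: the paper also derives the corollary directly from Theorem~\ref{lem:theorem} by observing that bounded storage forces a time at which $f+1$ base objects each hold $D$ bits of some write, whence the total is at least $(f+1)D$. Your explicit instantiation $S=(f+1)D$ and the remark that distinct base objects contribute disjoint, summed storage just make the paper's one-line deduction precise.
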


\vspace*{-.3cm}
\section{Strongly Regular MWMR Register Emulation}
\label{sec:WRAlgorithms}

We present a storage algorithm that combines full replication
with erasure coding in order to achieve the advantages
of both.
The main idea behind our algorithm is to have base objects store
pieces from at most $k$ different \emph{writes}, and then turn
to store full replicas.
In Appendix \ref{AppSub:WR}, we prove
the following about our algorithm:\\

\begin{thmclone}{theorem:WR}
There is an FW-terminating algorithm that
simulates a strongly regular register, whose storage is bounded
by $(2f+k)2D$ bits, and in runs with at most $c < k$ concurrent
writes, the storage is bounded by $(c+1)D/k$ bits. Moreover, in
a run with a finite number of writes, if all the writers are
correct, the storage is eventually reduced to $(2f+k)D/k$ bits.\\
\end{thmclone}

\textbf{Data structure}~~~  
The algorithm uses the same definitions as the safe one
(Section \ref{sec:WFsafe}), given in
Algorithm~\ref{alg:WFdefinitions}, and
its pseudocode appears in Algorithms \ref{alg:Operations} and
\ref{alg:weakProcedures}. The algorithm relies on a set of $n$
shared base objects $bo_1,\dots,bo_n$ each of which consists of
three fields $V_p$, $V_f$, and $storedTS$:

\vspace*{-.3cm}
\begin{framed}
\begin{center}
$bo_i = \langle storedTS, V_p,V_f \rangle$ s.t.\ $V_f,V_p
\subset Chunks$, and $storedTS \in TimeStamps$,\\
initially $\langle \langle 0,0
\rangle, \{\langle \langle 0,0 \rangle , \langle v_{0_i}, i
\rangle \rangle\} , \{\} \rangle$.
\end{center}
\end{framed}

\vspace*{-.3cm}
The $V_p$ field holds a set of timestamped coded pieces of
values so that the $i^{th}$ piece of any value can only be
stored in the $V_p$ field of object $bo_i$. The $V_f$ field
stores a timestamped replica of a {\em single}\/ value, (which
for simplicity is represented as a set of $k$ coded pieces). And
$storedTS$ holds the highest timestamp of a write that is known
to this object to have completed the update round on $n-f$ base
objects (see below).

\textbf{Write operation and storage efficiency}~~~ 
The write operation (lines
\ref{line:reWbegin}--\ref{line:reWend}) consists of 3
sequentially executed rounds: {\em read timestemp}, {\em
update}, and {\em garbage collection}; and, the read consists of
one or more sequentially executed {\em read}\/ rounds.  
At each round, the client invokes RMWs on all base objects in
parallel, and awaits responses from at least $n-f$ base objects.
The read rounds of both write and read rely on the readValue
routine (lines
\ref{line:regularRVbegin}--\ref{line:regularRVend}) to collect
the contents of the $V_p$ and $V_f$, fields stored at $n-f$ base
objects as well as to determine the highest $storedTS$ timestamp
known to these objects.
The implementations of the update and garbage collection
rounds are given by the update (lines
\ref{line:updateBegin}--\ref{line:updateEnd}) and GC (lines
\ref{line:freeBegin}--\ref{line:freeEnd}) routines, respectively.

The write
implementation starts by breaking the supplied value $v$ into
$k$ erasure-coded pieces (line \ref{line:reWencode}).
This is followed by invoking the read round where the client
uses the combined contents of the $V_p$, $V_f$ and $storedTS$
fields returned by readValue to determine the timestamp $ts$ to
be stored alongside $v$ on the base object. This timestamp is
set to be higher than any other timestamp that has been returned
(line \ref{line:reWts}) thus
ensuring that the order of the timestamps associated with the
stored values is compatible with the order of their
corresponding writes (which is essential for regularity).

The client then proceeds to the update round where it attempts
to store the $i^{th}$ coded piece $\langle e, i \rangle$ of $v$
in $bo_i.V_p$ if the size of $bo_i.V_p$ is less than $k$ (lines
\ref{line:storeOnePiece}), or its full replica in $bo_i.V_f$ if
$ts$ is higher than the timestamp associated with the value
currently stored in $bo_i.V_f$ (line \ref{line:StorekPieces}).
Note that storing $\langle e, i \rangle$ in $bo_i.V_p$ coincides
with an attempt to reduce its size by removing stale coded
pieces of values whose timestamps are smaller than $storedTS$
(line \ref{line:storeOnePiece}).
This guarantees that the size of $V_p$ never exceeds the number
$c < k$ of concurrent writes, which is a key for achieving our
adaptive storage bound. Lastly, the client updates
$bo_i.storedTS$ so as its new value is at least as high as the
one returned by the readValue routine.
This allows the timestamp associated with the latest complete
update to propagate to the base object being written, in order
to prevent future writes of old pieces into this base object.

In the write's garbage collection round, the client attempts to
further reduce the storage usage by (1) removing all coded
pieces associated with timestamps lower than $ts$ from both
$bo_i.V_p$ and $bo_i.V_f$ (lines
\ref{line:keepnew1}--\ref{line:keepnew2}), and (2) replacing a full replica (if
it exists) of its written value $v$ in $bo_i.V_f$ with its
$i^{th}$ coded piece $\langle e, i \rangle$ (line
\ref{line:keepone}).
It is safe to remove the full replica and values with older
timestamps at this point, since once the update round has
completed, it is ensured that the written value or a newer
written value is restoreable from any $n-f$ base objects.
This mechanism ensures that all coded pieces except the ones
comprising the value written with the highest timestamp are
eventually removed from all objects' $V_p$ and $V_f$ sets, which
reduces the storage to a minimum in runs with
finitely many writes, which all complete. 
The
garbage collection round also updates the $bo_i.storedTS$ field
to ensure its value is at least as high as $ts$, reflecting the
fact that a write with $ts'>ts$ that the update
round.

\textbf{Key Invariant and read operation}~~~ 
The write implementation described above guarantees the
following key invariant: 
at all times, a value written by either the latest complete write
or a newer write is available from every set consisting of at
least $n-f$ base objects
(either in the form of $k$ coded pieces in the objects' $V_p$
fields, or in full from one of their $V_f$ fields). 
Therefore, a read will always be able to reconstruct the latest
completely written or a newer value provided it can successfully
retrieve $k$ matching pieces of this value.
However, 
a read round may sample different base objects at different
times (that is, it does not necessarily obtain a snapshot of all
base objects), and 
the number of pieces stored in $V_p$ is
bounded. Thus, the read may be unable to see $k$
matching pieces of any single new value for indefinitely long,
as long as new values continue to be written concurrently
with the read. 

To cope with such situations, the reads are only required
to return in runs where a finite number of writes are invoked,
thus only guaranteeing FW-Termination. Our implementation of
read (lines \ref{line:reRbegin}--\ref{line:reRend}) proceeds by invoking multiple consecutive
rounds of RMWs on the base objects via the readValue routine.
After each round, the reader examines the collection of the
values and timestamps returned by the base objects to determine
if any of the values having $k$ matching coded pieces are
associated with timestamps that are at least as high as
$storedTS$ (line \ref{line:SRwhile}).
If any such value is found, the one associated with the highest timestamp is
returned (line \ref{line:reReturn}).
Otherwise, the reader proceeds to invoke another round of base
object accesses.
Note that returning values associated
with older timestamps may violate regularity, since they may
have been written earlier than the write with timestamp
$storedTS$, which in turn may have completed before the read
was invoked.

\begin{algorithm}[H]

 \caption{Strongly regular register emulation. Algorithm for
 client $c_j$.}
 \label{alg:Operations} 
\begin{algorithmic}[1]
\small

\State \textbf{local variables:} 
\State  \hspace*{0.5cm} $storedTS,ts \in TimeStamp$, $WriteSet
\in Pieces$


\Operation{$Write(v)$}{}
\label{line:reWbegin}

\State $WriteSet\leftarrow \emph{encode}(v)$   
\label{line:reWencode}

\State $ \langle \emph{storedTS} ,ReadSet \rangle \leftarrow
\emph{readValue}()$
\Comment{\textbf{round 1:} read timestamps}

\State $n \leftarrow max(storedTS.num,~ max\{n' \mid
\langle\langle n',*\rangle, * \rangle \in ReadSet \})$
\label{line:reWts}
\State $ts \leftarrow \langle n+1,j \rangle$

\State \textbf{$||$ for} i=1 to n 
\Comment{\textbf{round 2:} update}
\State \hspace*{0.5cm}
$update(bo_i,\emph{WriteSet},ts,\emph{storedTS},i)$

\State \textbf{wait} for  $n-f$ responses

\State \textbf{$||$ for} i=1 to n 
\Comment{\textbf{round 3:} garbage collect}

\State \hspace*{0.5cm}  $\emph{GC}(bo_i,\emph{WriteSet},
ts,i)$  
\State  \textbf{wait} for $n-f$ responses

\State \textbf{return} ``ok''

\EndOperation
\label{line:reWend}

\Operation{$Read()$}{}
\label{line:reRbegin}

\State $\langle \emph{storedTS}, ReadSet \rangle \leftarrow
\emph{readValue}()$
\State \textbf{while}  $\nexists ts \geq \emph{storedTS}$ s.t.\
$|\{\langle ts,v \rangle \mid \langle ts,v \rangle \in ReadSet
\} | \geq k$ 
\label{line:SRwhile}
	\State \hspace*{0.5cm} $\langle \emph{storedTS}, ReadSet
	\rangle \leftarrow \emph{readValue}()$

\State $ts' \leftarrow \max\limits_{\raise3pt\hbox{$\scriptstyle
ts \geq \emph{storedTS}$}}{}( |\{\langle ts,v \rangle \mid \langle ts,v
\rangle \in ReadSet \} | \geq k)$
\label{line:SR17}

\State \textbf{return} $\emph{decode} ( \{v \mid
\langle ts',v \rangle \in ReadSet \})$
\label{line:reReturn}

\EndOperation
\label{line:reRend}

\algstore{bkbreak}

\end{algorithmic}
\end{algorithm}

\vspace*{-.5cm}

\begin{algorithm}[H]
 \caption{Functions used in strongly regular register
 emulation.}
 \label{alg:weakProcedures} 
\begin{algorithmic}[1]
\algrestore{bkbreak}
\small

\Procedure{$readValue()$}{}
\label{line:regularRVbegin}

\State $\emph{ReadSet} \leftarrow \{\}$, $T \leftarrow \{\}$

\State \textbf{$||$ for} i=1 to n
\State \hspace*{0.5cm} $tmp \leftarrow read(bo_i)$
\State \hspace*{0.5cm} $ReadSet \leftarrow ReadSet \cup
tmp.V_f \cup tmp.V_p$
\State \hspace*{0.5cm} $T \leftarrow T \cup \{ tmp.storedTS\}$

\State \textbf{wait} for  $n-f$ responses
\State \textbf{return}    $ \langle max(T),\emph{ReadSet}
\rangle$

\EndProcedure
\label{line:regularRVend}

\State \textbf{update}$(bo,\emph{WriteSet},ts,\emph{storedTS},i)
\triangleq$
\label{line:updateBegin}

\State \hspace*{0.5cm} \textbf{if} $ts \leq bo.\emph{storedTS}$
\State \hspace*{1cm} \textbf{return}
\State \hspace*{0.5cm} \textbf{if} $|bo.V_p|< k $ 
\Comment{write a piece and remove old pieces}

\State \hspace*{1cm} $bo.V_p  \leftarrow  bo.V_p \setminus
\{ \langle ts',v \rangle \in bo.V_p \mid ts' < \emph{storedTS}
\} \cup \{\langle ts, \langle e,i \rangle \rangle \mid \langle
e, i \rangle \in \emph{WriteSet} \}$
\label{line:storeOnePiece}

\State \hspace*{0.5cm} \textbf{else if}  $bo.V_f = \{\} ~ \vee$
$ \exists ts' < ts : \langle ts', * \rangle \in bo.V_f$
\Comment{write a full replica}

\State \hspace*{1cm} $bo.V_f \leftarrow \{\langle ts, \langle
e,j \rangle  \rangle \mid  \langle  e,j \rangle \in
\emph{WriteSet} ~ \wedge j \in \{1,\ldots, k\} \}$
\label{line:StorekPieces}

\State \hspace*{0.5cm} $bo.\emph{storedTS} \leftarrow
max(bo.\emph{storedTS},\emph{storedTS})$
\label{line:updateEnd}

\State \textbf{GC}$(bo,\emph{WriteSet},ts,i)$ $\triangleq$ 
\label{line:freeBegin} 

\State \hspace*{0.5cm} $bo.V_p \leftarrow \{ \langle 
ts',v\rangle \in bo.V_p | ts' \geq ts \}$
\label{line:keepnew1} 
\Comment{keep only new pieces}
\State \hspace*{0.5cm} $bo.V_f \leftarrow \{ \langle 
ts',v\rangle \in bo.V_f | ts' \geq ts \}$
\label{line:keepnew2} 

\State \hspace*{0.5cm} \textbf{if} $\langle ts,*\rangle \in
 bo.V_f$
 \Comment{if $V_f$ holds a full replica of my write}
  
\State \hspace*{1cm} $bo.V_f \leftarrow \{\langle
ts, \langle e,i \rangle \rangle \mid \langle e, i  \rangle
\in \emph{WriteSet} \}$
\label{line:keepone} 
\Comment{keep only one piece of it} 

\State \hspace*{0.5cm} $bo.\emph{storedTS} \leftarrow
max(bo.\emph{storedTS},ts)$
\label{line:freeEnd} 

\Statex

\end{algorithmic}
\end{algorithm}

\section{Discussion}
\label{sec:discussion}

We studied the storage cost of 
shared register simulations in asynchronous fault-prone shared
memory. We proved a lower bound on the required storage  
of any lock-free algorithm that simulates a
weakly regular MWMR register. Our bound
stipulates that if write concurrency is unbounded, then
either (1) there is a time during which there exist $f+1$ base
objects each of which stores a full replica of some
written value, or (2) the storage can grow without
bound.

We showed that our lower 
bound does not hold for safe register emulation. 
And finally, by understanding these inherent limitations, 
we introduced a new technique for emulating shared storage by
combining full replication with erasure codes.
We presented an implementation of
an FW-Terminating strongly regular MWMR register, whose storage
cost is adaptive to the concurrency level of write operations up
to certain point, and then turns to store full replicas. In
periods during which there are no outstanding writes, our
algorithm's storage cost is reduced to a minimum.

Our work leaves some questions open for future work. First, we
conjecture that a wait-free implementation with similar storage
costs requires readers to write. Second, our algorithm requires
more storage than the bound. 
We believe that our technique can be used for implementing
additional adaptive algorithms, with storage costs
closer to the lower bound. 
Another 
interesting question that remains open is whether the
liveness condition of the lower bound is tight. In other words,
is there an algorithm that emulates an obstruction-free weakly
regular register with a better storage cost.

\newpage
\appendix
\section{Correctness Proofs}
\label{App:AppendixA}

\subsection{Wait-Free and Safe Algorithm}
\label{AppSub:WF}

Here we prove the algorithm in Section \ref{sec:WFsafe}.

\begin{lemma}
\label{lem:WFstorage}

The storage of the algorithm is $nD/k$.

\end{lemma}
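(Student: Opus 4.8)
The plan is to establish the claim as an invariant on the contents of the base objects that holds throughout any run. First I would observe, from the definitions in Algorithm~\ref{alg:WFdefinitions} and the specified initial configuration, that each base object $bo_i$ holds exactly one element of $Chunks$, i.e., a single pair consisting of a piece from $Pieces$ together with a timestamp, starting from $\langle \langle v_{0_i}, i \rangle, \langle 0, 0 \rangle \rangle$. The crucial point is that the only RMW that ever modifies a base object is the \emph{update} function in Algorithm~\ref{alg:WFoperations}, and inspecting its code shows that it \emph{overwrites} the stored chunk with $\langle w, ts \rangle$ rather than appending to it. Hence the number of stored chunks per base object is invariant under every RMW: it begins at one and remains one in every reachable configuration.

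Next I would account for the bits. By the erasure-code assumption in Section~\ref{sub:erasureCode}, each piece (an element of $\mathbb{E}$) has size $D/k$ bits. Since each of the $n$ base objects stores exactly one such piece at all times, summing over the $n$ objects gives total data storage $n \cdot (D/k) = nD/k$ bits, as claimed. The argument is thus a one-line count once the overwrite invariant is in place.

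The only subtlety worth flagging is the treatment of the timestamp component of each chunk, which also carries a value from $Timestamps$ together with an index $i \in \{1,\ldots,n\}$. These are meta-data whose size depends only logarithmically on the number of writes and on $n$, and not on the data domain $\mathbb{V}$; consistent with the way the paper measures storage cost in data bits, I would exclude them from the count, so that the stated bound $nD/k$ refers to the stored coded data. I expect no genuine obstacle here: the entire lemma rests on the single observation that \emph{update} overwrites in place, keeping exactly one piece per base object at every point of every run, and the constant factor $n/k$ is then immediate.
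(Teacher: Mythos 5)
Your proof is correct and takes essentially the same route as the paper's (which simply notes that each of the $n$ base objects stores exactly one piece of size $D/k$); you merely spell out the overwrite invariant behind ``exactly one piece'' and the exclusion of timestamp meta-data, both of which the paper leaves implicit.
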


\begin{proof}

The size of each piece is $D/k$. We have $n$ base objects, and
each base object stores exactly one piece.

\end{proof}

\begin{lemma}
\label{lem:WFwaitfree}

The algorithm is wait-free.

\end{lemma}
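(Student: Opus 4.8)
The plan is to show that every operation invoked by a correct client completes after finitely many of its own steps, regardless of what other clients do. The algorithm has exactly two operations, \emph{write} and \emph{read}, so I would handle each in turn. The only places where an operation can block are the two \textbf{wait} statements: line~\ref{line:WFwait1} in \emph{write}, where the client waits for $n-f$ responses to its \emph{update} RMWs, and line~\ref{line:WFwait2} inside \emph{readValue}, where it waits until $|R| \geq n-f$. Everything else (the \emph{encode}/\emph{decode} calls, computing the timestamp, the conditional return of $v_0$) is purely local and terminates trivially. So wait-freedom reduces entirely to arguing that these two waits always unblock.

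First I would invoke the fault model and the fairness guarantee. By assumption at most $f$ base objects may crash, and $f < n/2$, so at least $n-f > f$ base objects are correct in every run. By the definition of a fair run, every RMW triggered by a correct client on a correct base object eventually receives a matching response. Since in both operations the client triggers its RMWs \emph{in parallel on all $n$ base objects} (the $||\textbf{for}$ construct), and at least $n-f$ of these targets are correct, the client is guaranteed to eventually collect at least $n-f$ responses. Crucially, this holds irrespective of the actions of other clients: the \emph{update} RMW is an atomic read-modify-write on a single base object that always returns (it either overwrites or leaves the object unchanged based on the timestamp comparison, but in both cases responds), and likewise $read(bo_i)$ always responds. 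Hence neither wait can be held up by contention, only by base-object failures, which are bounded by $f$.

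Therefore I would argue: a \emph{write} completes because after \emph{encode} (local), \emph{readValue} returns (by the read-round argument below), the timestamp is computed locally, the $n$ \emph{update} RMWs are triggered in parallel, and by the fairness/fault argument at least $n-f$ of them respond, so the wait at line~\ref{line:WFwait1} unblocks and the client returns ``ok''. A \emph{read} completes because it calls \emph{readValue} once and then performs only local computation before returning (either a decoded value or $v_0$); \emph{readValue} itself unblocks at line~\ref{line:WFwait2} by the same $n-f$-correct-base-objects argument. Since there is no loop in either operation and no wait can block indefinitely, each operation terminates after finitely many steps of the invoking client.

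I do not expect a serious obstacle here; the claim is essentially structural. The one point deserving care is that \emph{readValue} returns regardless of concurrent writes: unlike a regularity-preserving read, this read never retries, so there is no risk of an unbounded loop, and the possibility of returning $v_0$ when fewer than $k$ matching pieces are found is precisely what makes termination unconditional (this is where safety, rather than regularity, is being exploited). If anything merits a sentence, it is confirming that the $n-f$ bound used in both waits is feasible, i.e.\ that $n - f$ correct base objects always exist, which follows immediately from $f < n/2$ and hence $n - f > f \geq$ the number of crashed objects.
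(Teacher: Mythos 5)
Your proposal is correct and follows essentially the same route as the paper's proof: identify the two \textbf{wait} statements as the only blocking points, note the absence of loops, and argue that since at most $f$ base objects fail and fairness guarantees responses from correct ones, each wait for $n-f$ responses eventually unblocks. Your version is merely more explicit about the fairness assumption and the local nature of the remaining steps.
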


\begin{proof}

There are no loops in the algorithm, and the only blocking
instructions are the waits in lines
\ref{line:WFwait1} and \ref{line:WFwait2}. In both cases, clients
wait for no more than $n-f$ responses, and since no more than
$f$ base objects can fail, clients eventually continue.
Therefore, a client that gets the opportunity to perform
infinitely many actions completes its operations.

\end{proof}

We now prove that the algorithm satisfies strongly safety. We
relay on the following single observation.

\begin{observation}
\label{obs:WFmon}

The timestamps in the base objects are monotonically increasing.  

\end{observation}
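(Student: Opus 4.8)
The plan is to prove this by direct inspection of the algorithm, appealing to the atomicity of RMW accesses. The essential point is that the state of a base object $bo_i$ --- and in particular its stored timestamp $bo_i.ts$ --- can only be altered by an RMW triggered on it, and in Algorithm \ref{alg:WFoperations} the only RMW that modifies a base object is the \emph{update} function; the \emph{read} RMW used by \emph{readValue} leaves the object's state untouched.

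First I would observe that each invocation of \emph{update}$(bo_i, \langle v,i\rangle, ts)$ overwrites $bo_i$ with the new chunk $\langle \langle v,i\rangle, ts \rangle$ only when its guard $ts > bo_i.ts$ holds, and otherwise leaves $bo_i$ unchanged. Hence every RMW that actually changes $bo_i$ replaces its stored timestamp with a strictly larger one, while every other RMW (including every \emph{read}) preserves it. Since RMWs are atomic, the guard is always evaluated against the current stored value, so there is no interleaving in which a stale read of the timestamp lets a smaller value overwrite a larger one.

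I would then conclude by a straightforward induction on the sequence of actions in a run $r$: the timestamp component of $bo_i$'s state is $\langle 0,0 \rangle$ in the initial configuration and is never decreased by any action, so for any two times $t \leq t'$ the value of $bo_i.ts$ at time $t$ is at most its value at time $t'$. This is exactly the claimed monotonicity, and since each base object is treated independently the argument applies uniformly to all of $N$.

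There is no real obstacle here; the only subtlety worth flagging is the reliance on the atomicity of the read-modify-write primitive, which is precisely what rules out the lost-update interleaving mentioned above. Everything else follows immediately from the guard in the \emph{update} function.
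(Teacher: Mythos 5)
Your proof is correct and matches the paper's reasoning: the paper states this observation without an explicit proof, treating it as immediate from the guard $ts > bo.ts$ in the \emph{update} RMW, which is exactly the argument you spell out. Your additional remarks on RMW atomicity and the induction over actions are a faithful elaboration of what the paper leaves implicit, not a different approach.
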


\begin{definition}

For every run $r$, we define the sequential run $\sigma_{w_r}$ as
follows: All the completed write operations in $r$ are ordered in
$\sigma_{w_r}$ by their timestamp.

\end{definition}

\begin{lemma}
\label{lem:WFwrites}

For every run $r$, the sequential run $\sigma_{w_r}$ is a
linearization of $r$.

\end{lemma}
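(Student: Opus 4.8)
The plan is to verify the three requirements of a linearization of the write-subsequence of $r$ in turn: that $\sigma_{w_r}$ contains the correct set of operations (equivalence), that it respects $r$'s precedence order, and that it satisfies the object's sequential specification. Since $\sigma_{w_r}$ consists solely of write operations, the sequential-specification requirement is vacuous: any totally ordered sequence of writes is consistent with the register's sequential spec, as there are no reads whose returned values must be checked. Equivalence is also immediate, because $\sigma_{w_r}$ is defined to contain exactly the completed writes of $r$, and the definition of equivalence permits excluding outstanding (non-completed) writes; thus each client performs the same sequence of writes in $\sigma_{w_r}$ as in $r$. I would first record that $\sigma_{w_r}$ is well defined as a total order, since timestamps are distinct: two writes by different clients differ in the client-id component, and two writes by the same client receive strictly increasing numbers (a client is sequential, so its later write reads back its own earlier timestamp and adds one).

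The heart of the proof is the precedence requirement: if $w_i \prec_r w_j$ with both writes completed, I must show $ts_i < ts_j$, so that $w_i$ precedes $w_j$ in $\sigma_{w_r}$. The key structural fact is that any two sets of $n-f$ base objects intersect: since $n = 2f + k$ with $k \geq 1$, we have $2(n-f) = n + k > n$, so the two sets overlap (in at least $k$ objects). Because $w_i$ completed before $w_j$ was invoked, $w_i$'s $update$ RMWs took effect, stamping timestamp $ts_i$, on some set of $n-f$ base objects before $w_j$'s $readValue$ reads its own set of $n-f$ base objects. These two sets share at least one object $bo$.

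I would then invoke Observation~\ref{obs:WFmon} (timestamps in base objects are monotonically increasing): at the moment $w_j$ reads $bo$, its stored timestamp is at least $ts_i$, since it was already at least $ts_i$ when $w_i$ completed and can only have grown. Hence the largest timestamp number $w_j$ collects in $readValue$ is at least $ts_i.num$, and the rule $ts_j = \langle \max(\cdots) + 1, j \rangle$ forces $ts_j.num \geq ts_i.num + 1 > ts_i.num$, giving $ts_i < ts_j$ lexicographically. This shows $\sigma_{w_r}$ respects $\prec_r$ on writes, completing all three requirements.

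The main obstacle is this precedence step, and within it the care needed to argue that the intersecting object $bo$ still carries a timestamp $\geq ts_i$ at $w_j$'s read time; this is exactly where monotonicity (Observation~\ref{obs:WFmon}) and the quorum-intersection count are both essential. A minor point to handle cleanly is the strict increase of a single client's consecutive timestamps, which I would dispatch with the same intersection-plus-monotonicity reasoning applied to that client's two writes.
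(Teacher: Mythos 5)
Your proposal is correct and follows essentially the same route as the paper's proof: the sequential specification is vacuous for a write-only sequence, and the precedence requirement is established by combining the quorum intersection of two sets of $n-f$ base objects (since $n>2f$) with the timestamp monotonicity of Observation~\ref{obs:WFmon}, forcing $w_j$ to pick a strictly larger timestamp than any completed preceding write. Your additional remarks on equivalence and on the total order being well defined (distinct timestamps) are points the paper leaves implicit, but they do not change the argument.
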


\begin{proof}

Since $\sigma_{w_r}$ has no read operations, the sequential
specification is preserved in $\sigma_{w_r}$. 
Thus, we left to show the
real time order: 
For every two completed \emph{writes}
$w_i$, $w_j$ in $r$, we need to show that if $w_i \prec_r w_j$,
then $w_i \prec_{\sigma_r} w_j$.
 
Denote $w_i$'s timestamp by $ts$.
By Observation \ref{obs:WFmon}, at any point after
$w_i$'s return, at least $n-f$ base objects store timestamps
bigger than or equal to $ts$. When $w_j$ picks a timestamp, it
chooses a timestamp bigger than those it reads from $n-f$ base
objects. Since, $n>2f$, $w_j$ picks a timestamp bigger than $ts$,
and therefore $w_j$ is ordered after $w_i$ in $\sigma_{rd}$.

\end{proof}

\begin{definition}

For every run $r$, for every \emph{read} $rd$ that has no
concurrent \emph{write} operations in $r$, we define the
sequential run $\sigma_{r_{rd}}$ by adding $rd$ to $\sigma_{w_r}$
after all the writes that precede it in $r$.

\end{definition}

In order to show that the algorithm simulates a safe
register, we proof in Lemmas \ref{lem:WFtime} and
\ref{lem:WFspec} that the real time order and sequential
specification respectively, are preserved in $\sigma_{r_{rd}}$.

\begin{lemma}
\label{lem:WFtime}

For every run $r$, for every read $rd$ that has no concurrent
write operations in $r$, $\sigma_{r_{rd}}$ preserves
$r$'s operation precedence relation (real time order).

\end{lemma}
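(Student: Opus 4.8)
The plan is to verify the precedence relation pair-by-pair over the operations appearing in $\sigma_{r_{rd}}$, which are exactly the completed writes of $r$ together with the single read $rd$. For two completed writes with $w_i \prec_r w_j$, the ordering $w_i \prec_{\sigma_{r_{rd}}} w_j$ is inherited directly from $\sigma_{w_r}$: by Lemma~\ref{lem:WFwrites} the timestamp order of $\sigma_{w_r}$ already respects $r$'s real-time order on writes, and inserting $rd$ reorders no writes. Hence the only genuinely new obligations concern pairs consisting of $rd$ and a write $w$.

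Since $rd$ has no concurrent writes in $r$, every completed write $w$ satisfies either $w \prec_r rd$ or $rd \prec_r w$. First I would dispatch the case $w \prec_r rd$ by appeal to the definition of $\sigma_{r_{rd}}$: $rd$ is inserted after all writes that precede it in $r$, so such a $w$ is placed before $rd$ and $w \prec_{\sigma_{r_{rd}}} rd$ holds immediately. The substantive case is the reverse one, where I must show that every $w$ with $rd \prec_r w$ is placed after $rd$ in $\sigma_{r_{rd}}$.

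The key step is to argue that the insertion point of $rd$ is consistent with the fixed timestamp order of $\sigma_{w_r}$, namely that every write preceding $rd$ carries a strictly smaller timestamp than every write following $rd$. Take any $w_a$ with $w_a \prec_r rd$ and any $w_b$ with $rd \prec_r w_b$. By transitivity of $\prec_r$ through $rd$ we obtain $w_a \prec_r w_b$, and hence, by the timestamp-monotonicity argument underlying Observation~\ref{obs:WFmon} and used in Lemma~\ref{lem:WFwrites}, $w_a$'s timestamp is smaller than $w_b$'s. Consequently, in the timestamp-ordered sequence $\sigma_{w_r}$ every write following $rd$ in $r$ appears after every write preceding $rd$ in $r$; since $rd$ is inserted immediately after the last of the preceding writes, all following writes appear after $rd$, giving $rd \prec_{\sigma_{r_{rd}}} w_b$ as required. (The edge case in which no write precedes $rd$ places $rd$ at the front, and the same argument applies with the preceding set empty.)

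I expect this consistency check to be the main obstacle: without it, inserting $rd$ ``after all writes that precede it'' could clash with the timestamp order of $\sigma_{w_r}$, if some later-in-real-time write carried a smaller timestamp than an earlier one. What rules this out here is precisely the presence of $rd$, which forces any relevant preceding/following pair to be real-time comparable and therefore timestamp-ordered. Everything else is routine bookkeeping over the write--write and the two write--read precedence cases.
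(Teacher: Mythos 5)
Your proof is correct and follows essentially the same route as the paper's: write--write pairs are handled by Lemma~\ref{lem:WFwrites}, and read--write pairs by the construction of $\sigma_{r_{rd}}$. The only difference is that you explicitly verify the insertion point of $rd$ cannot clash with the timestamp order (via transitivity of $\prec_r$ through $rd$ and timestamp monotonicity), a step the paper compresses into ``by construction.''
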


\begin{proof}

By Lemma \ref{lem:WFwrites}, the order between the writes in
$\sigma_{r_{rd}}$ are preserved, and by construction of
$\sigma_{rd}$ the order between $rd$ and write operations is also
preserved.

\end{proof}

\begin{lemma}
\label{lem:WFspec}

Consider a run $r$ and any read $rd$
that has no concurrent writes in $r$. 
Then $rd$ returns the
value written by the write with the biggest timestamp
that precedes $rd$ in $r$, or $v_0$ if there is no such
\emph{write}.

\end{lemma}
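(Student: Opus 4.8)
The plan is to fix a read $rd$ that has no concurrent write and to analyze the contents of the base objects at the moments $rd$ reads them. First I would isolate which writes can possibly have influenced those objects. A write $w$ with $rd \prec_r w$ is invoked only after $rd$ returns, hence after every $read(bo_i)$ performed by $rd$, so it cannot affect what $rd$ sees; and a write concurrent with $rd$ is ruled out by hypothesis. Consequently only writes that \emph{precede} $rd$ can have triggered an \emph{update} on any base object before $rd$ reads it. Combining this with the overwrite rule of \emph{update} and the monotonicity of the stored timestamps (Observation \ref{obs:WFmon}), I obtain the key invariant: when $rd$ reads $bo_i$, the timestamp $bo_i.ts$ equals the largest timestamp of a preceding write that has updated $bo_i$ (or $\langle 0,0 \rangle$ if none), and the stored piece is that write's $i^{th}$ piece.

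For the main case, assume some write precedes $rd$ and let $w^*$, with timestamp $ts^*$, be the preceding write of largest timestamp (timestamps are unique across writes, since each writer chooses a number strictly above all those it reads, so $w^*$ is well defined). Because $w^*$ completed before $rd$ was invoked, its \emph{update} took effect on a set $S^*$ of at least $n-f$ base objects, and by the key invariant each such object still carries $ts^*$ together with $w^*$'s piece when $rd$ reads it, since no preceding write has a larger timestamp and no larger-timestamp write can act on these objects before $rd$ reads them. As $rd$'s \emph{readValue} gathers pieces from a set $S$ of at least $n-f$ base objects, the quorum-intersection bound $|S \cap S^*| \ge 2(n-f)-n = n-2f = k$ guarantees that $rd$'s collected set contains at least $k$ pieces tagged $ts^*$, all of them genuine pieces of $w^*$'s value. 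By the same invariant and the maximality of $ts^*$, every piece $rd$ collects carries a timestamp at most $ts^*$, so $ts^*$ is the \emph{largest} timestamp for which the collected set holds $k$ matching pieces. Hence the read selects $ts' = ts^*$, and since the chosen pieces form a $k$-subset of the encoding of $w^*$'s value, the decode specification of Algorithm \ref{alg:WFdefinitions} returns exactly $w^*$'s value.

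The remaining case, in which no write precedes $rd$, is immediate: by the key invariant every base object read by $rd$ still holds its initial chunk $\langle \langle v_{0_i}, i\rangle, \langle 0,0 \rangle\rangle$, so the collected set contains $n-f \ge k$ pieces tagged $\langle 0,0 \rangle$, which form a $k$-subset of the encoding of $v_0$, and $rd$ returns $v_0$. I expect the main obstacle to be the argument that $w^*$'s pieces survive intact in all of $S^*$ until $rd$ reads them: this is precisely where the no-concurrent-write hypothesis and the maximality of $ts^*$ must be combined to exclude any larger-timestamp overwrite. It is also the point at which one must note that the read has to commit to the \emph{maximum} matching timestamp rather than an arbitrary one, since a stale value can still be present on up to $f$ objects, so selecting a non-maximal matching timestamp could return an obsolete value.
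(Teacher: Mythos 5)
Your proof is correct and follows essentially the same route as the paper's: identify the preceding write $w^*$ of largest timestamp, use the fact that it completed to place its pieces (protected by timestamp monotonicity and the absence of concurrent or later writes) on $n-f$ base objects, and apply the quorum-intersection bound $2(n-f)-n=k$ to conclude that $rd$ decodes $w^*$'s value. Your added remark that the read must commit to the \emph{maximum} matching timestamp (since up to $f$ sampled objects may still hold $k$ stale matching pieces when $k\le f$) is a point the paper's proof glosses over, and is a worthwhile clarification rather than a divergence.
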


\begin{proof}

In case there is no \emph{write} before $rd$ in $r$, since there
are also no writes concurrent with $rd$, $rd$ reads pieces with
timestamp $\langle 0,0 \rangle$ from all base objects, and thus,
returns $v_0$.
Otherwise, let
$w$ be the \emph{write}$(v)$ associated with the biggest
timestamp $ts$ among all the \emph{writes} invoked before $rd$
in $r$.
Let $t$ be the time when $rd$ is invoked. Recall that $rd$ has
no concurrent \emph{writes}, so all the writes invoked before
time $t$ complete before time $t$ and store there pieces in
$n-f$ base objects unless the base objects already hold a higher
timestamp.
By Observation \ref{obs:WFmon} and the fact that $w$ has the
highest timestamp by time $t$, we get that at time $t$ there are
at least $n-f$ base objects that store a piece of $v$. 
Since $n=2f+k$, every two sets of $n-f$ base objects have at
least $k$ base objects in common.
Therefore, $rd$ reads at least $k$ pieces of
$v$, and thus, restores and returns $v$.

\end{proof}

\begin{corollary}

There exists an algorithm that simulates a
safe wait-free MWMR register with a worst-case storage cost of
$nD/k=(2f/k +1)D$.

\end{corollary}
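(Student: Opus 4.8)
The plan is to assemble the corollary directly from the properties already established for the emulation of Algorithms~\ref{alg:WFdefinitions} and~\ref{alg:WFoperations}, since the corollary is just the conjunction of a storage bound, wait-freedom, and safety. For the storage bound I would invoke Lemma~\ref{lem:WFstorage}, which gives total storage $nD/k$ because each of the $n$ base objects holds exactly one piece of size $D/k$; the stated closed form then follows by substituting the parameter choice $n=2f+k$, giving $nD/k=(2f+k)D/k=(2f/k+1)D$. For liveness I would simply cite Lemma~\ref{lem:WFwaitfree}: the operations contain no loops and block only on waits for $n-f$ responses, which always arrive since at most $f$ base objects fail.

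The substantive part is the safety claim. Rather than proving mere safety, I would establish the stronger \emph{strong safety} property defined earlier: exhibit one linearization $\sigma_w$ of all completed writes, and for every read $rd$ with no concurrent write, a point at which $rd$ can be inserted to yield a valid sequential execution. I would take $\sigma_{w_r}$ to be the timestamp ordering of completed writes and use Lemma~\ref{lem:WFwrites} to show that this ordering respects real-time precedence among writes; for a concurrency-free read I would insert it immediately after the last write preceding it, invoking Lemma~\ref{lem:WFtime} for real-time order and Lemma~\ref{lem:WFspec} for agreement with the sequential specification. Because strong safety implies safety, emulation of a safe register follows.

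The hardest step, already carried out inside Lemma~\ref{lem:WFspec}, is the quorum-intersection argument: since $n=2f+k$, any two sets of $n-f$ base objects share at least $k$ objects, so combined with timestamp monotonicity (Observation~\ref{obs:WFmon}) a concurrency-free read that contacts $n-f$ objects necessarily sees at least $k$ matching pieces of the latest preceding write and decodes its value. The one subtlety worth flagging is that strong safety constrains only reads without concurrent writes, so a read concurrent with some write may return $v_0$ without violating the property---this is precisely what keeps the construction below the $(f+1)D$ lower bound of Section~\ref{sec:impossibility} and explains why only a \emph{safe}, rather than regular, register is emulated.
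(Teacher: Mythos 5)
Your proposal is correct and follows the paper's own route exactly: the corollary is assembled as the conjunction of Lemma~\ref{lem:WFstorage} (storage $nD/k$ with $n=2f+k$), Lemma~\ref{lem:WFwaitfree} (wait-freedom), and the strong-safety argument given by the construction of $\sigma_{w_r}$ together with Lemmas~\ref{lem:WFwrites}, \ref{lem:WFtime}, and \ref{lem:WFspec}. Your added remarks on the quorum-intersection step and on why only safety (not regularity) is achieved accurately reflect the content already inside Lemma~\ref{lem:WFspec} and Section~\ref{subsec:noRegularity}.
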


\subsection{Strongly Regular Algorithm}
\label{AppSub:WR}

Here we prove the algorithm in Section \ref{sec:WRAlgorithms}.
We start by proving the storage cost.

\begin{observation}
\label{obs:WRmon}

For every run of the algorithm, for every base object $bo_i$,
$bo_i.ts$ monotonically increasing.

\end{observation}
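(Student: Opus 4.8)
The plan is to read Observation~\ref{obs:WRmon} as a statement about the $storedTS$ field of each base object (the only scalar timestamp a base object carries, the pieces in $V_p$ and $V_f$ being sets of timestamped chunks), matching the role played by $bo.ts$ in the safe algorithm's Observation~\ref{obs:WFmon}. The proof is a direct syntactic inspection of the two RMW routines that clients trigger on base objects, namely \emph{update} and \emph{GC} in Algorithm~\ref{alg:weakProcedures}; no inductive or semantic argument beyond an enumeration of assignment sites is required.

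First I would scan both Algorithm~\ref{alg:Operations} and Algorithm~\ref{alg:weakProcedures} to confirm that $bo_i.storedTS$ is assigned in exactly two places: line~\ref{line:updateEnd}, where $bo.storedTS \leftarrow \max(bo.storedTS, storedTS)$ at the end of \emph{update}, and line~\ref{line:freeEnd}, where $bo.storedTS \leftarrow \max(bo.storedTS, ts)$ at the end of \emph{GC}. The \emph{readValue} procedure only issues $read(bo_i)$ calls and never writes base-object state, and the initial configuration sets $storedTS = \langle 0,0\rangle$, which is the minimum timestamp. Hence every modification of the field is of the form $bo_i.storedTS \leftarrow \max(bo_i.storedTS, \cdot)$, and since $\max(a,b)\geq a$ for the lexicographic order on $TimeStamps$, the field can never decrease.

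To make this rigorous at the level of configurations, I would invoke the model's assumption that each RMW takes effect atomically at some single point between its trigger and its response. Thus between any two consecutive configurations in a run, at most one such assignment is applied to $bo_i$, and that step either leaves $bo_i.storedTS$ unchanged or replaces it by a strictly larger timestamp. Composing over all steps of the run gives that $bo_i.storedTS$ is nondecreasing in the order of configurations, which is exactly the claimed monotonicity.

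There is no real obstacle here; the only point demanding care is exhaustiveness, that is, verifying that no other code path (in particular none inside the \emph{while} loop of \emph{Read} or the update round of \emph{Write}, which invoke these routines rather than touching $storedTS$ directly) ever writes the field by means other than a $\max$ against its current value. Once that enumeration is confirmed complete, the conclusion follows immediately.
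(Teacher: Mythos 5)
Your proof is correct and matches the paper's (implicit) justification: the paper states this as an unproved observation, and the intended argument is exactly your syntactic inspection that the only writes to the field are the two $\max(bo.storedTS,\cdot)$ assignments at the ends of \emph{update} and \emph{GC}. Your disambiguation of the paper's loose notation $bo_i.ts$ as the $storedTS$ field is also the right reading, consistent with how the observation is later used in Lemmas~\ref{lem:WR1} and~\ref{lem:WRorderbefore3}.
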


\begin{lemma}
\label{lem:WR1}

Consider a run $r$ of the algorithm, and two \emph{writes}
$w_1,w_2$, where $w_1$ writes with timestamp $ts_1$.
If $w_1 \prec_r w_2$,
then $w_2$ sets its $\hat{ts}$, to a timestamp
that is not smaller than $ts_1$.

\end{lemma}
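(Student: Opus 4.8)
The plan is to trace the timestamp that $w_2$ selects in its read-timestamp phase (round 1) and show it must dominate $ts_1$. The key mechanism to exploit is the propagation of $storedTS$ through the base objects, combined with the fact that $w_1$ completes before $w_2$ even begins.

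First I would recall exactly how each write fixes its timestamp. In round 1, $w_2$ invokes \emph{readValue}, which returns a pair $\langle storedTS, ReadSet \rangle$ where $storedTS = \max(T)$ over the $storedTS$ fields of the $n-f$ responding base objects, and $ReadSet$ aggregates their $V_p$ and $V_f$ contents. By line \ref{line:reWts}, $w_2$ sets $n \leftarrow \max(storedTS.num, \max\{n' \mid \langle\langle n', *\rangle, *\rangle \in ReadSet\})$ and then $ts \leftarrow \langle n+1, j\rangle$. So it suffices to show that one of the two quantities over which this maximum is taken already reflects $ts_1$; that is, that at the time $w_2$ reads, some responding base object carries a $storedTS$ or a stored chunk whose numeric component is at least $ts_1.num$.

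The heart of the argument is that $w_1$, having completed, has left a durable trace of $ts_1$ on a majority of base objects. Specifically, $w_1$ completed both its update round (round 2) and its garbage-collection round (round 3), each of which waits for $n-f$ responses. In the GC round, line \ref{line:freeEnd} sets $bo.storedTS \leftarrow \max(bo.storedTS, ts_1)$ on each of at least $n-f$ base objects. By Observation \ref{obs:WRmon}, these $storedTS$ values never decrease afterward, so at every time after $w_1$ returns, there is a set $Q_1$ of at least $n-f$ base objects each carrying $storedTS \geq ts_1$. Since $w_1 \prec_r w_2$, the read $w_2$ performs in round 1 occurs entirely after $w_1$'s return, and it collects responses from some set $Q_2$ of $n-f$ base objects. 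Because $n > 2f$, i.e.\ $n = 2f + k$ with $k \geq 1$, we have $|Q_1| + |Q_2| \geq 2(n-f) > n$, so $Q_1 \cap Q_2 \neq \emptyset$. Hence at least one base object in $Q_2$ reports a $storedTS$ whose value is at least $ts_1$, forcing $storedTS.num \geq ts_1.num$ in $w_2$'s computation, and therefore $ts \geq ts_1$ (indeed strictly larger, by the $+1$).

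The main obstacle I anticipate is subtle rather than structural: I must make sure the quorum-intersection step is applied at the correct time. The invariant ``at least $n-f$ objects carry $storedTS \geq ts_1$'' must be established as holding \emph{at all times} after $w_1$'s return, not merely instantaneously, since $w_2$'s \emph{readValue} gathers responses asynchronously and a given base object may respond at any moment during the read round. This is exactly where Observation \ref{obs:WRmon} (monotonicity of $bo.storedTS$) is essential: it upgrades the instantaneous guarantee at $w_1$'s completion into a persistent one, so that whenever an object in $Q_1$ responds to $w_2$'s read, its reported value is still at least $ts_1$. Once this is handled carefully, the lexicographic comparison and the quorum intersection close the argument.
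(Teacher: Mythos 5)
Your proof is correct and follows essentially the same route as the paper's: $w_1$'s garbage-collection round installs $storedTS \geq ts_1$ on $n-f$ base objects, monotonicity (Observation~\ref{obs:WRmon}) makes this persistent, and quorum intersection ($2(n-f) > n$) forces $w_2$'s \emph{readValue} to see one such object. Your version is somewhat more detailed in tracing the code and in flagging why monotonicity is needed to turn the instantaneous guarantee into a persistent one, but the argument is the same.
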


\begin{proof}

By Observation \ref{obs:WRmon}, for
each base object $bo$, $bo.ts$ is monotonically increasing.
Therefore, after $w_1$ finishes the garbage collection phase,
there is a set $S$ consisting of $n-f$ base objects s.t.\ for each $bo_i \in S$, 
$bo_i.ts \geq ts$.
Recall that $n=2f + k$, thus every two sets of $n-f$ base objects
have at least one base object in common. Therefore, $w_2$ gets
a response from at least one base object in $S$ in its
first phase, and thus sets
$\hat{ts}=ts'$ s.t.\ $ts' \geq ts$.

\end{proof}

\begin{lemma}
\label{lem:SRonePiece}

For any run $r$ of the algorithm, for any base object $bo$
at any time $t$ in $r$, $bo.V_p$ does not store more than one
piece of the same write.

\end{lemma}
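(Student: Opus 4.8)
The plan is to prove the invariant by a direct inspection of every code location that modifies the $V_p$ field of a base object, showing that the only piece of a given write that can ever enter $bo_i.V_p$ is that write's $i$-th coded piece, of which there is exactly one.

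First I would enumerate every site where $bo_i.V_p$ is written. Besides its initial value $\{\langle \langle 0,0 \rangle, \langle v_{0_i}, i \rangle \rangle\}$, which holds a single piece of the initial write, the field is assigned only in two routines of Algorithm~\ref{alg:weakProcedures}: in \emph{update} at line~\ref{line:storeOnePiece}, and in \emph{GC} at line~\ref{line:keepnew1}. The GC assignment replaces $bo_i.V_p$ by a subset of its previous contents (it keeps only pieces with $ts' \geq ts$), and the set-difference in \emph{update} likewise only removes elements; neither can ever increase the number of pieces of any write. Hence it suffices to control what \emph{update} \emph{adds}.

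Next I would examine the union term of the \emph{update} assignment at line~\ref{line:storeOnePiece}, for a write $w$ with timestamp $ts$ acting on $bo_i$ (the routine is invoked with the base-object index $i$). That term is $\{\langle ts, \langle e,i \rangle \rangle \mid \langle e, i \rangle \in \emph{WriteSet}\}$, and since $\emph{WriteSet}=\emph{encode}(v)$ contains exactly one element whose second coordinate equals the fixed index $i$, this term is the singleton consisting of $w$'s $i$-th coded piece tagged with $ts$. Thus every invocation of \emph{update} on $bo_i$ contributes at most one piece, and for a fixed write $w$ the only piece of $w$ that can ever be inserted into $bo_i.V_p$ is precisely this $i$-th piece. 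Because $V_p$ is a set, repeated insertions of the same element are idempotent, so this piece occurs at most once.

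Combining these observations yields the claim by a straightforward induction on the length of the run prefix: the initial configuration stores a single piece of the initial write, and no subsequent step can place a second distinct piece of the same write into $bo_i.V_p$, since \emph{update} only ever contributes the fixed $i$-th piece while \emph{GC} (and the filtering inside \emph{update}) only deletes. I do not expect a genuine obstacle here; the only point requiring care is to confirm exhaustively that the two $V_p$-modifying lines behave as claimed, so that the single-piece-per-write property, once established for additions, is preserved under every removal.
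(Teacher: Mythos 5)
Your proof is correct and follows essentially the same route as the paper's, which is a short code inspection observing that the update round adds at most one piece per write to $bo.V_p$ while the GC round only removes. Your version is somewhat more careful than the paper's (which argues via ``each write performs update at most once per base object''), since you additionally note that only the $i$-th piece can ever enter $bo_i.V_p$ and that set insertion is idempotent, but the underlying argument is the same.
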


\begin{proof}

The \emph{writes} perform the second phase at most one
time on each base object $bo$, and in each update they store at
least one piece in $bo.V_p$. And since they does not store
in $bo.V_p$ during the third phase, the lemma follows. 

\end{proof}

\begin{lemma}
\label{lem:WRstorage1}

Consider a run $r$ of the algorithm in which the maximum number
of concurrent writes is $c<k-1$. Then the storage at any time in
$r$ is not bigger than $(2f+k)(c+1)D/k$ bits.

\end{lemma}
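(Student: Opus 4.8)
The plan is to reduce the storage bound to a per-object counting argument and then control that count using the timestamp machinery already developed. Since each coded piece occupies $D/k$ bits, and by Lemma~\ref{lem:SRonePiece} every base object holds at most one piece per write, the total storage at any time is exactly $(D/k)$ times the number of pieces summed over the $V_p$ and $V_f$ fields of all $n=2f+k$ base objects. Hence it suffices to show that at every time $t$ and every base object $bo_i$ we have $bo_i.V_f=\{\}$ and $|bo_i.V_p|\le c+1$; multiplying by $n=2f+k$ then yields the claimed $(2f+k)(c+1)D/k$ bits.

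First I would establish, by induction on the actions of $r$, the auxiliary invariant that every piece residing in $bo_i.V_p\cup bo_i.V_f$ carries a timestamp at least $bo_i.storedTS$. The inductive step is routine: an update deletes exactly the pieces with timestamp below the writer's local $storedTS$ before raising $bo_i.storedTS$ to the max of its old value and that local $storedTS$, while a GC deletes everything below the writer's own $ts$ before raising $bo_i.storedTS$ to at least $ts$; in both cases the surviving pieces stay above the new $bo_i.storedTS$, using that $bo_i.storedTS$ is monotone (Observation~\ref{obs:WRmon}) and that a writer's chosen $ts$ strictly exceeds its local $storedTS$.

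The heart of the proof is the bound $|bo_i.V_p|\le c+1$, which I would prove jointly with $bo_i.V_f=\{\}$ in a single induction so the two facts reinforce each other: as long as $|bo_i.V_p|\le c+1\le k-1<k$ holds (this is exactly where the hypothesis $c<k-1$ enters), every update reaching $bo_i$ finds $|bo_i.V_p|<k$ and takes the $V_p$ branch, never writing a full replica into $V_f$; since $V_f$ starts empty it stays empty. It then remains to argue that an update by a write $w$ cannot push $|bo_i.V_p|$ past $c+1$. When $w$ deposits its piece it first removes all pieces whose timestamp is below its local $\tau:=storedTS$, so the new count is one plus the number of surviving pieces with timestamp $\ge\tau$. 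I would show that every such surviving piece, save at most one, belongs to a write still outstanding at the instant $w$'s update takes effect: if a contributing write $w'$ with $ts_{w'}\ge\tau$ had already \emph{completed} with $w'\prec_r w$, then Lemma~\ref{lem:WR1} forces $\tau\ge ts_{w'}$, so $w'$'s piece lies below $\tau$ and is purged; the one permitted exception is the write whose completion raised $\tau$ (its timestamp equals $\tau$ and so escapes the strict deletion threshold). Consequently all but at most one surviving piece, together with $w$ itself, belong to writes simultaneously outstanding at that instant, so their number is at most $c$, whence $|bo_i.V_p|\le c+1$.

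The step I expect to be the main obstacle is precisely this last one, because operation precedence $\prec_r$ is defined on high-level writes whereas the pieces are deposited by individual RMWs whose effect times can lag arbitrarily behind the return of the triggering write. I would therefore phrase the concurrency argument in terms of the effect time of $w$'s update on $bo_i$ rather than real-time operation order, and carefully invoke the $storedTS$-propagation guarantee behind Lemma~\ref{lem:WR1} (which rests on any two $(n-f)$-quorums meeting in at least $k\ge1$ objects) to certify that a genuinely preceding write's piece has already been garbage collected or driven below $\tau$ before $w$ acts. Isolating the single $ts=\tau$ exception, and confirming that no \emph{second} completed write can leave a surviving piece, is the delicate part that produces the additive $+1$ and must be reconciled with the asynchronous RMW semantics.
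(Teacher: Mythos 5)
Your overall strategy is the same as the paper's: use $c<k-1$ to argue that every update takes the $V_p$ branch so $V_f$ stays empty, and then use Lemma~\ref{lem:WR1} together with the strict purge threshold in line~\ref{line:storeOnePiece} to show that pieces of completed predecessor writes cannot pile up, with a single possible exception whose timestamp equals the writer's local $storedTS$. That is exactly the engine of the paper's argument, and your identification of the ``one exception'' coming from the strictness of the comparison $ts'<storedTS$ is the right insight.

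However, the final counting step has a genuine gap, and it sits precisely where you flagged the difficulty. What your dichotomy actually establishes is: every surviving non-exception piece belongs to a write $w'$ with $w'\not\prec_r w$, i.e., a write \emph{concurrent with} $w$ (writes with $w\prec_r w'$ are excluded separately, since such a $w'$ would already have raised $bo_i.storedTS$ above $ts_w$ and made $w$'s update a no-op). This is weaker than your stated claim that these writes are ``simultaneously outstanding at the instant $w$'s update takes effect'': a write concurrent with $w$ may well have returned long before $w$'s delayed RMW on $bo_i$ takes effect, and --- more importantly --- a collection of writes each pairwise concurrent with $w$ need not be pairwise concurrent with one another, so its cardinality is not bounded by $c$ (e.g., a long-running $w$ overlapped by an arbitrarily long chain $w_1\prec_r w_2\prec_r\cdots$). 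To close the bound you would additionally have to show that among mutually ordered writes that are all concurrent with $w$, at most one can leave a surviving piece (they purge each other via the same Lemma~\ref{lem:WR1} mechanism), which is a second, separate application of the exception analysis that your write-up does not carry out. The paper sidesteps this entirely by arguing by contradiction at the \emph{first} time some object holds $c+2$ pieces, choosing $w$ to be the \emph{last-invoked} write among the piece owners, and counting outstanding writes at the single instant of $w$'s \emph{invocation}: pigeonhole then yields \emph{two} writes $w_1,w_2$ that completed before $w$ was invoked, and since their timestamps are distinct, at least one of them lies strictly below $w$'s local $storedTS$ and must have been purged or never deposited --- the one-exception escape cannot save both. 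Re-anchoring your count at the last-invoked piece owner (rather than at the write whose RMW is currently taking effect) is the missing idea.
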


\begin{proof}

Recall that we assume that $n=2f+k$ and the size of each piece
is $D/k$. Thus it suffices to show that there is no time $t$ in
$r$ s.t.\ some base object stores more than
$c+1$ pieces at time $t$. 

Assume by way of contradiction that the claim is false. 
Consider the time $t$ when some $bo \in N$ stores $c+2$
pieces for the first time.
Notice that $|bo.V_p| \leq c+1 < k$ till time $t$, and
therefore, $bo.V_p$ does not contain more then one piece from
the same write, and $bo.V_f= \bot$ till time $t'$.
Now consider the write $w$ that was invoked last among all
the writes that store pieces in $bo.V_p$ at time $t$, denote
its piece by $p$.
Since $bo$ stores $c+2$ pieces at time $t'$, by Lemma
\ref{lem:WRstorage1}, there must be two writes $w_1$ and $w_2$
whose pieces $p_1,~p_2$ are stored at time $t$ in $bo.V_p$, and
both returns before $w$ is invoked.
Denote their timestamps $ts_1$ and $ts_2$, and
assume without loss of generality that $ts_1>ts_2$. By Lemma
\ref{lem:WR1}, $w$ sets its $\hat{ts}$ to $ts'$ s.t.\ $ts' \geq
ts_1 > ts_1$. Now consider two cases. First, if $p$ was
added before $p_2$, then $bo.ts>ts_2$ when $p_2$ was added.
A contradiction. Otherwise, $p$ was added after $p_2$.
Thus, $p_2$ was deleted in line \ref{line:storeOnePiece} of the
update when $p$ was added.
A contradiction.

\end{proof}

\begin{lemma}
\label{lem:WRstorage2}

The storage is never more than $(2f+k)2D$ bits at any time $t$
in any run $r$ of the algorithm.

\end{lemma}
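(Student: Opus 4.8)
The plan is to bound the number of coded pieces held in each base object and multiply by the per-piece size. Since $n = 2f+k$ and each piece occupies $D/k$ bits, it suffices to show that every base object $bo_i$ holds at most $2k$ pieces at all times, i.e. $|bo_i.V_p| + |bo_i.V_f| \leq 2k$; summing over the $n$ base objects then yields $(2f+k)\cdot 2k \cdot (D/k) = (2f+k)2D$ bits. So the whole lemma reduces to two per-field size bounds, each of which follows by inspecting which lines of Algorithms~\ref{alg:Operations} and~\ref{alg:weakProcedures} can enlarge the field.

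First I would argue that $|bo_i.V_p| \leq k$ holds throughout every run. The only line that enlarges $V_p$ is the piece insertion in the update routine (line~\ref{line:storeOnePiece}), which is guarded by the test $|bo.V_p| < k$ and adds exactly one piece, namely the single $i$-th piece destined for $bo_i$. Every other touch of $V_p$ only deletes elements: the stale-piece removal within the same line~\ref{line:storeOnePiece}, and the GC deletion in line~\ref{line:keepnew1}; moreover the readValue routine (lines~\ref{line:regularRVbegin}--\ref{line:regularRVend}) and, more generally, the reads, do not modify the base object. Hence whenever a piece is added the size was at most $k-1$ beforehand, so it is at most $k$ afterwards, and it can never subsequently exceed $k$.

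Next I would argue that $|bo_i.V_f| \leq k$ holds throughout. The field $V_f$ is initially empty, and the only line that enlarges it is the full-replica write in line~\ref{line:StorekPieces}, which \emph{overwrites} $V_f$ with a set of exactly $k$ pieces (the indices $j \in \{1,\dots,k\}$) rather than accumulating into it. All remaining touches of $V_f$ only remove elements: the GC deletions in lines~\ref{line:keepnew2} and~\ref{line:keepone}. Therefore $|bo_i.V_f| \leq k$ at every time. Combining the two bounds gives $|bo_i.V_p| + |bo_i.V_f| \leq 2k$, so each base object stores at most $2k \cdot (D/k) = 2D$ bits, and summing over the $n = 2f+k$ base objects bounds the total storage by $(2f+k)2D$ bits.

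The one place the argument could go wrong — and thus the step to carry out carefully — is verifying that no single RMW inserts more pieces than intended. For $V_p$ this hinges on the guard $|bo.V_p| < k$ together with the fact that update inserts exactly one piece per base object; for $V_f$ it hinges on line~\ref{line:StorekPieces} being an overwrite by a size-$k$ set rather than a union. Everything else is monotone deletion, so beyond recording these per-line effects no genuine inductive invariant over the run is needed, and the bound $(2f+k)2D$ follows immediately.
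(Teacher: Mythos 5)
Your proof is correct and follows the same route as the paper, which simply asserts that each base object holds at most $2k$ pieces and concludes. You merely make explicit the two per-field bounds ($|V_p|\leq k$ from the guard on line~\ref{line:storeOnePiece} and $|V_f|\leq k$ from the overwrite on line~\ref{line:StorekPieces}) that the paper leaves implicit.
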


\begin{proof}

Each base object stores no more than $2k$ pieces at any time
$t$ in $r$.
The lemma follows.

\end{proof}

\begin{lemma}
\label{lem:SWreduses}

Consider a run $r$ of the algorithm with finite number of
writes, in which all writes correct. Then the storage is
eventually reduced to $(2f+k)D/k$ bits.

\end{lemma}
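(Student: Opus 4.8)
The plan is to show that in such a run every surviving base object eventually stabilizes while holding exactly one coded piece, so that the total data stored is $n\cdot D/k=(2f+k)D/k$ bits. First I would observe that since there are finitely many writes and all writers are correct, in a fair run every write completes all three of its rounds, and moreover every RMW it triggers---including those on the (up to $f$) base objects from which it never collected a response---eventually takes effect on every correct base object. Consequently there is a time $t_0$ after which no RMW is pending, so the contents of each $bo_i$ are fixed for all $t\geq t_0$, and it suffices to analyze this final configuration. Let $w^*$ be the write carrying the maximum timestamp $ts^*$ among all writes in $r$; since timestamps are totally ordered and unique, $w^*$ is well defined.

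The core step is to show that, for each base object $bo_i$, once $w^*$'s garbage-collection RMW (lines~\ref{line:freeBegin}--\ref{line:freeEnd}) has taken effect, $bo_i$ holds exactly one piece of $w^*$ and $bo_i.storedTS=ts^*$. I would split on where $w^*$'s \emph{update} placed its data. If it wrote its single piece into $V_p$ (line~\ref{line:storeOnePiece}), then $w^*$'s \emph{GC} keeps only pieces with timestamp $\geq ts^*$ (lines~\ref{line:keepnew1}--\ref{line:keepnew2}); since $ts^*$ is the global maximum this deletes every other piece, leaving $w^*$'s piece in $V_p$ and emptying $V_f$. If instead $w^*$ wrote a full replica into $V_f$ (line~\ref{line:StorekPieces}, which happens only when $|V_p|=k$), the same filter empties $V_p$ and retains $w^*$'s replica in $V_f$, and the reduction on line~\ref{line:keepone} collapses that replica to a single piece. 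Either way exactly one piece survives, and line~\ref{line:freeEnd} sets $storedTS$ to $ts^*$. I would note that $w^*$'s \emph{update} does proceed (rather than return early), because $storedTS<ts^*$ until $w^*$'s own GC runs, and that this argument relies on $w^*$'s \emph{update} taking effect before its \emph{GC} on $bo_i$, which I would justify by the FIFO ordering of a single client's RMWs on a single base object.

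Next I would establish stability: no RMW taking effect after $w^*$'s GC on $bo_i$ can disturb this single-piece state. Any such RMW belongs to a write with timestamp $ts<ts^*$. Its \emph{update} returns immediately (the guard at line~\ref{line:updateBegin}) because $ts\leq bo_i.storedTS=ts^*$, using that $storedTS$ is monotone by Observation~\ref{obs:WRmon}, and so adds nothing; its \emph{GC} keeps all pieces with timestamp $\geq ts$, which retains the surviving $ts^*$-piece, matches no full replica of its own, and only raises $storedTS$. I would also record that lower-timestamp pieces deposited in $bo_i$ \emph{before} $w^*$'s GC are all swept away by that GC, and that by Lemma~\ref{lem:SRonePiece} $V_p$ never accumulates duplicate pieces of a single write in the interim. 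Combining the core step with stability, each correct base object holds exactly one piece at $t_0$, giving total data storage $(2f+k)D/k$ bits; failed base objects only lower this figure.

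The main obstacle is the concurrency bookkeeping rather than any single inequality: I must argue simultaneously that $w^*$'s own update precedes its GC (the FIFO point), that every lower-timestamp piece either never re-enters $bo_i$ once $storedTS$ reaches $ts^*$ or is removed by $w^*$'s GC, and that $w^*$'s freshly written replica in $V_f$ is not overwritten by a concurrent lower-timestamp write before its own GC collapses it. Each of these hinges on the interaction between the $storedTS$-guarded early return in \emph{update} and the timestamp filter in \emph{GC}, so the care lies in ruling out the adversarial interleavings rather than in the final counting.
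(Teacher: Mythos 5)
Your proposal is correct and follows essentially the same route as the paper's proof: identify the write $w^*$ with the maximum timestamp, observe that its garbage-collection round on each base object deletes all lower-timestamp pieces, collapses any full replica to a single piece, and raises $storedTS$ to $ts^*$, after which the guard in \emph{update} blocks every lower-timestamp write from re-depositing data. Your treatment is more careful than the paper's (the $V_p$ versus $V_f$ case split, the ordering of $w^*$'s own update relative to its GC, and interleaved lower-timestamp updates before the GC), but these are refinements of the same argument rather than a different approach.
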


\begin{proof}

Consider a \emph{write} $w$ with the biggest timestamp $ts$ in
$r$. Since $w$ is correct, and since \emph{writes} are
wait-free, $w$ returns, and eventually performs \emph{free}
on every base object. Consider a base object $bo$ s.t.\ $w$
performs \emph{free} on $bo$ at time $t$. Notice that $w$
deletes all pieces with smaller timestamps than $ts$ and set
$bo.ts=ts$ at time $t$. Now recall that $bo$ ignore all updates
with timestamp less than $bo.ts$, and therefore, $bo$ store only
$w$'s piece at any time after time $t$. The lemma follows.

\end{proof}

\noindent From Lemmas \ref{lem:WRstorage1},
\ref{lem:WRstorage2}, and \ref{lem:SWreduses} we get:
 
\begin{corollary}
\label{cor:WRstorage}

The storage of the algorithm is bounded by $(2f+k)2D$ bits,
and in runs with at most $c < k$ concurrent writes the storage
is bounded by $(c+1)D/k$ bits. Moreover, in a run with a finite
number of writes, if all the writes are correct, the storage is
eventually reduced to $(2f+k)D/k$ bits.

\end{corollary}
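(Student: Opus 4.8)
The plan is to read the three assertions of the corollary off the three lemmas that immediately precede it, since each lemma was engineered to establish exactly one of them; the substantive work lies in the lemmas, so the corollary's proof is really a matter of assembling the pieces and reconciling the constants. First I would dispatch the unconditional worst-case bound: Lemma~\ref{lem:WRstorage2} shows that every base object holds at most $2k$ pieces at any time, so since each piece has size $D/k$ and there are $n=2f+k$ base objects, the total storage never exceeds $(2f+k)\cdot 2k\cdot D/k=(2f+k)2D$ bits, which is exactly the first claim. This part uses no assumption on concurrency and is immediate.

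Next I would obtain the concurrency-adaptive bound from Lemma~\ref{lem:WRstorage1}, which caps the per-object piece count at $c+1$ whenever at most $c$ writes are ever concurrent; multiplying by the piece size $D/k$ and summing over the $n=2f+k$ objects gives the adaptive total $(2f+k)(c+1)D/k$ (so the bare $(c+1)D/k$ in the statement is the per-object contribution and must be scaled by $n$ to become a global figure). Finally, the eventual-reduction claim is precisely the conclusion of Lemma~\ref{lem:SWreduses}: in a run with finitely many writes, all correct, the write carrying the largest timestamp eventually garbage-collects every object down to a single piece, leaving $(2f+k)$ pieces of $D/k$ bits, i.e.\ $(2f+k)D/k$ bits.

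The only genuinely non-mechanical point, and hence the main obstacle, is the off-by-one between the hypothesis $c<k-1$ under which Lemma~\ref{lem:WRstorage1} is proved and the weaker hypothesis $c<k$ stated in the corollary. The gap matters because Lemma~\ref{lem:WRstorage1} exploits $c+1<k$ to keep $|bo.V_p|$ strictly below $k$, which guarantees the $V_f$ field is never populated; at the boundary $c=k-1$ this reasoning breaks, since $bo.V_p$ may fill to $k$ pieces and a later write may deposit a full replica in $bo.V_f$, pushing the per-object count toward $2k$. I would therefore either state the adaptive bound for $c<k-1$ exactly as Lemma~\ref{lem:WRstorage1} provides it and fall back on the unconditional $(2f+k)2D$ bound of Lemma~\ref{lem:WRstorage2} for the case $c=k-1$, or sharpen Lemma~\ref{lem:WRstorage1} to cover $c=k-1$ directly. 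Once this boundary issue and the per-object-to-total bookkeeping are settled, the three assertions of the corollary follow by simply invoking the three lemmas.
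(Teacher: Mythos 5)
Your proposal matches the paper exactly: the paper offers no argument beyond ``From Lemmas~\ref{lem:WRstorage1}, \ref{lem:WRstorage2}, and \ref{lem:SWreduses} we get'' the corollary, which is precisely your plan of reading each clause off the corresponding lemma. The two discrepancies you flag --- the adaptive bound $(c+1)D/k$ in the corollary missing the factor $n=2f+k$ present in Lemma~\ref{lem:WRstorage1}, and the hypothesis $c<k$ versus the lemma's $c<k-1$ --- are genuine mismatches that the paper silently inherits, so your extra care there is warranted rather than a deviation.
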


\noindent We no prove the liveness property.

\begin{lemma}
\label{lem:WRwaitWrite}

Consider a fair run $r$ of the algorithm. Then every
\emph{write} $w$ invoked by a correct client $c_i$ eventually
completes.

\end{lemma}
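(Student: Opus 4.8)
The plan is to show that each of the three rounds of a write terminates, so that a correct client's write returns after a bounded number of steps. The key structural observation is that the write operation (lines~\ref{line:reWbegin}--\ref{line:reWend}) contains no loops: it is a straight-line sequence of a \emph{readValue} call (round~1), a parallel \emph{update} with a wait for $n-f$ responses (round~2), and a parallel \emph{GC} with a wait for $n-f$ responses (round~3). Thus it suffices to argue that none of the three \textbf{wait}-for-$n-f$-responses statements can block indefinitely in a fair run.

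First I would handle the blocking points uniformly. In each round the client triggers an RMW in parallel on all $n$ base objects and then waits for at least $n-f$ responses. Since at most $f$ base objects may fail, at least $n-f$ base objects are correct. By the definition of a fair run, every RMW triggered by a correct client on a correct base object eventually receives a matching response. Hence each correct base object eventually responds, so at least $n-f$ responses arrive, and the wait unblocks. This argument applies verbatim to the wait inside \emph{readValue} (invoked in round~1) and to the two explicit waits in rounds~2 and~3. I would note that the RMW functions themselves---\emph{update} and \emph{GC}---are terminating straight-line code with no internal blocking, so triggering them does not introduce any additional wait.

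Putting these together, a correct client $c_i$ that invokes $\emph{Write}(v)$ passes through round~1, round~2, and round~3, each of which completes after finitely many actions, and then executes \textbf{return} ``ok''. Since $c_i$ is correct, the fairness assumption guarantees it receives infinitely many opportunities to trigger RMWs, so it makes progress through this finite sequence of steps and the write completes.

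I do not expect a genuine obstacle here: unlike the read operation, whose \textbf{while} loop (line~\ref{line:SRwhile}) can spin forever under unboundedly many concurrent writes and therefore only achieves FW-termination, the write is loop-free and each of its bounded waits is satisfied using only the $n-f$ correct base objects. The only point requiring mild care is to confirm that the condition $f < n/2$ (equivalently $n = 2f+k$ with $k \geq 1$) guarantees $n - f \geq f + 1 \geq 1$ correct base objects, so that the threshold $n-f$ is actually attainable by correct objects alone; this is immediate from the model assumptions.
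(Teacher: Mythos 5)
Your proof is correct and follows essentially the same argument as the paper's: each of the three write rounds triggers RMWs on all $n$ base objects and waits for only $n-f$ responses, which fairness and the bound of $f$ failures guarantee will arrive. The additional observations you make (the write is loop-free, the RMW bodies are non-blocking, and $n-f$ is attainable from correct objects alone) are sound elaborations of the same reasoning.
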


\begin{proof}

Consider a correct client $c_i$.
The \emph{write} $w$ is divided into three phase s.t.\
in each phase, $c_i$ invokes operations on all the base objects,
and waits for $n-f$ responses. The run $r$ is fair, so every
action invoked by $c_i$ on a correct base object eventually
returns, and no more than $f$ base objects fail in $r$.
Therefore, eventually $c_i$ receives $n-f$ responses in each of
the phases and returns.
 
\end{proof}

\begin{observation}
\label{obs:WRdelete}

When a piece from $bo.V_p$ is deleted, $bo.ts$ is increased.

\end{observation}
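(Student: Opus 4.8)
The plan is to locate every program point at which a piece can leave $bo.V_p$ and to show that the same RMW that performs the deletion also pushes $bo.storedTS$ above the timestamp of every piece it removes. Inspecting Algorithm~\ref{alg:weakProcedures}, pieces leave $bo.V_p$ in exactly two places: (i) line~\ref{line:storeOnePiece} of the \emph{update} routine, executed only in the $|bo.V_p|<k$ branch, which discards every piece whose timestamp is below the writer's local \emph{storedTS} parameter; and (ii) line~\ref{line:keepnew1} of \emph{GC}, which discards every piece whose timestamp is below the writer's own $ts$. So the whole proof is a two-case analysis tied to the $bo.storedTS$ update that trails each routine.

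The \emph{GC} case is immediate: line~\ref{line:keepnew1} retains only pieces with $ts' \geq ts$, so any deleted piece has $ts' < ts$, and line~\ref{line:freeEnd} sets $bo.storedTS \leftarrow max(bo.storedTS, ts) \geq ts > ts'$. The \emph{update} case is parallel: line~\ref{line:storeOnePiece} removes pieces with $ts' < \emph{storedTS}$, and line~\ref{line:updateEnd} sets $bo.storedTS \leftarrow max(bo.storedTS, \emph{storedTS}) \geq \emph{storedTS} > ts'$. Combining this with the monotonicity of $bo.storedTS$ (Observation~\ref{obs:WRmon}) and with the fact that a piece is admitted to $V_p$ only by an \emph{update} whose $ts$ strictly exceeds the then-current $bo.storedTS$ (otherwise \emph{update} returns early at the first guard), I would conclude that a deleted piece can never be re-admitted, i.e., the deletion is accompanied by a genuine upward movement of $bo.storedTS$ past that piece's timestamp.

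To read the statement as a strict increase of $bo.storedTS$ relative to its prior value, I would establish the invariant that at the start of any RMW on $bo$, every $\langle ts', \cdot \rangle \in bo.V_p$ satisfies $ts' \geq bo.storedTS$. Granting this, a deletion in case~(i) gives $\emph{storedTS} > ts' \geq bo.storedTS$, so $max(bo.storedTS, \emph{storedTS})$ is a strict increase, and likewise for case~(ii). I expect the main obstacle to be maintaining this invariant across the two routines that can raise $bo.storedTS$ \emph{without} pruning all of $V_p$: the full-replica branch of \emph{update} (the \textbf{else if} at line~\ref{line:StorekPieces}), which is reached when $|bo.V_p|=k$ and still executes the trailing $bo.storedTS \leftarrow max(bo.storedTS, \emph{storedTS})$, and the \emph{GC} pruning at line~\ref{line:keepnew1} when $ts$ is small. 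In both I must argue that every piece left behind already carries a timestamp at least the new $bo.storedTS$ — for the full-replica branch by showing a writer that reaches it has local \emph{storedTS} no larger than the timestamps of the $k$ pieces already present, and for \emph{GC} directly from the fact that it keeps exactly the pieces with $ts' \geq ts$. This bookkeeping, rather than any single inequality, is the delicate part of the argument.
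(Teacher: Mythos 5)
The paper states this as an observation and gives no proof at all, so there is nothing to match against beyond code inspection; your two-case analysis of the deletion sites (line~\ref{line:storeOnePiece} of \emph{update} and line~\ref{line:keepnew1} of \emph{GC}, each trailed by a $\max$-update of $bo.storedTS$) is exactly the implicit argument and is correct as far as showing that the deleting RMW pushes $bo.storedTS$ strictly above the timestamp of every piece it removes. Where you go beyond what is needed is in trying to read the observation as a strict increase of $bo.storedTS$ relative to its prior value for \emph{every} deletion, via the invariant that all of $bo.V_p$ carries timestamps at least $bo.storedTS$. That invariant is the wrong tool: as you yourself note, the full-replica branch (line~\ref{line:StorekPieces}) raises $bo.storedTS$ without pruning $V_p$, so $V_p$ can legitimately hold pieces with timestamps below $bo.storedTS$, and the invariant is not maintainable. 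But you do not need it. The only use the paper makes of this observation (in the proof of Invariant~\ref{inv:WR}) concerns a piece $p$ with timestamp $ts' > bo.ts$, i.e., above the \emph{current} value of $bo.storedTS$; for such a piece, deletion at either site gives $bo.storedTS_{\mathrm{old}} < ts' < \emph{storedTS}$ (resp.\ $< ts$), so the trailing $\max$ is a strict increase with no auxiliary invariant. I would drop the last paragraph of your argument and instead restrict the claim to pieces whose timestamps exceed the current $bo.storedTS$ — that is the version that is both true and actually used.
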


\begin{lemma}
\label{lem:SRinvHelp}

If at time $t$, $c_i$ completes the second phase of write with
timestamp $ts$, then for every $t' > t$ for every $S \subseteq N$
s.t.\ $|S| \geq n-f$, exist write $w$ with $ts' \geq ts$ s.t.\
at least $k$ pieces of $w$ are stored in $S$.
\end{lemma}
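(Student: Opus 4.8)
The plan is to prove a restorability invariant by quorum intersection together with induction over the actions following time $t$. Let $U$ denote the set of $n-f$ base objects from which $c_i$ received responses in its update round, so that $U$ is fixed once that round completes at time $t$. Since $n=2f+k$, every set $S$ with $|S|\geq n-f$ satisfies $|U\cap S|\geq k$; this intersection is the backbone of the argument. The goal then reduces to showing that at every time $t'\geq t$ the objects of $S$ jointly store $k$ pieces of a single write whose timestamp is at least $ts$. Two shapes of witness suffice: either one object holds a full replica in its $V_f$ field (which is itself $k$ coded pieces of one value), or $k$ distinct objects each hold a $V_p$ piece of the same write. Because the $j$-th piece of any value can reside only in $bo_j.V_p$, distinct objects automatically contribute distinct indices, so $k$ such pieces always decode.

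For the base case at time $t$, I would examine the state of each $bo_j\in U\cap S$ immediately after $c_i$'s update took effect on it. Inspecting the update routine, one of three situations holds: $bo_j.V_p$ contains $c_i$'s $j$-th piece tagged with $ts$; $bo_j.V_f$ holds a full replica whose timestamp is $\geq ts$ (either $c_i$'s own replica, written when $|bo_j.V_p|=k$, or a strictly newer one already present); or the guard $ts\leq bo_j.storedTS$ caused an early return, in which case $bo_j.storedTS\geq ts$. If some object in $U\cap S$ falls in the full-replica case we are done through that single object. Otherwise I would use the $\geq k$ objects that accepted $c_i$'s piece into $V_p$ to exhibit $k$ pieces of $c_i$'s write itself, handling the early-return objects by charging $bo_j.storedTS\geq ts$ to a strictly newer write that has already completed its update round and is therefore covered by the same invariant.

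For the inductive step, I would show that no later action can destroy the witness without replacing it by a newer one. A piece of a write with timestamp $\geq ts$ is removed from $bo_j.V_p$ only by the pruning in the update routine or by the GC routine, and in both cases Observation~\ref{obs:WRdelete} guarantees that $bo_j.storedTS$ strictly increases; combined with Observation~\ref{obs:WRmon} and Lemma~\ref{lem:WR1}, the new value of $storedTS$ corresponds to a strictly newer write that has completed an update round, whose own pieces then serve as the witness. The main obstacle is precisely this coupling: I must rule out garbage collection erasing the last $k$ pieces of the current newest restorable value from a quorum before a strictly newer value is simultaneously restorable from that same quorum. This is where the $storedTS$ propagation performed during update and the early-return guard $ts\leq bo_j.storedTS$ do the real work, forcing any writer that prunes old pieces to have first observed and carried forward a timestamp witnessing a completed newer update. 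I expect the bulk of the effort to go into making this simultaneity precise, most likely by strengthening the statement into a global invariant quantified over all quorums and all times and inducting on the action sequence of the entire run rather than only from $t$ onward.
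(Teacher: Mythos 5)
You have the right ingredients (quorum intersection via $n=2f+k$, the $V_p$/$V_f$ case split, the role of $storedTS$ and the early-return guard), but the proposal has a genuine gap at its crux, which you yourself flag as ``the simultaneity'' and defer. Concretely: in your base case, if even one object $bo_j\in U\cap S$ takes the early return ($ts\leq bo_j.storedTS$), you cannot assemble a witness by mixing $c_i$'s $V_p$ pieces on the other objects with a ``charge'' on $bo_j$ --- $k-1$ pieces of $c_i$'s value plus one piece of a newer value decode nothing. The witness must then come \emph{entirely} from a strictly newer write, restorable from that same quorum $S$, so the recursion you gesture at is already needed in the base case, not just in the inductive step. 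And in the inductive step, the fact that every deletion of a piece with timestamp $\geq ts$ is accompanied by an increase of $storedTS$ pointing to a newer completed-phase-2 write does not by itself show that this newer write's $k$ pieces are present \emph{in $S$ at that same instant}; its update quorum intersects $S$ in $k$ objects, but its own pieces there could in principle have been disturbed in the interim. Your plan correctly identifies this as the hard part but does not close it.

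The paper closes it without any induction over time, by a maximality argument you are missing: fix $t'$, and let $w$ be the write with the \emph{highest} timestamp $\hat{ts}$ among all writes that completed the second phase by $t'$ (so $\hat{ts}\geq ts$, and it suffices to exhibit the witness for $w$). Because no write with timestamp exceeding $\hat{ts}$ has reached the third phase, every mechanism that could delete one of $w$'s pieces --- the $storedTS$-based pruning in \emph{update}, or a \emph{GC} with a larger timestamp --- would require a strictly newer write to have completed phase two, contradicting maximality; so $w$'s $V_p$ pieces on the $\geq k$ objects of $S$ where its update took effect survive intact. The only remaining case is an overwrite of $w$'s full replica in some $bo.V_f$, which atomically installs $k$ pieces of a newer value in that single object, so the witness is replaced on the spot. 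This eliminates the simultaneity problem entirely; the action-by-action induction you propose is instead where the paper proves the stronger Invariant~\ref{inv:WR}, and there it \emph{uses} the present lemma as the non-inductive workhorse. If you strengthen this lemma into that global invariant and try to prove everything by one induction, you must be careful not to reintroduce the circularity.
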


\begin{proof}

Consider time $t'$. Let $\hat{ts}$ be the highest timestamp
written by a write $w$ that completed the second phase by time
t. It is sufficient to show the lemma hold for $\hat{ts}$.

First note that $\forall bo$, $bo.ts \leq \hat{ts}$ before time
$t$, because no write with a larger timestamp than $\hat{ts}$
started the third phase.
This means that $w$'s \emph{update} left at lest one piece in
which $bo$ it occurred. Now consider a set $S$ of $n-f$ base
objects, and since $n=2f+k$, $w$'s \emph{update} occurred
in set $S'$ that contains at least $k$ base objects in $S$.

If $w$ wrote to $V_p$, it was not overwritten by time $t$,
because (1) no other write began \emph{free} with timestamp
bigger than $\hat{ts}$, and (2) since there is no base object
$bo$ s.t.\ $bo.ts \geq \hat{ts}$, no write delete $w$'s piece in
the second phase. Therefore if $w$ wrote to $V_p$ in all base
objects in $S'$, the lemma holds.

Otherwise, $w$ wrote $k$ pieces to $V_f$ in base objects in
some set $S'' \subseteq S'$.
Consider two cases: First, there is base object $bo' \in S''$
s.t.\ some write overwritten $w$'s pieces in $bo'.V_f$ before
time $t$.
Since there is no write with timestamp bigger than $\hat{ts}$
that started the third phase before time $t$, it is guarantee
that $k$ pieces with timestamp $ts'>\hat{ts}$ stored in
$bo'.V_f$ at time $t$, and the lemma holds. Else, since
$w$'s pieces stored in $S' \setminus S''$ does not overwritten
before time $t$, the lemma holds (no matter if $w$ performed the
third phase or not).

\end{proof}

\begin{invariant}
\label{inv:WR}

For any run $r$ of the algorithm, for any time $t$ in $r$, for
any set $S$ of $n-f$ base objects. 
Let $\hat{ts_s}=max \{ bo.ts \mid bo \in S\}$.
Then there is a timestamp $ts' \geq \hat{ts_s}$ s.t.\ there are
at least $k$ different pieces associated with $ts'$ in $S$.
 
\end{invariant}

\begin{proof}

We prove by induction. \textbf{Base:} the invariant holds at
time $0$. \textbf{Induction:} Assume that the induction holds
before the $t^{th}$ action is scheduled, we show that it holds
also at time $t$. Assume that the $t^{th}$ action is RMW on a
base object $bo$, and consider any set $S$ of $n-f$ base
objects.
If $bo \notin S$ then the invariant holds. Else, 
consider the two possible RMW actions: 

\begin{itemize}
  \item The $t^{th}$ action is \emph{update}.
  If no pieces are deleted, the invariant holds. If $bo.ts$ is
  increased, then consider the write with timestamp $ts$ that is
  the the biggest timestamp among all writes that
  complete the second phase before time $t$. Notice that
  $bo.ts  \leq ts$ at time $t$, and by Lemma \ref{lem:SRinvHelp},
  the invariant holds. The third option is that a piece $p$ with
  timestamp $ ts' > bo.ts$ of a \emph{write} $w$ is deleted
  and $bo.ts$ is not increased. Note that by Observation
  \ref{obs:WRdelete}, such piece can be deleted only from
  $bo.V_f$, and since $p$ is overwritten by
  $k$ pieces with bigger timestamp, the invariant
  holds.

  \item The $t^{th}$ action is \emph{free}. If $bo.ts$ is not
  changes, then the invariant holds. Else, Consider the write
  with the biggest timestamp $ts$ among
  all writes that complete the second phase before time $t$.
  Note that $bo.ts$ is set to a timestamp $ts' \leq ts$, so by
  Lemma \ref{lem:SRinvHelp}, the invariant holds.

\end{itemize}

\end{proof}

\begin{lemma}
\label{lem:WRreadLockFree}

Consider a fair run $r$ of the algorithm. If there is a finite
number of write invocations in $r$, then every
read operation $rd$ invoked by a client $c_i$ eventually
returns.

\end{lemma}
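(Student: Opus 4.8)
The plan is to show that, under the hypothesis of finitely many write invocations, the base objects eventually stop changing, after which any read round is guaranteed to satisfy the loop-exit condition on line \ref{line:SRwhile}.

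First I would identify a quiescence time $t_0$ after which no base object is modified. Since there are finitely many write invocations in $r$, and since writes are wait-free (Lemma \ref{lem:WRwaitWrite}), every write invoked by a correct client completes after finitely many steps, while every write invoked by a faulty client ceases to act once its client crashes. As reads never modify the shared state, there is a time $t_0$ after which no \emph{update} or \emph{GC} RMW ever takes effect; hence the contents $\langle storedTS, V_p, V_f\rangle$ of every base object are frozen from $t_0$ onward. Next I would argue that $rd$'s readValue rounds keep terminating: assuming $c_i$ is correct, fairness grants it infinitely many opportunities to trigger RMWs, and each readValue waits for only $n-f$ responses; since at most $f$ base objects fail and every RMW on a correct base object eventually responds, each readValue call returns (the same argument as in Lemma \ref{lem:WRwaitWrite}). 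Consequently, if $rd$ does not return it must issue infinitely many readValue rounds, so at least one such round begins strictly after $t_0$.

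The heart of the argument is to apply Invariant \ref{inv:WR} to this post-quiescence round. Because the base objects are frozen after $t_0$, a readValue round beginning after $t_0$ collects a consistent snapshot of some set $S$ of $n-f$ base objects: its returned $storedTS$ equals $\hat{ts}_S = \max\{bo.storedTS \mid bo \in S\}$, and its $ReadSet$ contains $\bigcup_{bo \in S}(bo.V_p \cup bo.V_f)$. By Invariant \ref{inv:WR} there is a timestamp $ts' \geq \hat{ts}_S = storedTS$ such that at least $k$ distinct pieces associated with $ts'$ reside in $S$, and therefore appear in $ReadSet$. This is exactly the negation of the while-condition on line \ref{line:SRwhile}, so the loop terminates and $rd$ decodes and returns at line \ref{line:reReturn}.

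I expect the main obstacle to be the careful bookkeeping around the quiescence time: establishing rigorously that after $t_0$ the storedTS returned by readValue equals $\hat{ts}_S$, and that a single readValue round starting after $t_0$ indeed observes one frozen snapshot of a fixed set $S$ (rather than a mixture straddling $t_0$), so that Invariant \ref{inv:WR} applies verbatim. The notational correspondence between the main text's $bo.storedTS$ field and the invariant's $bo.ts$ must also be reconciled along the way.
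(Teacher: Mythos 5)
Your proposal is correct and follows essentially the same route as the paper's proof: identify a quiescence time after which writes no longer modify the base objects (using finiteness of write invocations plus the wait-freedom of writes from Lemma~\ref{lem:WRwaitWrite}), observe that a \emph{readValue} round issued after that time sees a frozen snapshot of $n-f$ base objects, and invoke Invariant~\ref{inv:WR} to conclude that the loop-exit condition on line~\ref{line:SRwhile} is met. Your version is somewhat more explicit than the paper's about the bookkeeping (the termination of each \emph{readValue} round, the existence of a round entirely after the quiescence time, and the identification of the invariant's $bo.ts$ with the code's $bo.storedTS$), but the substance is identical.
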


\begin{proof}

Assume by way of contradiction that $rd$ does not return in $r$.
By Lemma \ref{lem:WRwaitWrite}, the \emph{writes} are wait-free,
and since the number of \emph{write} invocations in $r$ is
finite, there is a time $t$ in $r$ s.t.\ no \emph{write}
performs actions after time $t$. Therefore, any \emph{read} that
invokes $readValue()$ procedure after time $t$ receives a set
$S$ of values that is stored in a set of $n-f$ base objects at
time $t$. By invariant \ref{inv:WR}, there is a timestamp $ts$
s.t.\ there is at least $k$ different pieces in $S$ associated
with $ts$, and $ts > bo.ts$ for all $bo \in S$.
Now since the every correct \emph{read} $rd$ invokes
$readValue()$ infinitely many times in $r$, $rd$ returns. A
contradiction.

\end{proof}

\noindent The next corollary follows from Lemmas
\ref{lem:WRwaitWrite}, \ref{lem:WRreadLockFree}. 

\begin{corollary}
\label{cor:WRliveness}

The algorithm satisfies the WF-termination property.

\end{corollary}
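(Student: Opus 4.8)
The plan is to verify the two clauses in the definition of FW-termination separately, invoking one of the two preceding lemmas for each. Recall from Section~\ref{sub:problen} that a register is FW-terminating if (i) its \emph{write} operations are wait-free, and (ii) in every run with only finitely many \emph{write} invocations, every \emph{read} operation completes. Both conditions are naturally checked over fair runs, which is precisely the setting of Lemmas~\ref{lem:WRwaitWrite} and~\ref{lem:WRreadLockFree}, so no additional run construction is needed here.

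First I would dispatch the write clause. Wait-freedom of writes requires that every operation of a correct client returns regardless of the scheduling of the other clients. Lemma~\ref{lem:WRwaitWrite} asserts exactly this: in any fair run, a \emph{write} invoked by a correct client eventually completes. Its proof imposes no condition on concurrent operations---it relies only on the fact that each of the three write rounds blocks on at most $n-f$ responses and that at most $f$ base objects fail---so clause (i) transfers verbatim.

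Next I would handle the read clause. Here the hypothesis ``finitely many \emph{write} invocations'' must be carried into the argument, and this is exactly the premise of Lemma~\ref{lem:WRreadLockFree}, whose conclusion is that every \emph{read} returns in such a run. Thus clause (ii) follows with no further work.

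Since both defining clauses hold, the algorithm is FW-terminating. There is no real obstacle to overcome: all the substantive reasoning---bounding the blocking in each write round, and using Invariant~\ref{inv:WR} to guarantee that a quiescent storage state lets a read find $k$ matching pieces above $storedTS$---has already been carried out inside the two cited lemmas. The only point deserving a moment's care is matching the quantifier structure (``every correct client'' for writes versus ``every client under the finitely-many-writes hypothesis'' for reads) to the definition, which is routine bookkeeping.
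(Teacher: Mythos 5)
Your proposal is correct and matches the paper exactly: the paper derives this corollary directly from Lemmas~\ref{lem:WRwaitWrite} and~\ref{lem:WRreadLockFree}, one for each clause of FW-termination, just as you do. No further comparison is needed.
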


\noindent We now prove that the algorithm satisfies strong
regularity.

\begin{definition}

For every run $r$, $\sigma_r$ is a sequential run s.t.\ the
\emph{writes} in $r$ are ordered in $\sigma_r$ by their
timestamp, and every \emph{read} in $r$ that returns a value
associate with timestamp $ts$, is ordered in
$\sigma_r$ immediately after the \emph{write} that is associate
with timestamp $ts$.

\end{definition}

\noindent For simplicity we say the that $v_0$ was written by
\emph{write} $w_0$ that associated to timestamp $0$ at time $0$.

\begin{lemma}
\label{lem:WRregulre}

Consider a run $r$, and a read $rd$ that returns a value
$v$.
Consider also the timestamp $ts'$ that $rd$ obtains in 
line \ref{line:SR17} (Algorithm \ref{alg:Operations}). Then $v$
is the value written by a \emph{write} associated with timestamp
$ts'$ or $v_0$ if $ts'=0$.

\end{lemma}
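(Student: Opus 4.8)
The plan is to show that the $k$ pieces the reader feeds into \emph{decode} in line~\ref{line:reReturn} are all genuine coded pieces of one single written value, so that the \emph{decode} guarantee of Algorithm~\ref{alg:WFdefinitions} returns exactly that value. First I would argue that timestamps uniquely identify writes. Two writes by distinct clients carry distinct client identifiers in the second component of their timestamps. Two writes by the same client are sequential, so if $w_1 \prec_r w_2$ then by Lemma~\ref{lem:WR1} $w_2$ reads back a value at least $w_1$'s timestamp and, after incrementing by one in the timestamp-selection step, obtains a strictly larger timestamp. Hence for the timestamp $ts'$ that $rd$ obtains in line~\ref{line:SR17} there is a unique write $w'$ (or, when $ts'=0$, the fictitious initial write $w_0$ with value $v_0$) that ever uses $ts'$.

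Next I would establish that every piece stored in any base object under timestamp $ts'$ is an element of $\emph{encode}(v')$, where $v'$ is $w'$'s value. This follows by inspecting the only places a write deposits pieces: line~\ref{line:storeOnePiece} inserts the $i^{th}$ element of the writer's own $WriteSet=\emph{encode}(v')$ into $bo_i.V_p$; line~\ref{line:StorekPieces} copies the first $k$ elements of $WriteSet$ into $bo_i.V_f$; and line~\ref{line:keepone} merely retains one such element. In every case the piece is tagged with the writer's own timestamp and drawn from its own \emph{encode} output, with the index stored in $bo_i.V_p$ equal to $i$. Combined with timestamp uniqueness, this yields $W' \subseteq \emph{encode}(v')$ for the set $W'=\{v \mid \langle ts',v\rangle \in ReadSet\}$ collected by the reader. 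Moreover, since $\emph{encode}(v')$ contains exactly one piece per index, distinct elements of $W'$ have distinct indices, so the $\geq k$ distinct pairs counted in line~\ref{line:SR17} are $\geq k$ distinct-index pieces of $v'$.

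Finally I would invoke the \emph{decode} property of Algorithm~\ref{alg:WFdefinitions}: since $W' \subseteq \emph{encode}(v')$ and $|W'| \geq k$, we get $\emph{decode}(W')=v'$, so $rd$ returns exactly the value of the write associated with $ts'$. For the boundary case $ts'=0$, the initial configuration places the pieces $\langle v_{0_i},i\rangle$ under timestamp $\langle 0,0\rangle$ and no genuine write uses timestamp $0$, so the same argument decodes $v_0$.

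The main obstacle is the second step: ruling out that the $ts'$-tagged pieces the reader observes are a mixture originating from more than one written value. This is precisely where timestamp uniqueness is essential, since a single polluting piece sharing the timestamp but belonging to a different value could make \emph{decode} return garbage. I therefore expect to spend most of the effort verifying that no base object can ever hold, under a fixed timestamp and index, a piece that did not originate from the corresponding write, so that $W'$ is guaranteed to be a faithful subset of a single \emph{encode} output.
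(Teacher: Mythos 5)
Your proposal is correct and follows essentially the same route as the paper's own (much terser) proof: the paper simply notes that $rd$ holds at least $k$ distinct pieces tagged $ts'$ and appeals to the \emph{decode} guarantee, leaving the timestamp-uniqueness and ``no polluted pieces'' steps implicit. Your elaboration of those implicit steps (distinct client ids for concurrent writers, strictly increasing timestamps for sequential writes via Lemma~\ref{lem:WR1}, and inspection of lines~\ref{line:storeOnePiece}, \ref{line:StorekPieces}, \ref{line:keepone}) is sound and fills in exactly what the paper glosses over.
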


\begin{proof}

By the code, if $ts'= 0$, then $rd$ returns $v_0$. Now notice
that $rd$ obtains at least $k$ different pieces associated with timestamp
$ts'$, thus by decode definition, $rd$ returns $v$.

\end{proof}

\begin{corollary}
\label{cor:WRspecification}

For every run $r$, $\sigma_r$ satisfies the sequential
specification. 

\end{corollary}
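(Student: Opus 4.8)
The plan is to derive the corollary almost immediately from Lemma~\ref{lem:WRregulre} together with the construction of $\sigma_r$, exploiting the fact that in a sequential run the sequential specification only constrains each read against the write that most recently precedes it. First I would recall precisely what must be checked: the sequential specification of the MWMR register demands that every read in $\sigma_r$ return the value of the write immediately preceding it in $\sigma_r$, or $v_0$ if no write precedes it (treating $v_0$ as written by the initial write $w_0$ with timestamp $0$, as stipulated just before the statement). Since reads do not alter the register's value, this is the only obligation.

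The central observation I would establish is that timestamps totally order the writes. Each timestamp is a pair $\langle num,c \rangle$ with $c \in \Pi$, and because a client executes its operations sequentially and always selects a strictly larger $num$ than any it has seen, no two writes carry the same timestamp. Hence the writes in $\sigma_r$ form a strict chain, and the rule placing each read immediately after the write bearing its returned timestamp is well defined.

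Then I would fix a read $rd$ returning value $v$, and let $ts'$ be the timestamp it obtains in line~\ref{line:SR17}. By Lemma~\ref{lem:WRregulre}, $v$ is exactly the value written by the unique write $w'$ with timestamp $ts'$ (or $v_0$, the value of $w_0$, when $ts'=0$). By construction $rd$ is placed in $\sigma_r$ immediately after $w'$; since the writes are ordered by their distinct timestamps, the next write following $w'$ carries a strictly larger timestamp and therefore appears after every read associated with $ts'$. Consequently no write lies strictly between $w'$ and $rd$, so $w'$ is the latest write preceding $rd$ in $\sigma_r$, and $rd$ returns precisely $w'$'s value. This is exactly what the sequential specification requires.

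The step I expect to require the most care is confirming that the placement rule yields a well-defined linear order and that $w'$ really is the immediate write-predecessor of $rd$: several reads may return the same timestamp and thus be inserted after the same write, and one must argue that no write sneaks between $w'$ and $rd$. This reduces entirely to the uniqueness and strict ordering of timestamps noted above, after which the conclusion can be read off directly from the construction. The genuine technical content has already been absorbed by Lemma~\ref{lem:WRregulre}, so the remainder is bookkeeping.
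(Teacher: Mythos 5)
Your proposal is correct and follows exactly the route the paper intends: the corollary is stated without proof as an immediate consequence of Lemma~\ref{lem:WRregulre} together with the construction of $\sigma_r$, which is precisely the argument you spell out. Your additional bookkeeping about timestamp uniqueness and the well-definedness of the insertion rule is sound and merely makes explicit what the paper leaves implicit.
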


\begin{observation}
\label{obs:WRmax}

Consider a \emph{write} $w$ that obtains $ts$ and $\hat{ts}$ in
the first phase, then $ts> \hat{ts}$.

\end{observation}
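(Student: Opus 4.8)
The plan is to read the claim directly off the code of the write's first (\emph{read timestamp}) round, since both $ts$ and $\hat{ts}$ are produced there. First I would fix the notation correspondence: the hatted quantity $\hat{ts}$ is precisely the $storedTS$ value that $w$ obtains from \emph{readValue} in round~1, namely $\max(T)$, where $T$ is the set of $storedTS$ fields collected from the $n-f$ base objects that respond (see the \emph{readValue} procedure in Algorithm~\ref{alg:weakProcedures}). This is the same $\hat{ts}$ referred to in Lemma~\ref{lem:WR1}. The quantity $ts$ is the timestamp that $w$ assigns to itself in line~\ref{line:reWts} of Algorithm~\ref{alg:Operations}.

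Next I would track how $ts$ is formed from $\hat{ts}$. By line~\ref{line:reWts}, $w$ sets $n \leftarrow \max(storedTS.num,\ \max\{n' \mid \langle \langle n', * \rangle, * \rangle \in ReadSet\})$ and then $ts \leftarrow \langle n+1, j\rangle$. The crucial feature is that the very first argument of this outer maximum is exactly $\hat{ts}.num$, so $n \geq \hat{ts}.num$, and therefore $ts.num = n+1 > \hat{ts}.num$.

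Finally I would invoke the lexicographic ordering on $TimeStamps$, in which the $num$ component is dominant and the client id is used only to break ties. Since $ts.num > \hat{ts}.num$, the pair $ts = \langle n+1, j\rangle$ strictly exceeds $\hat{ts}$ irrespective of the two client components, which gives $ts > \hat{ts}$, as required.

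I do not expect a genuine obstacle here: the statement is an immediate consequence of the ``$+1$'' in the timestamp-selection rule, combined with the fact that the rule explicitly maximizes over $storedTS.num$ (and not merely over the timestamps appearing in $ReadSet$). The only step demanding any care is the bookkeeping that identifies the hatted $\hat{ts}$ of the surrounding lemmas with the local variable $storedTS$ of the pseudocode; once that identification is in place, the inequality reduces to a one-line lexicographic comparison.
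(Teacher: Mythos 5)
Your proof is correct and matches the paper's intent: the paper states this as an unproved observation precisely because it follows immediately from line~\ref{line:reWts}, where the new timestamp's $num$ component is set to one more than a maximum whose first argument is $storedTS.num$ (your $\hat{ts}.num$), so the lexicographic order gives $ts>\hat{ts}$. Your careful identification of the hatted $\hat{ts}$ in the lemma statements with the local variable $storedTS$ of the pseudocode is exactly the right bookkeeping, and nothing further is needed.
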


\begin{lemma}
\label{lem:WGsecondPhase}

For every run $r$, for every two writes $w_1,w_2$ with
timestamp $ts_1,ts_2$.
If $w_2$ was invoked after $w_1$ finished the second phase, then
$ts_1 < ts_2$.

\end{lemma}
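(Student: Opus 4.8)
The plan is to combine a quorum-intersection argument with a single-base-object invariant that tracks, from the moment $w_1$'s update takes effect, the persistence of either a piece tagged $ts_1$ or a $storedTS$ value of at least $ts_1$.

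First I would fix the two quorums. Let $Q_1$ be the set of $n-f$ base objects that answer $w_1$ in its update round, and let $Q_2$ be the set of $n-f$ base objects that answer $w_2$ in its first round (the \emph{readValue} call, lines \ref{line:regularRVbegin}--\ref{line:regularRVend}). Since $n=2f+k$, we have $|Q_1\cap Q_2|\geq 2(n-f)-n=k\geq 1$, so I may fix a common object $bo\in Q_1\cap Q_2$. Let $t_1$ be the time $w_1$'s update acts on $bo$ and $t_2$ the time $w_2$ reads $bo$; because $w_2$ is invoked only after $w_1$ completes its update round, $t_2>t_1$.

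The key step is the invariant $(\star)$: for every time $t\geq t_1$, either $bo.storedTS\geq ts_1$, or $bo$ holds in $V_p\cup V_f$ at least one piece whose timestamp is $\geq ts_1$. For the base case at $t_1$, the guard of \emph{update} (line \ref{line:updateBegin}) gives two subcases: if the update was skipped then $bo.storedTS\geq ts_1$; otherwise the update either writes a piece tagged $ts_1$ (into $V_p$ via line \ref{line:storeOnePiece}, or into $V_f$ via line \ref{line:StorekPieces}), or leaves in place a $V_f$ value already carrying a timestamp $\geq ts_1$. For the inductive step I examine every action that can delete a witnessing piece or change $storedTS$, using monotonicity of $storedTS$ (Observation \ref{obs:WRmon}) and Observation \ref{obs:WRdelete}. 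Deleting a witnessing piece from $V_p$ in line \ref{line:storeOnePiece}, or from either field in lines \ref{line:keepnew1}--\ref{line:keepnew2}, happens only for timestamps strictly below the acting writer's $storedTS$ or $ts$, and the same action raises $bo.storedTS$ above that value, re-establishing the first disjunct; overwriting the $V_f$ replica in line \ref{line:StorekPieces} replaces it with pieces of a strictly larger timestamp, re-establishing the second.

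Finally I would instantiate $(\star)$ at $t_2$. If $bo.storedTS\geq ts_1$, then $w_2$'s \emph{readValue} collects this value into its $T$-set; if instead $bo$ holds a piece with timestamp $\geq ts_1$, that piece enters $w_2$'s $ReadSet$. Either way, the maximum that $w_2$ computes in line \ref{line:reWts} is at least $ts_1.num$, so $w_2$ selects $ts_2=\langle n+1,j\rangle$ with $n+1>ts_1.num$, whence $ts_1<ts_2$. The hard part will be the inductive step of $(\star)$, and specifically the $V_f$ collapse in line \ref{line:keepone}: I must first establish the auxiliary invariant that $V_f$ never holds pieces of two distinct timestamps (it is only set wholesale in line \ref{line:StorekPieces} and is otherwise merely filtered), so that collapsing $V_f$ to a single piece of the GC-writer's value cannot silently erase a piece with timestamp $>ts_1$ while leaving $bo.storedTS$ below $ts_1$.
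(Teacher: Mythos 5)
Your proposal is correct and follows essentially the same route as the paper's own proof: quorum intersection between $w_1$'s update round and $w_2$'s read round, combined with the persistence (monotonicity) of a timestamp of at least $ts_1$ at the common base object, which forces $w_2$ to pick a larger timestamp. Your version is in fact more careful than the paper's terse argument—in particular, your explicit invariant $(\star)$ with the $storedTS$ disjunct and the auxiliary single-timestamp property of $V_f$ close the garbage-collection cases that the paper's proof glosses over.
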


\begin{proof}

First notice that for every base object $bo$, if a \emph{write}
$w$ overwrites pieces of a \emph{write} $w'$ in $bo,V_f$, that
$w$' timestamp is bigger than $w'$'s. And by Observation
\ref{obs:WRmax}, if $w$ deletes $w'$'s piece from $bo.V_p$, then
it stores a piece with bigger timestamp than $w'$'s timestamp.
Therefore, the maximal timestamp in each base object is
monotonically increasing. 
Now recall that in
the second phase $w_1$ performed \emph{update} on $n-f$ base
object, and notice that after $w_1$ performs \emph{update} on
base object $bo$ the maximal timestamp in $bo$ is at lest as
big as $ts_1$. Now since two sets of $n-f$ base object have at
least one base object in common, $w_2$ picks $ts > ts_1$.

\end{proof}

\begin{lemma}
\label{lem:WRorderbefore1}

For every run $r$, for every two writes $w_1,w_2$ in $r$,
if $w_1 \prec_r w_2$, then $w_2$ is not ordered before $w_1$ in
$\sigma_r$.

\end{lemma}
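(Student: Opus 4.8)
The plan is to reduce this statement directly to Lemma \ref{lem:WGsecondPhase}. First I would recall that in the definition of $\sigma_r$ the \emph{writes} are ordered by their timestamp, and since every timestamp is a pair $\langle num, c \rangle$ carrying a client identifier in its second coordinate, distinct \emph{writes} receive distinct timestamps; hence the timestamp order is a strict total order on the \emph{writes} of $r$. Consequently, for two \emph{writes} $w_1$ and $w_2$ with timestamps $ts_1$ and $ts_2$, the assertion ``$w_2$ is not ordered before $w_1$ in $\sigma_r$'' is equivalent to establishing $ts_1 < ts_2$.

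Next I would unpack the hypothesis $w_1 \prec_r w_2$. By the definition of the precedence relation, $w_1$'s response occurs before $w_2$'s invocation in $r$. Since a \emph{write} operation executes its three rounds (\emph{read timestamp}, \emph{update}, and \emph{garbage collection}) sequentially and returns only after the last of them, in particular $w_1$ has completed its second (\emph{update}) round strictly before it returns, and therefore strictly before $w_2$ is invoked. Thus the pair $w_1, w_2$ meets exactly the hypothesis of Lemma \ref{lem:WGsecondPhase}: $w_2$ is invoked after $w_1$ finished its second phase.

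Finally I would invoke Lemma \ref{lem:WGsecondPhase} to obtain $ts_1 < ts_2$, which by the first paragraph means that $w_1$ precedes $w_2$ in the timestamp order defining $\sigma_r$, i.e., $w_2$ is not ordered before $w_1$, as required.

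I do not anticipate a genuine obstacle here, as the lemma is essentially a corollary of Lemma \ref{lem:WGsecondPhase}. The only point warranting care is the routine observation that completion of the entire \emph{write} operation occurs strictly after completion of merely its \emph{update} round, so that real-time precedence of the two operations indeed implies the ``invoked after the second phase finished'' condition required by that lemma.
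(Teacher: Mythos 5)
Your proof is correct and takes essentially the same route as the paper, which simply states that the lemma follows immediately from Lemma~\ref{lem:WGsecondPhase}; you merely spell out the (straightforward) reduction that the paper leaves implicit. No issues.
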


\begin{proof}

Follows immediately from Lemma \ref{lem:WGsecondPhase}. 

\end{proof}

\begin{lemma}
\label{lem:WRorderbefore2}

For every run $r$, for every read $rd$ and write
$w_1$, if $rd \prec_r w_1$, then $w_1$ is not ordered before
$rd$ in $\sigma_r$.

\end{lemma}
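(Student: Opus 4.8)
The plan is to reduce the ordering claim to a statement about timestamps and then prove that statement. By the definition of $\sigma_r$, the unique write $w'$ carrying the timestamp $ts'$ that $rd$ selects in line~\ref{line:SR17} is placed just before $rd$, and all writes are ordered among themselves by timestamp. Hence ``$w_1$ is ordered before $rd$'' is equivalent to $ts_1 \le ts'$ (with $ts_1=ts'$ impossible, since $w'$ must have stored its $ts'$-pieces before $rd$ read them and therefore before $w_1$ was even invoked, so $w_1\neq w'$). Thus it suffices to show that a write $w_1$ invoked after $rd$ returns must pick a timestamp $ts_1>ts'$. Recall that by Lemma~\ref{lem:WRregulre} the guard in line~\ref{line:SRwhile} forces $ts'\ge storedTS_{rd}$ and guarantees that at least $k$ distinct pieces tagged $ts'$ are present, at read time, in the $n-f$ objects of $rd$'s final \emph{readValue} round.

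First I would isolate the monotonicity facts I need. Besides Observation~\ref{obs:WRmon} (each $bo.storedTS$ is non-decreasing), I would show that the per-object quantity $M(bo_i)=\max\bigl(\{bo_i.storedTS\}\cup\{t : \langle t,\cdot\rangle\in bo_i.V_p\cup bo_i.V_f\}\bigr)$ is non-decreasing in time. This is exactly the case analysis already carried out inside the proof of Invariant~\ref{inv:WR}: in \emph{update} and \emph{GC}, whenever a piece of timestamp $t$ is evicted, either a piece of strictly larger timestamp is installed in its place or (by Observation~\ref{obs:WRdelete}) $bo.storedTS$ is raised above $t$, and the ``store full replica'' branch only replaces a $V_f$ entry by one with a strictly larger timestamp. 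So $M(bo_i)$ never decreases.

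The conclusion step is then routine: if I can establish that, from the moment $rd$ returns onward, at least $f+1$ base objects satisfy $M(bo)\ge ts'$, then any quorum $S_1$ of $n-f$ objects queried by $w_1$'s read-timestamp round (which occurs after $rd$ returns) must contain such an object, because $(n-f)+(f+1)>n$. Consequently $w_1$ reads some $storedTS$ or some piece with timestamp number at least $ts'.num$, so in line~\ref{line:reWts} it sets $n\ge ts'.num$ and picks $ts_1=\langle n+1,j\rangle$ with $ts_1.num>ts'.num$, i.e.\ $ts_1>ts'$. By the reduction above this means $w_1$ is not ordered before $rd$ in $\sigma_r$, which is the claim.

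The main obstacle is precisely the quorum-persistence step: showing that a \emph{completed} read pins at least $f+1$ objects to $M\ge ts'$ forever after. The clean route is to prove it as an invariant maintained alongside Invariant~\ref{inv:WR}, tracking the largest timestamp any returned read has observed and arguing that the $k$ witnesses of $ts'$ together with the $storedTS$ bookkeeping of the intervening writes keep a quorum's worth of objects at $M\ge ts'$, after which the monotonicity of $M$ keeps them there. The delicate case is when $rd$ collected its $k$ pieces not from $k$ distinct $V_p$ fields but partly (or wholly) from a full replica stored in some object's $V_f$; there one must argue from the fact that such a full replica can only have been installed by a write whose \emph{update} round already propagated timestamp information bounded below by $storedTS_{rd}$, and invoke the ``no overwrite by a smaller timestamp'' rule of \emph{update} and \emph{GC} to recover a quorum of objects with $M\ge ts'$. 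This case analysis, rather than the final timestamp arithmetic, is where the real work lies.
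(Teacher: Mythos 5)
Your reduction is the same as the paper's: it suffices to show that a write $w_1$ invoked after $rd$ returns picks a timestamp larger than the timestamp $ts'$ whose value $rd$ returned, and the per-object monotonicity you establish for $M(bo_i)$ is exactly the observation inside the paper's Lemma~\ref{lem:WGsecondPhase} (the maximal timestamp held by a base object never decreases). The problem is that you stop exactly at the step that carries the content of the lemma. You yourself flag the ``quorum-persistence'' claim --- that from the moment $rd$ returns at least $f+1$ base objects satisfy $M(bo)\ge ts'$ --- as the main obstacle and only sketch how one might prove it. That claim is never established, so the proof is not complete.

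Moreover, the route you propose for closing it starts from the wrong place. You try to derive persistence from the $k$ witnesses that $rd$ observed, but those $k$ pieces need not be spread over $f+1$ objects: they may all sit in a single object's $V_f$ as a full replica, and in general $k$ bears no relation to $f+1$, so the read's own evidence cannot pin a blocking set of objects. The paper instead goes through the \emph{write} $w$ that produced the returned value: it argues that $w$ must have finished its update round (on $n-f$ base objects) before $rd$ returned, so $n-f \ge f+1$ objects already carry a maximal timestamp at least $ts$ at that point; Lemma~\ref{lem:WGsecondPhase} then immediately yields $ts_1 > ts$ for any write invoked afterwards, and the ordering of $\sigma_r$ (with $rd$ placed immediately after $w$) finishes the argument. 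To repair your proof you would need to replace the ``$k$ witnesses plus $storedTS$ bookkeeping'' sketch with this argument about $w$'s completed update round, i.e., justify why $rd$'s guard in line~\ref{line:SRwhile} (requiring $ts'\ge storedTS$ together with $k$ matching pieces) forces the writer of $ts'$ to have updated a full quorum before $rd$ returned.
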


\begin{proof}

Assume that $rd$ returns value that is associated with timestamp
$ts$ belonging to some \emph{write} $w$, and $w_1$ is associated
with timestamp $ts_1$.
Since $rd$ returns $w$'s value, $w$ begins the third phase
before $rd$ returns. And since $w_1$ was invoked after $rd$
returns, $w_1$ was invoked after $w$'s second phase. Therefore,
by Lemma \ref{lem:WGsecondPhase}, $ts_1 > ts$, and thus $w_1$ is
ordered after $w$ in $\sigma_r$. Recall that by the construction
of $\sigma_r$, $rd$ is ordered immediately after $w$ in
$\sigma_r$, hence, $rd$ is ordered before $w_1$ in $\sigma_r$.

\end{proof}

\begin{lemma}
\label{lem:WRorderbefore3}

For every run $r$, for every read $rd$ and write
$w_1$, if $w_1 \prec_r rd$, then $rd$ is not ordered before
$w_1$ in $\sigma_r$.

\end{lemma}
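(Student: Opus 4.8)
The plan is to follow the timestamp $ts'$ that $rd$ obtains in line~\ref{line:SR17} and show $ts' \geq ts_1$, where $ts_1$ is $w_1$'s timestamp; the ordering conclusion then drops out of the definition of $\sigma_r$. First I would apply Lemma~\ref{lem:WRregulre} to identify the write $w$ associated with $ts'$ whose value $rd$ returns. By construction of $\sigma_r$, the writes are ordered by timestamp and $rd$ sits immediately after $w$. Hence $rd$ is ordered \emph{before} $w_1$ precisely when $ts_1 > ts'$, so it suffices to establish $ts' \geq ts_1$.

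The crux is the $storedTS$ field together with its monotonicity. Since $w_1 \prec_r rd$, write $w_1$ has finished its garbage-collection phase before $rd$ is invoked, and that phase sets $bo.storedTS \leftarrow \max(bo.storedTS, ts_1)$ on the $n-f$ base objects from which $w_1$ collects responses (line~\ref{line:freeEnd}). By Observation~\ref{obs:WRmon}, $storedTS$ never decreases, so from the moment $w_1$ returns there is a fixed set $S_1$ of $n-f$ base objects each permanently satisfying $bo.storedTS \geq ts_1$. Every invocation of $readValue$ by $rd$ samples some set of $n-f$ base objects, and since $n = 2f+k$ with $k \geq 1$ any two such sets intersect; therefore each $readValue$ of $rd$ reads at least one object of $S_1$ and returns a $storedTS$ value $\geq ts_1$.

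It remains to combine this with the read's exit condition. The while loop (line~\ref{line:SRwhile}) terminates only when the current $readValue$ exhibits $k$ matching pieces at some timestamp $\geq storedTS$, and line~\ref{line:SR17} takes $ts'$ to be the largest such timestamp, whence $ts' \geq storedTS \geq ts_1$. I would then split on equality using the fact that timestamps are unique (each is $\langle num, c\rangle$ with $c$ the writer's identity, and a client's successive writes draw strictly increasing $num$): if $ts' = ts_1$ then $w = w_1$ and $rd$ lies immediately after $w_1$; if $ts' > ts_1$ then $w_1$ precedes $w$ in the write order, hence $w_1$ precedes $rd$. In both cases $rd$ is not ordered before $w_1$, as required.

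The main obstacle is the second paragraph: one must argue that the lower bound on $storedTS$ deposited by $w_1$ both \emph{persists} (monotonicity) and is \emph{observed} by every $readValue$ call of $rd$ (quorum intersection via $n = 2f+k$), so that the read's guard $ts' \geq storedTS$ genuinely forces $ts' \geq ts_1$. The remaining steps are routine bookkeeping against the definition of $\sigma_r$.
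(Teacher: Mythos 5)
Your proposal is correct and follows essentially the same route as the paper's proof: both hinge on $w_1$'s garbage-collection round depositing $storedTS \geq ts_1$ at $n-f$ base objects, the monotonicity of $storedTS$ (Observation~\ref{obs:WRmon}), and quorum intersection forcing every $readValue$ of $rd$ to report a $storedTS$ at least $ts_1$, so that the read's guard yields $ts' \geq ts_1$. The paper merely phrases this as a proof by contradiction (assuming $ts' < ts_1$ and deriving that $rd$ could not have returned that value), whereas you argue the inequality directly and handle the $ts' = ts_1$ case explicitly via timestamp uniqueness.
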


\begin{proof}

Consider a \emph{write} $w_1$ with timestamp $ts_1$ and a
\emph{read} $rd$ s.t.\ $w_1 \prec_r rd$. Assume by way of
contradiction that $rd$ is ordered before $w_1$ in $\sigma_r$.
Then $rd$ returns a value with a timestamp $ts$ that is
associated with a \emph{write} $w$ that is ordered before $w_1$
in $\sigma_r$.
By the construction of $\sigma_r$, $ts_1 > ts$. Now
since $w_1$ completed the third phase before $rd$ invoked, and
since by Observation \ref{obs:WRmon}, for each $bo$, $bo.ts$
is monotonically increasing,
when $rd$ invoked, for every set $S$ of $n-f$ base objects, the 
maximal $bo.ts$ of all $bo \in S$ is bigger than or equal to
$ts_1$, and thus bigger than $ts$. 
Therefore $rd$ set $\hat{ts}$, in the first phase, to timestamp
bigger than $ts$, and thus does not return $w$'s value. A
contradiction.
 
\end{proof}

\noindent The next corollary follows from Corollary
\ref{cor:WRspecification}, and Lemmas \ref{lem:WRorderbefore1},
\ref{lem:WRorderbefore2}, \ref{lem:WRorderbefore3}.

\begin{corollary}
\label{cor:WRregularity}

The algorithm simulates a strongly regular register.

\end{corollary}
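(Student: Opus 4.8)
The plan is to show that the single sequential run $\sigma_r$ defined just above already witnesses strong regularity for \emph{every} returning read simultaneously, so that the corollary reduces to assembling the three real-time-order lemmas together with the sequential-specification corollary. First I would note that $\sigma_r$ orders all writes by their timestamps, and that these timestamps form a total order (they are distinct across clients via the client-id component, and strictly increasing for successive writes of a single client, since a completed write leaves its value in storage and a later write reads a larger $num$). Hence the write order induced by $\sigma_r$ is one fixed total order. For each read $rd$ that returns in $r$, I would then take the candidate linearization $L_{rd}$ to be the restriction of $\sigma_r$ to the write operations together with $rd$ itself.

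Next I would verify that each such $L_{rd}$ is a genuine linearization of the subsequence of $r$ consisting of the writes and $rd$. The sequential specification holds by Corollary~\ref{cor:WRspecification}: by Lemma~\ref{lem:WRregulre}, $rd$ is placed immediately after the write carrying the timestamp $ts'$ it decoded, so it returns exactly the value of its immediate predecessor in $L_{rd}$. The operation-precedence relation is preserved by the three order lemmas: write/write precedence by Lemma~\ref{lem:WRorderbefore1}, a read preceding a later write by Lemma~\ref{lem:WRorderbefore2}, and a write preceding a later read by Lemma~\ref{lem:WRorderbefore3}. Since a single $L_{rd}$ contains only writes and the one read $rd$, these three cases exhaust every precedence pair that can occur (two reads are never compared inside a single $L_{rd}$), so $L_{rd}$ respects both $\prec_r$ and the sequential specification. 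This establishes \emph{weak regularity}.

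Finally I would upgrade weak to strong regularity by exploiting the uniformity of the write order. The extra condition requires that for any two returning reads $rd_1,rd_2$ and any two writes $w_1,w_2$ relevant to both, $w_1\prec_{L_{rd_1}}w_2$ iff $w_1\prec_{L_{rd_2}}w_2$. But in every $L_{rd}$ the writes are ordered identically, namely by timestamp, since each $L_{rd}$ is a restriction of the \emph{same} order $\sigma_r$. Thus $w_1\prec_{L_{rd_i}}w_2$ is equivalent to $ts_1<ts_2$ independently of $i$, and the condition holds trivially.

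I expect no deep obstacle remains at this stage: the genuinely delicate facts—that a completed write's timestamp propagates through its \emph{storedTS} so that later writes pick strictly larger timestamps, and that a read's exit test $ts\geq storedTS$ forbids returning a value older than a write that precedes it—were already discharged in Lemmas~\ref{lem:WGsecondPhase}--\ref{lem:WRorderbefore3}. The one point that needs care is to observe that a single global order $\sigma_r$ can serve as the linearization for all reads at once, so that the cross-read consistency clause of strong regularity is automatic, rather than constructing a separate and possibly incompatible per-read order.
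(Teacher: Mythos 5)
Your proposal is correct and follows essentially the same route as the paper: the paper also derives this corollary directly from Corollary~\ref{cor:WRspecification} and Lemmas~\ref{lem:WRorderbefore1}, \ref{lem:WRorderbefore2}, and \ref{lem:WRorderbefore3}, using the single global order $\sigma_r$ (writes by timestamp, each read placed after the write it returns) so that every per-read linearization is a restriction of one fixed write order and the cross-read clause of strong regularity is automatic. Your write-up merely makes explicit the assembly step that the paper leaves as a one-line citation.
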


\noindent The following theorem stems from Corollaries
\ref{cor:WRstorage}, \ref{cor:WRliveness}, and
\ref{cor:WRregularity}.

\begin{theorem}
\label{theorem:WR}

There is a FW-terminating algorithm that simulates a strongly
regular register, which storage is bounded by $(2f+k)2D$ bits,
and in runs with at most $c < k$ concurrent writes, the storage
is bounded by $(c+1)D/k$ bits. Moreover, in a run with a finite
number of writes, if all the writes are correct, the storage is
eventually reduced to $(2f+k)D/k$ bits.

\end{theorem}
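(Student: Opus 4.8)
The plan is to prove the theorem by establishing its three constituent guarantees separately --- the storage bounds, FW-termination, and strong regularity --- and then combining them. Since the algorithm of Section~\ref{sec:WRAlgorithms} is fixed, each guarantee reduces to an invariant about how the timestamps and the sets $V_p$, $V_f$ evolve under the \emph{update} and \emph{GC} RMWs, so I would first record two structural facts to reuse throughout: each base object's stored timestamp is monotonically nondecreasing, and $V_p$ never holds two pieces of the same write (since \emph{update} writes at most once per object and \emph{GC} never adds to $V_p$).

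For the storage bounds, the adaptive case is the crux. In a run with at most $c<k$ concurrent writes I would show by contradiction that no object ever accumulates more than $c+1$ pieces: at the first time some object holds $c+2$ pieces --- all necessarily in $V_p$, since $|V_p|<k$ keeps $V_f$ empty --- two of the contributing writes must have completed before the latest contributing write $w$ was invoked, so Lemma~\ref{lem:WR1}'s timestamp propagation forces $w$'s timestamp above theirs, meaning the older piece was either blocked by $bo.storedTS$ or purged by the $storedTS$-based cleanup in line~\ref{line:storeOnePiece}, a contradiction. The worst-case bound then follows from $|V_p|\le k$ and $|V_f|\le k$ giving at most $2k$ pieces of size $D/k$ per object over $n=2f+k$ objects, and the post-quiescence reduction follows by tracking the highest-timestamp write's \emph{GC} round, which sets every object's timestamp to its own and deletes everything older, leaving exactly one piece per object.

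For FW-termination I would first argue write wait-freedom: each of the three rounds blocks only on $n-f$ responses, which a fair run delivers since at most $f$ objects fail. Read termination under finitely many writes is harder and rests on the key structural invariant (Invariant~\ref{inv:WR}): for every set $S$ of $n-f$ objects there is a timestamp $ts'$ at least the maximum $storedTS$ over $S$ with $k$ pieces of $ts'$ present in $S$. I would prove this by induction on actions, using a helper lemma stating that once any write finishes its update round with timestamp $ts$, every $n-f$ quorum thereafter retains $k$ pieces of some write with timestamp $\ge ts$; the inductive step checks that neither \emph{update} (which may delete only while replacing with $k$ newer pieces or while raising $bo.storedTS$) nor \emph{GC} can break it. Once writes cease, a reader's \emph{readValue} returns a static quorum satisfying the invariant's condition, so the while-loop guard is met and the read returns.

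For strong regularity I would exhibit the single sequential witness $\sigma_r$ that orders all writes by timestamp and inserts each read immediately after the write whose value it returns. Write--write real-time order is preserved because, by quorum intersection and per-object monotonicity, a write invoked after another's update round necessarily picks a strictly larger timestamp (Lemma~\ref{lem:WGsecondPhase}); the read--write cases reduce to the same comparison, the one subtle direction being $w_1\prec_r rd$, where $w_1$'s completed \emph{GC} round has pushed every quorum's $storedTS$ to at least $ts_1$, so the read's filter $ts\ge storedTS$ forbids returning any earlier value. Correctness of \emph{decode} yields the sequential specification, and because \emph{all} reads are inserted into one common timestamp order of the writes, any two reads agree on the relative order of their shared relevant writes --- exactly the condition upgrading weak to strong regularity. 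The main obstacles I anticipate are the inductive proof of Invariant~\ref{inv:WR} (tracking how \emph{update} and \emph{GC} interact with garbage collection) and the adaptive $c+1$ storage bound; once timestamp monotonicity and $storedTS$ propagation are in hand, the regularity argument is comparatively mechanical.
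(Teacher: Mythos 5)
Your proposal follows essentially the same route as the paper's own proof in Appendix~\ref{AppSub:WR}: the same decomposition into storage bounds (via the first-time-$c+2$-pieces contradiction using timestamp propagation), FW-termination (via write wait-freedom and the quorum Invariant~\ref{inv:WR} with its helper lemma), and strong regularity (via the single timestamp-ordered witness $\sigma_r$ and the three real-time-order lemmas). The argument is correct and matches the paper's key lemmas step for step.
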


\bibliographystyle{plain}
\bibliography{bibliography}
\end{document}